\crefname{observation}{observation}{observations}
\crefname{algorithm}{algorithm}{algorithms}
\crefname{align}{equation}{equations}
\crefname{eqnarray}{equation}{equations}
\crefname{section}{Sec.}{Sections}
\crefname{figure}{Fig.}{Figures}
\crefname{theorem}{Theorem}{Theorems}
\crefname{lemma}{Lemma}{Lemmas}
\crefname{proposition}{Proposition}{Propositions}
\crefname{assumption}{assumption}{assumptions}
\crefname{condition}{condition}{conditions}
\crefname{conjecture}{Conjecture}{Conjectures}
\newtheorem{theorem}{Theorem}
\newtheorem{lemma}{Lemma}
\newtheorem{proposition}{Proposition}
\newtheorem{conjecture}{Conjecture}
\theoremstyle{definition}
\newcounter{statement}
\newtheorem{definition}{Definition}
\newtheorem{assumption}[statement]{Assumption}
\newtheorem{discussion}{Discussion} 
\theoremstyle{remark}
\newtheorem*{remark}{Remark}
\newcommand{\remove}[1]{}
\DeclareMathOperator*{\argmax}{\arg\!\max}
\DeclareMathOperator*{\esssup}{esssup}
\newcommand{\expect}[1]{\mathbb{E}\left[{#1}\right]}
\newcommand{\expectt}[2]{\mathbb{E}_{#1}\left[{#2}\right]}
\newcommand{\barexpectt}[2]{\overline{\mathbb{E}}_{#1}\left[{#2}\right]}
\newcommand{\indicator}[1]{\mathbf{1}_{\left\{{#1}\right\}}}
\newcommand{\stochleq}{\leq_{st}}
\newcommand{\lrleq}{\underset{LR}{\leq}}
\newcommand{\N}{\mathbb{N}}
\newcommand{\smallO}[1]{o({#1})}
\newcommand{\kldist}[2]{I\left({#1}, {#2}\right)}
\newcommand{\notes}[1]{}
\newcommand{\first}[1]{$1^{\mathrm{st}}$}
\newcommand{\second}[1]{$2^{\mathrm{nd}}$}
\newcommand\independent{\protect\mathpalette{\protect\independenT}{\perp}}
\def\independenT#1#2{\mathrel{\rlap{$#1#2$}\mkern2mu{#1#2}}}
\newcommand{\squishlisttwo}{
\begin{list}{$\blacktriangleright$}
{ \setlength{\itemsep}{0.5pt}
\setlength{\parsep}{0pt}
\setlength{\topsep}{0pt}
\setlength{\partopsep}{0.5pt}
\setlength{\leftmargin}{1em}
\setlength{\labelwidth}{1em}
\setlength{\labelsep}{0.5em} } }
\newcommand{\squishend}{
\end{list} }
\def\bm#1{\mbox{\boldmath $#1$}}
\def\bSigma{{
      \ooalign{
      \smash{\hskip.4pt\raise.4pt\hbox{$\Sigma$}}\vphantom{}\crcr
      \smash{\hskip.7pt\raise.6pt\hbox{$\Sigma$}}\vphantom{}\crcr
      \smash{\hbox{$\Sigma$}}\vphantom{$\Sigma$}}
      \vphantom{\hbox{$\Sigma$}}
      }}
\newcommand{\CUSUM}{CUSUM}
\begin{document}

\title{Quickest Change Point Detection with Measurements over a Lossy Link}

\author{Krishna Chaythanya KV, Saqib Abbas Baba, Anurag Kumar, Arpan Chattopadhyay, Rajesh Sundaresan
  \thanks{This work was done while Krishna Chaythanya KV was at the Indian Institute of Science (IISc), Bengaluru and Saqib Abbas Baba was at the Indian Institute of Technology, New Delhi. Both of them are co-first authors.}
  \thanks{Anurag Kumar and Rajesh Sundaresan are with the Department of Electrical Communication Engineering at the Indian Institute of Science, Bengaluru, and Arpan Chattopadhyay is with the Electrical Engineering Department and the Bharti School of Telecommunication Technology and Management, Indian Institute of Technology Delhi, India}
  \thanks{Krishna Chaythanya KV was supported by a fellowship grant from the Centre for Networked Intelligence (a Cisco CSR initiative) of the Indian Institute of Science, Bengaluru.}
  \thanks{Anurag Kumar acknowledges support from an Indian National Science Academy Distinguished Professorship.}
  \thanks{Arpan Chattopadhyay acknowledges support from the Science and Engineering Research Board (SERB), India, through grant no. CRG/2022/003707.}
  \thanks{Rajesh Sundaresan acknowledges support from the SERB through grant no. CRG/2019/002975, and from the Centre for Networked Intelligence (a CISCO CSR initiative) of the Indian Institute of Science, Bangalore}
  \thanks{A preliminary version of this work was published as a conference paper~\cite{qcd_lossy_link_conference}. This work extends the conference version significantly as indicated towards the end of Section~\ref{sec:introduction}.}
}

\maketitle

\begin{abstract}
Motivated by Industry 4.0 applications, we consider quickest change detection (QCD) of an abrupt change in a process when its measurements are transmitted by a sensor over a lossy wireless link to a decision maker (DM). The sensor node samples measurements using a Bernoulli sampling process, and places the measurement samples in the transmit queue of its transmitter. The transmitter uses a retransmit-until-success transmission strategy to deliver packets to the DM over the lossy link, in which the packet losses are modeled as a Bernoulli process, with different loss probabilities before and after the change. We pose the QCD problem in the non-Bayesian setting under Lorden’s framework, and propose a CUSUM algorithm. By defining a suitable Markov process, involving the DM measurements and the queue length process, we show that the problem reduces to QCD in a Markov process. Characterizing the information measure per measurement sample at the DM, we establish the asymptotic optimality of our algorithm when the false alarm rate tends to zero. Further, when the DM receives incomplete data due to channel loss, we present asymptotically optimal QCD algorithms by suitably modifying the CUSUM algorithm. We then explore the last-come-first-served (LCFS) queuing discipline at the sensor transmit queue to lower detection delay in the non-asymptotic case. Next, we consider the case of multiple sensors, each with its own wireless transmitter queue, and show that our analysis extends to the case of multiple homogeneous sensors. When the sensors are heterogeneous, i.e., their observations are not identically distributed, we present a sensor scheduling algorithm that minimizes detection delay by balancing the trade-off between the age of the observations and their information content. Numerical analysis demonstrate trade-offs that can be used to optimize system design parameters in the non-asymptotic regime.
\end{abstract}

\begin{IEEEkeywords}
Communication networks, CUSUM, queue disciplines, quickest change detection, scheduling policies, 
\end{IEEEkeywords}

\section{Introduction}\label{sec:introduction}
Online monitoring of industrial systems is a cornerstone of predictive maintenance and the broader vision of Industry~4.0~\cite{raiReviewSignalProcessing2016}. Timely detection of incipient faults, such as bearing degradation, enables corrective action before catastrophic failure and costly downtime. In many such deployments, sensors must communicate wirelessly with a central decision maker (DM), since wired connections are often impractical for moving machinery. Consequently, sensor measurements may experience packet losses or retransmissions over unreliable wireless links. Moreover, the same physical degradation that affects the measured process (e.g., increased vibration) may also deteriorate the wireless channel quality, coupling sensing and communication dynamics.

Against this backdrop, in this paper we study the classical problem of quick detection of a change (QCD) in a stochastic process with the novel feature that the sensor measurements can experience random loss, and, therefore, delay due to retransmissions. In addition, when the change in the machine related process is concomitant with a degradation of the wireless channel, the packet loss process also provides information about the change, in addition to the contents of the delivered measurement packets. The objective is to detect an abrupt distributional change in the underlying process as quickly as possible, subject to a constraint on false alarms. 

We first consider the case where a single sensor monitors the process and transmits its measurements over a lossy wireless channel. By augmenting the observation space with the queue length process, we formulate the detection problem as one over a Markov process and prove that the proposed \CUSUM{}-based detector remains asymptotically optimal. We also explore the Last-Come-First-Served (LCFS) discipline to
lower detection delay. We then extend this framework to a multi-sensor setting, where multiple sensors independently monitor a process, and transmit their observations over a shared wireless channel.  This introduces the problem of sensor scheduling, where the order of service affects both the freshness and informativeness of received data. In non-homogeneous networks, sensors may have different sampling rates and signal-to-noise levels, leading to heterogeneous information contributions. We propose heuristic scheduling policies, including a discounted-information rule and a look-back window policy, that jointly balance data recency and information quality. 

Finally, we numerically evaluate the proposed algorithms in both single- and multi-sensor setups, highlighting the effect of queueing delays, channel losses, and sampling rates on the average detection delay. The analysis demonstrates that an optimal sampling rate exists that minimizes detection delay in the non-asymptotic regime. Overall, our formulation of QCD for Markov process, combined with the inclusion of wireless queuing and scheduling effects, offers a unified framework bridging detection theory and networked sensing.

\subsection{Related Literature}
\label{networked-cusum-literature-survey}

The quickest change detection (QCD) problem, in both Bayesian and non-Bayesian formulations, has been extensively studied in the classical literature. In the Bayesian setting, the change time is modeled as a random variable with a known prior, leading to optimal detection schemes such as the Shiryaev \cite{shiryaev1963} and Shiryaev–Roberts tests. In the non-Bayesian minimax framework of Lorden~\cite{lordenProceduresReactingChange1971} and Pollak \cite{pollakOptimalDetectionChange1985}, the objective is to minimize the worst-case average detection delay (ESADD) under a constraint on the average run length to false alarm (ARL2FA). In this setting, the \CUSUM{} test has been shown to be asymptotically optimal~\cite{laiInformationBoundsQuick1998, moustakidesOptimalStoppingTimes1986}. Comprehensive surveys can be found in~\cite{veeravallietal2021, tartakovsky2014sequential, tartakovsky2019sequential}.

Beyond the i.i.d.\ setting, sequential change detection for non-i.i.d.\ processes has been investigated under a variety of models. For example, \cite{yakir-cpd-markov-finite-ss} studies QCD for finite-state Markov processes, while \cite{fuhSPRTCUSUMHidden2003} analyzes \CUSUM{} performance for hidden Markov models. Moustakides et al.~\cite{moustakidesMarkovComments2010} and \cite{moustakides2015} further explore the optimality of \CUSUM{} for Markovian data and propose variants such as the Shewhart test optimized for worst-case detection probability. These works establish foundational results for QCD in dependent-observation models, which our work extends to the queue-augmented observation processes induced by wireless communication.

In recent years, attention has turned toward QCD in decentralized and networked sensing environments, where observations from multiple sensors must be communicated over shared channels to a fusion center. In \cite{karumbuDelayOptimalEvent2012}, event detection over ad hoc wireless sensor networks was analyzed in a Bayesian framework, revealing the trade-off between random network delays and detection delay under random-access contention. Banerjee and Veeravalli~\cite{banerjee2015data} studied data-efficient distributed QCD using on–off observation control and censoring policies to limit communication cost while maintaining asymptotic optimality. Premkumar and Kumar~\cite{premkumar2008optimal} examined optimal sleep–wake scheduling for intrusion detection in sensor networks, focusing on energy-efficient sampling strategies. Active control of sampling for QCD has been discussed in \cite{xu2021optimum,liuAdaptiveSamplingStrategy2015,xuSecondOrderAsymptoticallyOptimal2020}. Similarly, Ren et al.~\cite{ren2016quickest} proposed a threshold-based observation scheduling policy that dynamically selects between multiple observation modes with different costs and information quality. These studies address communication constraints and observation control but generally do not model queueing or retransmission effects at the network layer.

In summary, while prior works have explored observation control, scheduling, and communication-aware detection, they largely assume instantaneous or complete data availability at the decision maker. In practical wireless systems, however, measurement packets can arrive asynchronously or be delayed due to retransmissions, leading to partially observed data streams. The interplay between such queueing and retransmission effects and sequential detection performance has not been analytically characterized. 

\noindent \subsection*{Our Contributions} This paper addresses the gap highlighted in the previous paragraph by formulating the QCD problem for both single- and multi-sensor wireless settings, establishing asymptotic optimality in the single-sensor case, and proposing heuristic yet effective scheduling policies for the multi-sensor case. A preliminary version of this work appeared in our conference paper~\cite{qcd_lossy_link_conference}, which focused on the single-sensor case. The main contributions of this work are as follows:
\begin{enumerate}
    \item We formulate the non-Bayesian QCD problem under Lorden’s criterion in Section~\ref{sec:infin-retr-in-order-reception} for a single sensor transmitting over a lossy wireless link with retransmissions, and establish asymptotic optimality of a Markovian \CUSUM{} detector.
    \item We analyze how transmit-queue service order influences detection delay in Section~\ref{sec:non-fcfs-discipline}. We provide a rigorous discussion of the impact of transmission queue discipline, including FCFS and LCFS, on detection delay, and provide heuristic arguments in favor of LCFS among non-idling policies within a busy period.
    \item We extend the framework to multi-sensor networks sharing a wireless channel. For homogeneous sensors, we derive the corresponding \CUSUM{} formulation in Section~\ref{sec:multi-sens}; for heterogeneous sensors, we highlight the trade-off between information content and data freshness.
    \item We propose two heuristic scheduling rules in Sections~\ref{subsec:dis_info_sch_pol} and~\ref{subsec:look_back_win_pol}, a discounted-information and a look-back window policy, that adapt transmission based on informativeness and recency, demonstrating their effectiveness in reducing non-asymptotic detection delay via simulation results.
\end{enumerate}

\section{Single Sensor Case: System Model and Notation}
\label{sec:system-model}
We consider a discrete-time system where a sensor node samples a random process at a sampling rate $0 < r < 1$ per slot, i.e., a new sample is generated in each discrete time interval with a probability $r$. The sample is encapsulated in a packet, and immediately added to the transmit queue of the transmitter of a wireless link connecting the sensor to the decision maker. The wireless channel is time-slotted, and memoryless, with a known packet loss probability. If its queue is nonempty at the beginning of a slot, the transmitter transmits one packet; if the packet is successfully received at the DM, an acknowledgment is received back in the same slot, else the packet is backlogged for reattempt in the next slot. We assume that the acknowledgment packets do not undergo any packet loss over the feedback channel, and are always received by the transmitter, whenever the DM acknowledges the received packets. The time slots are of unit size (in practice the time taken to transmit one packet and receive its acknowledgment, along with the inter-packet gaps and guard times) and are indexed by $k\in\mathbb{Z} = \left\{ \ldots,-1, 0, 1, 2, \ldots \right\}$, where slot $k$ refers to the time interval $\left[  k-1, k\right)$. We assume that the nodes in the network (the sensor and the DM in the single sensor case, and all the sensors and the DM in the multi-sensor case) are all time synchronized.

We also assume that the sensor has been generating samples from the random process for an extended duration prior to slot $0$ and that the QCD procedure starts after time $0$.  In practice, time $0$ demarcates a phase of known normal behavior and a subsequent regime where it is anticipated that anomalous behavior in the process under observation may occur at a \emph{change point}.

The sensor node samples a measurement $X_j$ at time denoted $t_j$, where $j=1, 2, \ldots$. The measurements are independent, and have a probability distribution
\begin{displaymath}
    X_j \sim
  \begin{cases}
    f_0 & \text{if }t_j \leq \nu,\\
    f_1 & \text{if }t_j > \nu,
  \end{cases}
\end{displaymath}
where $\nu \geq 1$ is an unknown deterministic time, aligned with the trailing slot boundary, and referred to as the \emph{change point} at which the distribution of the observations changes from a known distribution $f_0$ to a known distribution $f_1$. We assume that this change point occurs at the end of the slot $\nu$ and any measurement in slot $\nu$ observes the pre-change distribution. This change in the distribution of the measurements may occur due to the development of a fault in one of the components of the machinery, whose health is being monitored by the sensor node.  In addition, the channel over which the sensor node transmits to the DM also changes after the change point. The channel has a probability of successful transmission $p_0$ for slot $k \leq \nu$, and $p_1$ for slot $k > \nu$. These probabilities are known to the DM.

\begin{assumption}
  \label{assump:networked-qcd-chan-ind-sampling}
  We assume that the channel is conditionally independent of the sampling process given the change point $\nu$.
\end{assumption}
\begin{assumption}
  \label{assump:sampling-rate-sane}
  We assume that the sampling rate of the sensor node is less than both the pre-change and the post-change probability of successful transmission, i.e., $r < \min\left\{ p_0, p_1 \right\}$.
\end{assumption}
The problem is for the DM to detect the change in the distribution of the samples as quickly as possible, when the DM receives data sequentially, and is aware of the packet loss probabilities of the channel before and after the change, while controlling the false alarms to be below a given threshold. The QCD procedure begins at slot $1$. We first define the notation before we state our problem formally.
\begin{itemize}
\item $\mathbb{P}_\nu, \mathbb{E}_\nu$, for $\nu \geq 1$, denote the underlying probability law
  and the expectation, when the change occurs in the slot $\nu$.
\item $\mathbb{P}_0, \mathbb{E}_0$, denote the probability law, and the expectation, when all the random variables are governed by the
post-change probabilities.
\item $\mathbb{P}_\infty, \mathbb{E}_{\infty}$ denote the probability law, and
  the expectation, when the change does not occur ($\nu=\infty$). Under $\mathbb{P}_\infty$, all the random variables are governed by the pre-change probabilities.
\item $S_\nu$ denotes the number of measurements that arrived at the
  transmit queue after the QCD process starts until the change point
  $\nu$, i.e., it counts the number of arrivals in the time
  $\left(0, \nu\right]$.
\item $Q_k$ is the number of samples in the queue at the beginning of
  the time slot $k$ (see \cref{fig:notationfigure}).  Packet arrivals
  into the queue during the time $\left( k-2, k-1\right]$ are
  accounted in $Q_k$. The sensor node attempts a transmission in slot
  $k$ if $Q_k > 0$.
\item Measurement packets generated by the sensor are numbered sequentially. $D_k$ denotes the sampling slot of the last measurement packet
  successfully received at the DM up to and including the end of slot $k$.
\item $A_k$ is the packet arrival process at the transmit queue. When a packet containing a measurement arrives at the transmit queue in slot $k$, then $A_k = 1$, else $A_k = 0$. All packets that arrive into the queue till the trailing edge of slot $k$ are accounted in the queue length computed in slot $k+1$.
  \item $Y_k$ is the channel service process. It represents whether a transmission attempt in that slot was successful, failed, or not made.
\item $Z_k$ is the measurement received at the DM in slot $k$.  If
  $Y_k = 1$ and the index of the corresponding packet departure from
  the transmitter queue is $D_k$ , then $Z_k = X_{D_k}$. To denote
  that there was no transmission on the channel due to an empty
  transmit queue (i.e., $Q_k = 0$), we say that
  $Y_k = \emptyset$. In cases when
  $Y_k \in \left\{ 0, \emptyset \right\}$, no measurement is received
  at the DM. Thus, $\left( Y_k, Z_k\right)$ are defined as
  \begin{displaymath} (Y_k, Z_k) =
    \begin{cases} (1, X_{D_k}),&\text{on successful transmission},\\
\left( 0, *\right),&\text{on unsuccessful transmission},\\ \left(
\emptyset,*\right),&\text{on no transmission}.
    \end{cases}
  \end{displaymath}
We assume that the value of $Y_k$ (either $\emptyset$ or $0$ or $1$) is known to the receiver. For example, absence of energy could inform that there was no transmission ($Y_k=\emptyset$). If energy is detected, which indicates a transmission, parity check could inform whether the transmitted packet was incorrectly received ($Y_k=0$) or correctly received ($Y_k=1$).
\item The probability of successful transmission over the channel is 
\begin{displaymath}
  P\left( Y_k = 1 \mid Q_k > 0 \right) = 
  \begin{cases}
  p_0 & \text{ if }k \leq \nu,\\
       p_1 & \text{ if }k > \nu.
  \end{cases}
\end{displaymath}
We assume that $p_0, p_1$ are known.
\item $J_k$ denotes the sampling index of the measurement received at the DM in the $k$th slot, whenever a successful reception at the DM occurs. We need to define $J_k$ as we will consider non-FCFS transmission from the measurement queue. In the case that no measurement was received at the DM, i.e., when $Z_k = *$, we denote $J_k = *$.
\end{itemize}
\begin{figure}[h]
\centering
\includegraphics[width=0.85\columnwidth,height=3.7cm]{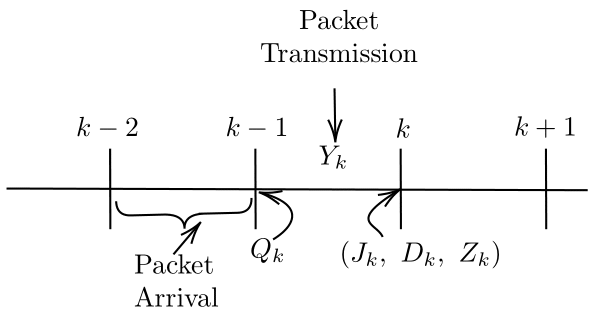}
\caption{Slot $k$ corresponds to time $\in \left[k-1, k\right)$. The
  change point here is taken to arrive at the end of the slot $\nu$.}
    \label{fig:notationfigure}
\end{figure}
We assume that the sensor, transmit queue, and transmitter have been in continuous operation for an extended duration prior to slot~0, so that the system is in steady state when the change detection procedure begins at the start of slot~1. Slot~0 thus represents the last time instant known a priori to be pre-change, implying that the change point satisfies $\nu > 0$. Consequently, the transmit queue may initially contain $Q_1$ undelivered packets sampled before the start of the procedure; if these are later received by the DM, they are discarded since they correspond to pre-change data.

\begin{assumption}\label{assump:networked-qcd-q1-known}
The transmitter queue length at the beginning of the first slot, $Q_1$, is known \emph{a priori} to the DM.
\end{assumption}

This initial queue state, $Q_1$, can be communicated via a control packet that initializes the QCD process. We assume that control packets (initialization, acknowledgments) always succeed (due to they being small, and being transmitted with more robust schemes not costed in our current model).

We begin our analysis with the baseline assumption that all measurements sampled by the sensor are eventually received at the DM in the order of sampling, although with some delay. A \CUSUM{} algorithm is developed for this case and its performance analyzed. In later sections, we investigate cases with out-of-order reception and packet losses at the DM.

\section{Single Sensor Case: In-order reception with network delay}
\label{sec:infin-retr-in-order-reception}

In this section, we investigate the setting where the retransmission protocol only introduces delays in the reception of measurements at the DM and never results in the loss of measurement packets. Further, the measurement packets are received at the DM in-order. We make the following assumptions on the system model described in
\cref{sec:system-model}.
\begin{enumerate}
\item The transmitter attached to the sensor node reattempts to
  transmit unacknowledged packets until they are successfully received
  at the DM.
\item The transmission of measurement packets (from the measurement
  packets queue) is done in order of sampling.
\end{enumerate}
The transmit queue under this model evolves as
\begin{displaymath}
  Q_{k+1} = {\left( Q_{k} - Y_{k} \right)}^+ + A_{k}.
\end{displaymath}

The DM uses a sequential algorithm to detect the change point using $Q_1$ and
$\left( Y_k, Z_k \right), k\geq 1$. Denote by
$\mathcal{F}_t = \sigma\left( Q_1, (Y_k, Z_k), 1 \leq k \leq t \right)$, the
$\sigma$-algebra generated by the observations available at the DM up
to the end of (or the trailing edge of) slot $t$. Then, the DM uses a
sequential detection rule, an $\mathcal{F}_t$-stopping time $T$, to
raise an alarm declaring that a change has been detected. We assess
the performance of the detection rule, following Lorden's approach
\cite{lordenProceduresReactingChange1971}, in terms of the worst-case (essential supremum (ES))
average delay in the detection (ADD) of the change point, measured using the
$\text{ESADD}$ which, for a stopping rule $T$, is defined as
\begin{equation}
  \label{eq:lordens-criterion}
  \barexpectt{1}{T} \doteq \sup_{\nu \geq 1} \esssup \expectt{\nu}{ {\left( T - \nu + 1 \right)}^+ \mid Q_1, Y_{1}^{\nu-1}, Z_{1}^{\nu-1} },
\end{equation}
where $\esssup$ is the essential supremum and
$$Y_1^{\nu - 1} \coloneqq \left\{ Y_1, \ldots, Y_{\nu-1} \right\}, Z_1^{\nu-1} \coloneqq \left\{ Z_1, \ldots, Z_{\nu-1} \right\}.$$ We
aim to minimize the ESADD subject to the constraint on the average run-length to false alarm
$\text{ARL2FA} = \expectt{\infty}{T} \geq \gamma$, where $\gamma > 0$
is a large positive number. That is, we seek to find a stopping time
$T^\ast$ in the set
$\mathcal{C}_\gamma = \left\{T: \expectt{\infty}{T}\geq \gamma
\right\}$ such that
\begin{displaymath}
  \barexpectt{1}{T^\ast} = \inf_{T \in \mathcal{C}_\gamma}  \barexpectt{1}{T}
\end{displaymath}
if it exists; and if not, we would like to make the left-hand-side as
close to the right-hand-side as one wishes. The asymptotic analysis will involve $\gamma \rightarrow \infty$

\subsection{Log likelihood ratio analysis}
\label{sec:llr-analysis-inf-retr}
For a change point \(\nu\), the log-likelihood ratio of \(\mathbb{P}_{\nu}\) vs. \(\mathbb{P}_{\infty}\), based on observations at the DM up to time \(n\), is:
\[\log \frac{\mathbb{P}_\nu\left( Y_1^{n}, Z_1^{n} \mid Q_1  \right)}{\mathbb{P}_\infty\left( Y_1^n, Z_1^{n} \mid Q_1 \right)}.\]
The sequence \(Y_1^n\) for \(n = 2, 3, \ldots\) is not i.i.d.; for example, if for some \(i > 0\), \(Y_i = 0\), then \(Y_{i+1} \neq \emptyset\). Nevertheless, the channel service process is conditionally independent given a non-zero queue length process. For $i > \nu$, we define

\footnotesize
\begin{equation}
 \label{eq:define-Li}
 L_i \coloneqq \indicator{Q_i > 0} \log\frac{\mathbb{P}_0(Y_i \mid Q_i > 0)}{\mathbb{P}_\infty\left( Y_i \mid Q_i > 0 \right)}
 + \indicator{ Y_i = 1, D_i > Q_1 + S_\nu
   } \log\frac{f_1\left( Z_i \right)}{f_0\left( Z_i
   \right)}.
\end{equation}
\normalsize

\begin{discussion}
  $L_i$ is the log-likelihood ratio of the observations at the DM in the slot $i$. The first term is a function of the channel service process. The transmitter attempts a transmission in slot $k$ only when the transmit queue at the beginning of slot $k$, $Q_k$, is non-empty. The second term is a function of the received measurement. A measurement is received only when a successful transmission takes place, i.e., when $Y_k = 1$. Further, in a slot where the DM successfully receives a measurement, i.e., when $Y_k = 1$, we have $Z_i = X_{D_i}$.  \qed
\end{discussion}
The following lemma allows us to write the log-likelihood ratio as the sum of $L_i$, each of which is a function of random variables that correspond to slot $i$ alone.
\begin{lemma}
  \label{lem:log-likelihood-computation}
  With $L_i$ defined as in \cref{eq:define-Li}, the log-likelihood ratio satisfies:
  \begin{align}\label{eq:log-likelihood-computation}
  \log \frac{\mathbb{P}_\nu\left( Y_1^{n}, Z_1^{n} \mid Q_1 \right)}{\mathbb{P}_\infty\left( Y_1^n, Z_1^{n} \mid Q_1 \right)} = \sum_{i=\nu+1}^n L_i.
  \end{align}
\end{lemma}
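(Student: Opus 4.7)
The plan is to decompose the joint likelihood slot-by-slot via the chain rule, after enlarging the sample space to include the unobserved arrival sequence $\{A_k\}$. On this enlarged space, the queue length $Q_k$ becomes a deterministic function of $(Q_1, A_1^{k-1}, Y_1^{k-1})$, and the per-slot conditional distributions of $(Y_k, Z_k)$ given the full past factor cleanly into a channel factor $\mathbb{P}(Y_k \mid Q_k)$ and a measurement factor (nondegenerate only when $Y_k = 1$). By \cref{assump:networked-qcd-chan-ind-sampling}, the arrival law is identical under $\mathbb{P}_\nu$ and $\mathbb{P}_\infty$, so its contribution will cancel in the Radon--Nikodym derivative; so will the entire slot-$i$ factor for $i \leq \nu$, since both measures then use $p_0$ on the channel and $f_0$ on any delivered measurement.

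Next I would compute the slot-wise ratios for $i > \nu$ explicitly on the enlarged space. The channel factor contributes $\indicator{Q_i > 0}\log\left[\mathbb{P}_0(Y_i \mid Q_i > 0)/\mathbb{P}_\infty(Y_i \mid Q_i > 0)\right]$ after taking the logarithm, with the indicator capturing the fact that no transmission occurs (and both laws degenerate on $Y_i = \emptyset$) when $Q_i = 0$. The measurement factor contributes $\log[f_1(Z_i)/f_0(Z_i)]$ exactly when $Y_i = 1$ and the delivered packet was sampled after $\nu$; under the in-order service discipline, the index $D_i$ of that packet satisfies $D_i > Q_1 + S_\nu$ precisely in this case, since $Q_1 + S_\nu$ is the total number of pre-change packets present in the system. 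Summing the two contributions over $i > \nu$ reproduces $\sum_{i=\nu+1}^n L_i$ as in \cref{eq:define-Li}.

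The final step is to project this enlarged-space identity onto the DM's $\sigma$-algebra $\mathcal{F}_n = \sigma(Q_1, Y_1^n, Z_1^n)$. Two types of hidden variables appear in the enlarged derivative: the arrival sequence $\{A_k\}$, which has identical law under both measures and hence cancels after conditioning; and the measurements $X_j$ of packets sampled but not yet delivered by slot $n$, whose likelihood-ratio contributions $f_1(X_j)/f_0(X_j)$ are of unit mean under $\mathbb{P}_\infty$ and hence collapse when taking the conditional expectation. What remains, namely $\sum_{i=\nu+1}^n L_i$, is already $\mathcal{F}_n$-measurable: $\indicator{Q_i > 0}$ equals $\indicator{Y_i \neq \emptyset}$, $D_i$ is the cumulative count of successful receptions through slot $i$ under FCFS in-order service, and the event $\{D_i > Q_1 + S_\nu\}$ is equivalent to the delivered packet's sampling slot exceeding $\nu$, which is read off from the packet's timestamp. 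The main obstacle will be this projection step, which requires a careful application of the tower property together with the distributional invariance of arrivals and the unit-mean property of undelivered measurements; once the projection is justified, the cancellations and the slot-wise ratios derived above immediately yield \cref{eq:log-likelihood-computation}.
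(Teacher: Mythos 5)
Your proposal is correct in substance but follows a genuinely different route from the paper's. The paper works directly on the observed $\sigma$-algebra: it applies the chain rule to $(Y_1^n,Z_1^n)$ given $Q_1$, and inside each conditional factor it marginalizes over the hidden arrivals, writing $P(Y_i\mid Q_1,Y_1^{i-1})$ as a mixture over $\{Q_i>0\}$ and $\{Q_i=0\}$ and using that $P(Q_i\mid Q_1,Y_1^{i-1})$ is common to $\mathbb{P}_\nu$ and $\mathbb{P}_\infty$, so that only the ratio $\mathbb{P}_0(Y_i\mid Q_i>0)/\mathbb{P}_\infty(Y_i\mid Q_i>0)$ survives; the measurement factor is handled by noting $D_i$ is a function of $(Q_1,Y_1^i)$ and $Z_i=X_{D_i}$ on $\{Y_i=1\}$. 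You instead reverse the order of operations: you compute the Radon--Nikodym derivative on an enlarged space containing the arrivals and all sampled (including undelivered) measurements, where it is an exact product of per-slot channel ratios and per-sample measurement ratios, and then project down by taking $\mathbb{E}_\infty[\cdot\mid\mathcal{F}_n]$, using the identical arrival law and the unit-$\mathbb{P}_\infty$-mean of $f_1(X_j)/f_0(X_j)$ for undelivered samples. Both arguments rest on the same two facts (change-independent arrivals; measurements independent of the queue/channel), but yours makes the cancellations more transparent and would extend more easily to out-of-order delivery, at the cost of the extra projection step, which you correctly flag and which needs the tower property applied after further conditioning on the arrival sequence (since the set of undelivered indices is not $\mathcal{F}_n$-measurable). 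One caveat you inherit from the paper rather than introduce: the indicator $\indicator{D_i>Q_1+S_\nu}$ involves $S_\nu$, which is not $\mathcal{F}_n$-measurable, so strictly the identity holds on the arrival-augmented filtration (or under your reading that timestamps are delivered with packets); the paper concedes exactly this point in Discussion~\ref{disc:S-nu}, so this is not a gap relative to the paper's own standard of rigor.
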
%
\begin{proof}[Proof sketch]
  The proof (provided in Appendix~\ref{appendix:Proof of lem:log-likelihood-computation}) uses the facts that the arrival process of the measurement samples is the same under both the probability laws $\mathbb{P}_\nu$ and $\mathbb{P}_\infty$, and that the received measurements at the DM are independent of the queuing process.
\end{proof}

\begin{discussion}
\label{disc:S-nu} 
$L_i$, as defined above, requires knowledge of $S_{\nu}$ for computation. But this is not available to the receiver. Even if $S_{\nu}$ were available to the receiver, for e.g., when the arrival process (along with $Q_1$) is known to both the transmitting end and the DM, $S_\nu$ depends on $\nu$, and the generalized likelihood ratio $C_n \coloneqq \max_{0 \leq \nu \leq n} \sum_{i = \nu+1}^n L_i$ does not simplify to the recursion $C_n = (C_{n-1} + L_n)^+$ rendering computation of $C_n$ difficult. However, if we set $S_{\nu} = 0$, then the corresponding (approximate) CUSUM not only admits the recursion but also does not require the knowledge of the arrival process for its computation. The assumption $S_{\nu}=0$ causes a few terms (the actual $S_{\nu}$ measurement samples) in the resulting $C_n$ to have negative drift. But this will not affect the asymptotic results for two reasons: (a) the time taken for a decision goes to $\infty$ as $\gamma \rightarrow \infty$ and a finite number of negative drift terms does not affect the rate of upward drift; (b) though the terms have negative drift, the truncation at 0 restricts their impact. Moreover, the worst-case detection delay is for the case when $\nu=0$, i.e., which restarts Page's CUSUM \cite[p.1380]{moustakidesOptimalStoppingTimes1986}, and $S_{\nu}=0$ brings us closer to such a restart.
\end{discussion}

Hence, for the detection of the change point, the DM uses the \CUSUM{} rule \cite{pageContinuousInspectionSchemes1954} with an update $L_i$, at slot $i$ that assumes $S_{\nu}=0$. Note that the detection rule is an $\mathcal{F}_t$--stopping time $T$, defined as
\begin{equation}
  \label{eq:cusum-definition} T(h) = \min\left\{ n \in \mathbb{N} :
C_n > h, C_n = {\left( C_{n-1} + L_n \right)}^+ \right\},
\end{equation}
where $h$ is the decision threshold, which is tuned for the \CUSUM{} rule to achieve the target false alarm performance.

\subsection{Asymptotic analysis}
\label{sec:asymptotic-analysis-geom-geom-one-model}

In this section, we analyze the performance and prove the optimality of the \CUSUM{} defined in \cref{eq:cusum-definition}, in the asymptotic regime as $\gamma \to \infty$. First, we will show that there exists a lower bound on the ESADD (defined in \cref{eq:lordens-criterion}) when $\gamma \to \infty$. Then, we will show that the \CUSUM{} rule defined in \cref{eq:cusum-definition} achieves the lower bound in the asymptotic regime. We will use Lai's \cite{laiInformationBoundsQuick1998} generalization of Lorden's asymptotic theory to general processes to prove these bounds.

To prove the lower bound on the asymptotic ESADD, we will need to augment the observation space of the DM. Define $\zeta_k = \left( Q_k, Y_k, Z_k \right)$; see \cref{fig:notationfigure} for the embedding of the component processes. The log likelihood ratio of $\zeta_1^n$, given the initial queue length $Q_1$, under $\mathbb{P}_\nu$ versus $\mathbb{P}_\infty$ is equal to
\begin{align*}
  \ell_{\nu, n} &= \log\frac{\mathbb{P}_\nu\left( \zeta_1^n \mid Q_1 \right)}{\mathbb{P}_\infty\left( \zeta_1^n \mid Q_1 \right)}
   = \log\frac{\mathbb{P}_\nu\left( Z_1^n, Y_1^n \mid Q_1 \right)}{\mathbb{P}_\infty\left( Z_1^n, Y_1^n \mid Q_1 \right)}\\
    & = \sum_{i=1}^n L_i,
\end{align*}
where we have used the fact that the queue length is independent of the received measurement, which implies that, given $\left(Q_1, Y_1^k,\right)$, $Q_k$ is equal under both the probability laws, due to the arrivals being equal, and hence,
$\log\frac{\mathbb{P}_\nu\left( Q_2^n \mid Q_1, Y_1^k \right)}{\mathbb{P}_\infty\left(Q_2^n \mid Q_1, Y_1^k \right)} = 0$. Then, by \cref{lem:log-likelihood-computation}, we have $\ell_{\nu, n} = \sum_{i=1}^{n}L_i$.

Next, to prove the upper and lower bounds on the ESADD of the \CUSUM{} algorithm in \cref{eq:cusum-definition}, we define the following quantity:
\begin{equation}\label{eq:basic-defn-I}
  I \coloneqq \lim_{n \to \infty} \frac{1}{n} \ell_{0, n}.
\end{equation}
The Markovity of $\zeta_k$, given the change point $\nu$, is clear from the evolution of the queue dynamics defined in \cref{sec:llr-analysis-inf-retr}. Further, under the \cref{assump:sampling-rate-sane}, the transmitter queue, $\left\{ Q_k: k\geq 1 \right\}$, is stable, and the Markov process $\{\zeta_k\}$ has a stationary distribution. Under the probability law $\mathbb{P}_0$, the stationary distribution of the Markov process $\{\zeta_k\}$ is given by $\Pi^{\left( p_1 \right)}_{\zeta}$, where the superscript denotes that this is the stationary distribution for the Markov chain with the post-change packet loss probability $p_1$ and $D_1 > Q_1 + S_{\nu}$.

Since $\left\{ \zeta_k \right\}$ is an aperiodic and recurrent Markov process, by the ergodic theorem for Markov processes \cite{durrettProbabilityTheoryExamples2019}, we have
\begin{equation}\label{eq:ergodic-limit-Li}
  \lim_{n\to\infty} \frac{1}{n} \ell_{0, n} = \lim_{n\to\infty} \frac{1}{n} \sum_{i=1}^n L_i = \expectt{\Pi_{\zeta}^{\left( p_1 \right)}}{ L_1}.
\end{equation}
Combining \cref{eq:ergodic-limit-Li,eq:basic-defn-I}, we have
\begin{align}
  \label{eq:define-I}
  I &= \expectt{\Pi_{\zeta}^{\left( p_1 \right)}}{ L_1} \nonumber \\
    &= \mathbb{E}_{\Pi_{\zeta}^{\left( p_1 \right)}}\left[\indicator{Q_1 > 0} \log\frac{\mathbb{P}_0(Y_1 \mid Q_1 > 0)}{\mathbb{P}_\infty\left( Y_1 \mid Q_1 > 0 \right)}\right. \nonumber\\
      &\qquad \qquad + \left. \indicator{ Y_1 = 1, D_1 > Q_1 + S_\nu
      } \log\frac{f_1\left( Z_1 \right)}{f_0\left( Z_1
      \right)}\right] \nonumber\\
    &= \expectt{\Pi_{\zeta}^{\left( p_1 \right)}}{\indicator{Q_1 > 0} \log\frac{\mathbb{P}_0(Y_1 \mid Q_1 > 0)}{\mathbb{P}_\infty\left( Y_1 \mid Q_1 > 0 \right)}} \nonumber\\
    &\quad+ \expectt{\Pi_{\zeta}^{\left( p_1 \right)}}{\indicator{ Y_1 = 1, Q_1 > 0} \log\frac{f_1\left( X_{D_1} \right)}{f_0\left( X_{D_1} \right)}}\nonumber\\
  \implies I &= \mathbb{P}_0\left( Q_1 > 0 \right) \left(\kldist{p_1}{p_0} + p_1\kldist{f_1}{f_0}\right),
\end{align}
where, the third equality is a consequence of $Z_i = X_{D_i}$, whenever $Y_i = 1$. Further, $\kldist{p_1}{p_0} \coloneqq \expectt{\Pi_{\zeta}^{\left(p_1\right)}}{\log \frac{\mathbb{P}_0\left( Y_k \mid Q_k > 0 \right)}{\mathbb{P}_\infty\left( Y_k \mid Q_k > 0 \right)}}$ is the Kullback-Leibler (KL) divergence between the Bernoulli distributions with parameters $p_1$ and $p_0$, i.e., $\kldist{p_1}{p_0} = \,p_1 \log \frac{p_1}{p_0} + (1-p_1)\log \frac{1-p_1}{1 -p_0}$. Finally, $\kldist{f_1}{f_0} \coloneqq \expectt{\Pi_{\zeta}^{\left(1\right)}}{\log \frac{f_1(X_j)}{f_0(X_j)}}$ is the KL divergence between the pre-change distribution $f_0$ and the post-change distribution $f_1$ of the measurement samples.

The stationary probability under the law $\mathbb{P}_0$ that the queue is non-empty, $\mathbb{P}_0\left( Q_1 > 0 \right)$, can be computed using the Little's theorem \cite{kleinrockPart1} for queues.  Under the probability law $\mathbb{P}_0$, the transmitter queue is a $\text{Geom}/\text{Geom}/1$ queue \cite{kleinrockPart1} with arrival rate $r$ and service rate $p_1$, and $\mathbb{P}_0\left( Q_1 > 0 \right) = r/p_1$. The quantity $I$ can therefore be rewritten as
\begin{equation}\label{eq:information-number}
  I = r \left( \frac{1}{p_1} \kldist{p_1}{p_0} + \kldist{f_1}{f_0} \right).
\end{equation}
The quantity $I$ is the average ``information'' provided by each arriving packet. For each measurement arrival, there are $1/p_1$ packet receipts received by the DM, each of which carry an average information of $\kldist{p_1}{p_0}$ and one measurement receipt, which bears an average information $\kldist{f_1}{f_0}$.

We will now proceed to prove bounds on asymptotic ESADD of the \CUSUM{} rule defined in \cref{eq:cusum-definition}. First, we show that there exists lower bound on ESADD (defined in \cref{eq:lordens-criterion}) for the observation sequence $\zeta_k$.
\begin{theorem}[Lower bound on \CUSUM{} ESADD]
  \label{thm:lower-bound-cusum} 
For the Markov process $\zeta_k$, as the ARL2FA $\gamma \to \infty$, we have
\begin{displaymath} 
\inf_{\left\{ \expectt{\infty}{T} \geq \gamma \right\}}\barexpectt{1}{T}  \geq \left(
I^{-1} + o(1) \right) \log\gamma.
  \end{displaymath}
\end{theorem}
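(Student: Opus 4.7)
The plan is to invoke Lai's information-theoretic lower bound for quickest change detection with general (non--i.i.d.) observations \cite{laiInformationBoundsQuick1998}, applied to the augmented Markov observation process $\zeta_k = (Q_k, Y_k, Z_k)$. Lai's theorem asserts that if the log-likelihood-ratio process $\ell_{k,k+n} = \sum_{i=k+1}^{k+n} L_i$ satisfies a suitable uniform maximal concentration around a deterministic rate $I$, then every stopping rule $T$ with $\mathbb{E}_\infty[T]\geq \gamma$ must obey $\bar{\mathbb{E}}_1[T]\geq (I^{-1}+o(1))\log\gamma$. The task therefore splits into two parts: (i) identify the correct information rate as the number $I$ defined in \eqref{eq:define-I}, and (ii) verify Lai's concentration hypothesis.

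For the first part, \cref{lem:log-likelihood-computation} already expresses $\ell_{\nu,\nu+n}$ as the additive functional $\sum_{i=\nu+1}^{\nu+n} L_i$ of the Markov chain $\{\zeta_k\}$. Under \cref{assump:sampling-rate-sane} the transmitter queue is a stable Geom/Geom/1 system, so $\{\zeta_k\}$ is positive recurrent and aperiodic under $\mathbb{P}_0$ with stationary law $\Pi_\zeta^{(p_1)}$. The ergodic theorem for Markov processes then yields $n^{-1}\ell_{0,n}\to I$ almost surely (and in $L^1$), regardless of the initial state $Q_1$, with $I$ as in \eqref{eq:information-number}; this step is essentially already carried out in \eqref{eq:ergodic-limit-Li}--\eqref{eq:information-number}.

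For the second part, which is the main obstacle, I would verify that for every $\delta>0$,
\begin{displaymath}
\sup_{k\geq 1}\; \esssup\; \mathbb{P}_k\!\left(\max_{1\leq t\leq n}\ell_{k,k+t}\;\geq\;(1+\delta)\,n I \,\Big|\, \mathcal{F}_{k-1}\right) \;\longrightarrow\; 0
\end{displaymath}
as $n\to\infty$. The per-step increments $L_i$ are bounded (the log-ratios of Bernoulli service probabilities are bounded, and the KL assumption on $f_0,f_1$ together with a standard moment hypothesis on the log density ratio gives exponential tails for the measurement contribution). Because $\{\zeta_k\}$ is a geometrically ergodic Markov chain, the standard strategy is to partition the time axis into regeneration blocks anchored at visits to $\{Q_k = 0\}$: the block lengths have exponential tails (the queue is stable and Bernoulli-arrival/Bernoulli-service), and within each block the contribution to $\sum L_i$ has mean $I$ times the block length together with a uniformly bounded moment generating function. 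A Cram\'er-type exponential inequality for the resulting i.i.d.\ block sums, combined with a maximal inequality and a union bound over the starting state (which has bounded influence on the first incomplete block), then supplies the required uniform concentration. Crucially, the uniformity over $k$ is inherited from the regenerative structure: shifting the origin only changes the distribution of the initial incomplete block, whose contribution is negligible on the $n$-scale.

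With both ingredients in place, Lai's general lower bound (Theorem~1 of \cite{laiInformationBoundsQuick1998}), applied to the Markov observation sequence $\{\zeta_k\}$ with initial condition $Q_1$, yields
\begin{displaymath}
\inf_{T\in\mathcal{C}_\gamma}\bar{\mathbb{E}}_1[T] \;\geq\; (I^{-1}+o(1))\log\gamma,
\end{displaymath}
as $\gamma\to\infty$. The only genuinely technical step is the Markov-chain exponential concentration bound described above; the remainder is a direct translation of the already-computed ergodic rate $I$ into Lai's framework.
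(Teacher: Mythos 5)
Your overall strategy coincides with the paper's: both proofs reduce \cref{thm:lower-bound-cusum} to Lai's Theorem~1, identify the rate $I$ via the ergodic theorem for the Markov chain $\{\zeta_k\}$ exactly as in \cref{eq:ergodic-limit-Li}--\cref{eq:information-number}, and then verify Lai's uniform maximal-probability condition. Where you diverge is in how that condition is verified. The paper's argument is deliberately soft: it proves a small lemma stating that if $\frac{1}{n}\sum_{i=1}^{n}R_i \to 0$ almost surely then $\mathbb{P}\bigl(\max_{k\le n}\frac{1}{n}\sum_{i=1}^{k}R_i \ge \delta\bigr)\to 0$, applies it to $\lambda_i = L_i - I$, and obtains uniformity over the initial state $\zeta_0=x$ by noting that $\sup_x L_1<\infty$ (the channel term is bounded for $0<p_0,p_1<1$ and the measurement term depends on $\zeta_0$ only through indicators). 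This needs nothing beyond integrability of $L_1$, i.e., finite KL divergences, which is what the theorem statement presumes. Your route---regeneration at visits to $\{Q_k=0\}$, exponential tails for block lengths, a Cram\'er-type bound for the i.i.d.\ block sums, and a maximal inequality---is heavier machinery that buys a quantitative (exponential) concentration rate, but Lai's condition only requires convergence in probability of the running maximum, so that strength is not needed here. More importantly, your Chernoff step requires a finite moment generating function for $\log\frac{f_1(X)}{f_0(X)}$ under $f_1$, an assumption the paper never imposes; as written, your proof establishes the theorem only under that additional hypothesis. If you replace the exponential block bound with the ergodic-theorem/a.s.-convergence argument (keeping your correct observation that the initial incomplete block, equivalently the initial state, has negligible influence on the $n$-scale), you recover the paper's proof without the extra assumption.
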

\begin{proof}[Proof sketch]
  The theorem is a consequence of \cite[Theorem
  1]{laiInformationBoundsQuick1998}. See
  Appendix~\ref{appendix:Proof of thm:lower-bound-cusum} for details.
\end{proof}
The theorem, bounds the rate of growth of the ESADD as the
$\text{ARL2FA} \to \infty$. Note how the lower bound, even in this
case of a Markov process, is similar in form to that of an i.i.d.\
process in Lorden \cite{lordenProceduresReactingChange1971}. In the
i.i.d.\ case, the denominator had a clear interpretation as the KL
divergence between the pre-change and the post-change distributions.
Lai \cite[Eq. 6]{laiInformationBoundsQuick1998} provides an
interpretation for general probability distributions which we apply to
our observations $\zeta_k$ in \cref{eq:define-I}.

Next, we prove an upper bound on the asymptotic ESADD of our \CUSUM{}
algorithm (defined in \cref{eq:cusum-definition}) as the threshold
$h \to \infty$.
\begin{theorem}[Upper bound on \CUSUM{} ESADD]
  \label{thm:upper-bound-cusum}
  For the \CUSUM{} defined in \cref{eq:cusum-definition}, with a
  threshold $h$, we have $\expectt{\infty}{T} < \infty$ and
  \begin{displaymath}
    \barexpectt{1}{T} \leq \left( I^{-1} + o(1)
\right)h,\,\text{as } h\to\infty.
  \end{displaymath}
\end{theorem}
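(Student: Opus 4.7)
The plan is to adapt Lorden's classical CUSUM upper-bound argument to the Markovian observation process $\{\zeta_k = (Q_k, Y_k, Z_k)\}$, treating the ergodic limits in \cref{eq:ergodic-limit-Li} and \cref{eq:define-I} as substitutes for the strong law that is applied to i.i.d.\ log-likelihoods in the classical setting. I would prove the two conclusions of the theorem separately.

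For $\expectt{\infty}{T} < \infty$: under $\mathbb{P}_\infty$, the chain $\{\zeta_k\}$ is stable with stationary law $\Pi_\zeta^{(p_0)}$, and a computation parallel to \cref{eq:define-I} gives $\expectt{\Pi_\zeta^{(p_0)}}{L_1} = -(r/p_0)\left(\kldist{p_0}{p_1} + p_0\,\kldist{f_0}{f_1}\right) < 0$. Thus the CUSUM recursion $C_n = (C_{n-1} + L_n)^+$ is a reflected random walk with strictly negative ergodic drift, and a standard Chernoff/renewal argument yields $\expectt{\infty}{T} < \infty$ (in fact growing exponentially in $h$, but finiteness is all that is needed here).

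For the upper bound on ESADD, I would use the Lorden-type reduction that follows from the max representation $C_n = \max_{0 \le k \le n}\sum_{i=k+1}^n L_i$. Since $C_n \ge \sum_{i=\nu+1}^n L_i$ for every $n \ge \nu$, one obtains $(T - \nu + 1)^+ \le 1 + \tau_h$, where $\tau_h := \inf\{m \ge 0 : \sum_{i=\nu+1}^{\nu+m} L_i > h\}$, regardless of the realization of $(Q_1, Y_1^{\nu-1}, Z_1^{\nu-1})$. Under $\mathbb{P}_\nu$, the increments $L_{\nu+1}, L_{\nu+2}, \ldots$ are generated by the post-change Markov dynamics, so by the ergodic identity in \cref{eq:ergodic-limit-Li}, $\tau_h / h \to 1/I$ almost surely as $h \to \infty$. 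Upgrading this a.s.\ convergence to the uniform $L^1$ bound $\expectt{\nu}{\tau_h} \le h/I + o(h)$ (uniformly over the conditioning history), and then taking supremum over $\nu$ and essential supremum over the history, yields the claim.

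The main obstacle is precisely this uniform $L^1$ lifting in the presence of an arbitrary worst-case initial state $\zeta_{\nu+1}$. I would resolve it by invoking the geometric uniform ergodicity of $\{\zeta_k\}$ under $\mathbb{P}_0$, which holds because $\{Q_k\}$ is a $\text{Geom}/\text{Geom}/1$ queue with load $r/p_1 < 1$ by \cref{assump:sampling-rate-sane} and the auxiliary variables $(Y_k, Z_k)$ are conditionally independent of past observations given $Q_k$. Combined with exponential moments of $L_i$ (the channel-service term is bounded, and standard moment hypotheses are imposed on $\log(f_1/f_0)$), this yields a Cramer-type tail bound $\mathbb{P}_\nu(\tau_h > (1+\epsilon)h/I) \le e^{-c(\epsilon)h}$ that is uniform in the initial state, integrates to the required expectation bound, and absorbs the worst-case initial-state dependence into the $o(h)$ remainder. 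Equivalently, one can appeal to \cite[Theorem 4]{laiInformationBoundsQuick1998}, which packages exactly this lifting step for processes whose post-change log-likelihood increments satisfy an ergodic law of large numbers together with a suitable rate-of-convergence (complete-convergence) condition, both of which hold here by the preceding arguments.
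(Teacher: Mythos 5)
Your proposal is correct in substance, but your primary route is genuinely different from, and more demanding than, what the paper actually does. The paper's proof of \cref{thm:upper-bound-cusum} is a direct application of \cite[Theorem~4]{laiInformationBoundsQuick1998}: it verifies only the in-probability condition $\lim_{n\to\infty}\sup_x \mathbb{P}_1\{\sum_{i=1}^n L_i < (I-\delta)n \mid \zeta_0 = x\} = 0$, which it gets from the ergodic limit in \cref{eq:ergodic-limit-Li} together with the observation (borrowed from the proof of \cref{thm:lower-bound-cusum}) that the dependence on the initial state $x$ enters $L_1$ only through bounded channel terms and indicators, so the convergence is uniform in $x$. Your main argument instead runs Lorden's explicit decomposition $(T-\nu+1)^+ \leq 1+\tau_h$ and then lifts $\tau_h/h \to 1/I$ to a uniform $L^1$ bound via geometric ergodicity and a Cram\'er-type tail estimate. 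That route is more self-contained and gives stronger (exponential-rate) control, but it quietly requires exponential moments of $\log\left(f_1(Z)/f_0(Z)\right)$ — an assumption the paper never imposes and that fails for natural heavy-tailed choices of $(f_0,f_1)$; Lai's Theorem~4 needs only the uniform in-probability law of large numbers, which is why the paper takes that path. Your closing sentence correctly identifies this fallback, and had you led with it you would have reproduced the paper's proof. One point in your favor: you supply an actual argument for $\expectt{\infty}{T}<\infty$ (negative ergodic drift $\expectt{\Pi_\zeta^{(p_0)}}{L_1}<0$ of the reflected walk under $\mathbb{P}_\infty$, plus a regeneration argument), whereas the paper's appendix asserts this part of the theorem without proof; your drift computation is correct. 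Neither you nor the paper addresses the discrepancy introduced by setting $S_\nu=0$ in the implemented statistic versus the $L_i$ of \cref{eq:define-Li}, so that omission is not held against you.
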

\begin{proof}[Proof sketch]
  The theorem uses an upper bound in
  \cite[Theorem. 4]{laiInformationBoundsQuick1998} on the asymptotic ESADD
  for the \CUSUM{} algorithm under certain assumptions. See
  Appendix~\ref{appendix:Proof of thm:upper-bound-cusum} for a detailed proof.
\end{proof}
Thus, we have proved that the \CUSUM{} algorithm that uses a threshold
$h = \log\gamma$, where $\gamma \to \infty$ achieves the lower bound,
and is hence an asymptotically optimal sequential detection algorithm.
For a sufficiently large threshold $h$, and $I$ defined as in
\cref{eq:define-I}, the detection rule thus has ESADD
\begin{equation}\label{eq:esadd-cusum}
  \barexpectt{1}{T} \approx \frac{h}{I} \left( 1 + o(1) \right).
\end{equation}

\subsection{Extension to periodic sampling}
\label{sec:extension-periodic-sampling}
In \cref{sec:system-model}, we assumed that the measurements are
sampled such that there is at most one sample in any channel slot, the
sampling is i.i.d.\ across slots, and the probability of a packet
arrival at the transmit queue at each slot is given by the parameter
$r$. Suppose instead that we consider that the sample measurements are
produced by a periodic sampler, with sampling interval $s =
1/r$. Then, there is a packet arrival at the transmit queue once every
$s$ slots.  We first note that since the packet arrivals are
independent of the change, \cref{lem:log-likelihood-computation} holds
for this case, and the DM uses the same \CUSUM{} algorithm defined in
\cref{eq:cusum-definition} in this case too.

The analysis in \cref{sec:asymptotic-analysis-geom-geom-one-model}
crucially uses the Markov property of $\zeta_k$. To preserve the
Markovian property in this case, we augment the state space of
observations by defining
$\zeta_k = \left( Q_k, V_k, Y_k, Z_k \right)$, where
$V_k, k\in\mathbb{N}$ counts the number of slots to the next packet
arrival. Note that given $V_1$, and the sampling interval, $V_k$ can
be computed for all $k>1$. Hence, the log-likelihood ratio of $\zeta_1^n$, given $Q_1, V_1$, under $\mathbb{P}_\nu$ versus $\mathbb{P}_\infty$ is
\begin{displaymath}
  \ell_{\nu, n} = \log \frac{\mathbb{P}_\nu\left(
\zeta_1^n \mid Q_1, V_1\right)}{\mathbb{P}_\infty\left( \zeta_1^n \mid Q_1, V_1\right)} = \log
\frac{\mathbb{P}_\nu\left( Z_1^n, Y_1^n \mid Q_1, V_1\right)}{\mathbb{P}_\infty\left( Z_1^n, Y_1^n \mid Q_1, V_1   
\right)}.
\end{displaymath}
It is easy to see that the analysis in
\cref{sec:asymptotic-analysis-geom-geom-one-model} holds for this case
after augmenting the state space of observations.

\subsection{Extension to lossy in-order reception}
\label{sec:lossy-order-recept}
In this section, we study the setting in which the transmit queue
limits the number of attempts to retransmit unacknowledged measurement
packets to $K$. After $K$ unsuccessful attempts, the packet is removed
from the transmit queue. In this case, the DM does not receive all the
measurement samples that the sensor node gets. The DM must make a
decision on whether a change has occurred, based on the samples that
it has received. We will demonstrate that our analysis framework that
we developed in \cref{sec:infin-retr-in-order-reception} can be
extended to this case.

The log-likelihood ratio of the observations can easily be shown to be
the same as developed in \cref{sec:infin-retr-in-order-reception}, and
using the fact that the packet arrivals are independent of the change,
\cref{lem:log-likelihood-computation} holds. The DM uses the \CUSUM{}
algorithm defined in \cref{eq:cusum-definition} with the \CUSUM{}
update $L_i$ at the end of the slot $i$ as defined in
\cref{eq:define-Li}. The observations
$\left\{ \zeta_k = \left( Q_k, Y_k, Z_k \right): k \geq 1 \right\}$
are Markov and hence the analysis in
\cref{sec:asymptotic-analysis-geom-geom-one-model} is applicable in
this setting. The quantity $I$ defined in \cref{eq:define-I} has the
same form in this case, with the exception of
$\mathbb{P}_0\left( Q_1 > 0 \right)$ being equal to the probability of the
transmit queue being non-empty when the maximum number of
retransmissions per measurement packet is capped at $K$. This quantity
can be computed numerically for a given $p_1$ and $K$. The asymptotic
ESADD of the \CUSUM{} algorithm continues to have the same form as in
\cref{eq:esadd-cusum}.

When number of retransmissions $K=1$, the transmitter described in
\cref{sec:system-model} uses a best-effort service and does not
attempt retransmissions of failed packet transmission. There is at
most one new measurement in a slot and there are no retransmissions. Hence, $Q_k \in \left\{ 0, 1 \right\}$ and $Q_k$ is i.i.d.\ when the packet arrival process is Bernoulli. Further, for a packet arrival rate $r$, $\mathbb{P}_0\left( Q_1 > 0 \right) = r$. \Cref{eq:define-I} then takes the form $ I = \expectt{1}{L_1} = r \left( \kldist{p_1}{p_0} +
  p_1\kldist{f_1}{f_0}\right)$. Further, the process
$\left\{ \zeta_k = \left( Q_k, Y_k, Z_k\right): k \in \N \right\}$ is i.i.d.\ before and after the change point $\nu$, and hence the usual \CUSUM{} calculations \cite{tartakovskySequentialAnalysisHypothesis2015} apply. The asymptotic ESADD of the \CUSUM{} algorithm continues to have the same form as in \cref{eq:esadd-cusum}.

If in addition to $K=1$, the channel loss process $\{Y_k\}$ has the same distribution before and after the change point, i.e., $p_0 = p_1$, it is easy to see that the log-likelihood ratio reduces to the form $\ell_{\nu, n} = \sum_{i=\nu+1}^n\indicator{Q_i > 0, Y_1 =
  1}\log\frac{f_1\left( X_{D_i} \right)}{f_0\left( X_{D_i} \right)}$ and that the \CUSUM{} update $L_i$ is zero for all lost observations. The quantity $I$ in \cref{eq:define-I} is equal to $I = \expectt{1}{L_1} = rp_1\kldist{f_1}{f_0}.$

\section{The effect of transmission queue discipline}
\label{sec:non-fcfs-discipline}
In this section, we consider various service disciplines being used in the transmitter queue, for example First-Come-First-Served (FCFS), LCFS transmit queue disciplines.  Intuitively, we can expect that the LCFS service discipline will present the DM with post-change measurements earlier than FCFS.  We assume, as in \cref{sec:infin-retr-in-order-reception}, that all packets are attempted for retransmission until success, and that the transmit queue buffer is large enough such that it never drops any packet until the packet has been successfully delivered at the DM. Hence, the DM receives all the measurements sampled by the sensor node.

Recall that $J_k$ denotes the sample index of the measurement successfully received at the DM in slot $k$. The transmit queue operates under one of the following service disciplines. 
\begin{inparaenum}[(a)]
    \item \textbf{FCFS:} If packets are successfully received in slots $i$ and $j$, then $J_i < J_j$ whenever $i < j$; that is, departures occur in the same order as arrivals.  
    \item \textbf{Non-FCFS:} The discipline is non-idling; the transmitter always attempts a packet when the queue is nonempty. Preemption is allowed, so after one packet’s transmission attempt, the next may not be the same packet. For the same arrival and channel-loss sequence, the queue length and departure processes remain identical to those under FCFS; only the order of departures differs. An example is the LCFS discipline, where the most recently arrived packet is transmitted first.
\end{inparaenum}

Our problem, as before, is to design a sequential detection algorithm for QCD using the observations
$\left\{ Y_k, Z_k, J_k: k = 1, 2, \ldots \right\}$. Denote by $\mathcal{F}_t = \sigma\left( Y_k, Z_k, J_k; 1\leq k \leq t \right)$, the $\sigma$-algebra generated by the observations available at the DM at end of slot $t$. We seek to find an $\mathcal{F}_t -$stopping time $T^\ast$ such that

\small
\begin{displaymath}
  \barexpectt{1}{T^\ast} = \inf_{T \in \mathcal{C}_\gamma}\underbrace{\sup_{\nu \geq 1} \esssup \expectt{\nu}{{\left( T - \nu + 1 \right)}^+ \mid Y_1^{\nu-1}, Z_1^{\nu-1}, J_1^{\nu-1}}}_{\barexpectt{1}{T}},
\end{displaymath}
\normalsize
where $\mathcal{C}_\gamma = \left\{ T: \expectt{\infty}{T} \geq \gamma \right\}$ is the set of all $\mathcal{F}_t$--stopping times with a $\text{ARL2FA} \geq \gamma$. Note that $J_1^k$ represents the sequence of sampling indices of all successfully received measurements up to slot $k$.

\subsection{Log likelihood ratio analysis}
\label{sec:fifo-lifo-log-likelihood-ratio-analysis}
We assume that the transmitter queue length at the beginning of the
first slot, $Q_1$, is known a priori to the DM. The packets already in
the transmitter queue, before the QCD process starts, will need to be
discarded by the DM. The log-likelihood ratio of $\mathbb{P}_\nu$
vs. $\mathbb{P}_\infty$, based on observations at the DM at time $n$, is
\begin{displaymath}
 \ell_{\nu, n} = \log \frac{\mathbb{P}_\nu\left( Y_1^{n}, Z_1^{n}, J_1^{n}\mid Q_1
  \right)}{\mathbb{P}_\infty\left( Y_1^n, Z_1^{n}, J_1^{n} \mid Q_1 \right)}. 
\end{displaymath}
\begin{lemma}
  \label{lem:non-fifo-log-likelihood-ratio}
  Suppose that by time $n$, the samples corresponding to the indices
  $k_1, k_2, \ldots, k_{D_n}$ are received at the DM, then
  \begin{align}
    \label{eq:log-likelihood-ratio}
    \ell_{\nu,n} &= \log\frac{\mathbb{P}_\nu\left( Y_1^n \mid Q_1 \right)}{\mathbb{P}_\infty\left( Y_1^n \mid Q_1 \right)} + \log\frac{\mathbb{P}_\nu\left( Z_1^n \mid Q_1, Y_1^n, J_1^n \right)}{\mathbb{P}_\infty\left( Z_1^n \mid Q_1, Y_1^n, J_1^n \right)} \nonumber \\
    &= \log\frac{\mathbb{P}_\nu\left( Y_1^n \mid Q_1 \right)}{\mathbb{P}_\infty\left( Y_1^n \mid Q_1 \right)} + \log\frac{\mathbb{P}_\nu\left( X_{k_1}, \ldots, X_{k_{D_n}}\right)}{\mathbb{P}_\infty\left( X_{k_1}, \ldots, X_{k_{D_n}} \right)}.
  \end{align}
\end{lemma}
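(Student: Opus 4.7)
The plan is to establish the identity in two steps, first by factoring the joint density of $(Y_1^n, Z_1^n, J_1^n)$ given $Q_1$ via the chain rule and cancelling common factors, and then by reducing the conditional distribution of $Z_1^n$ given $(Q_1, Y_1^n, J_1^n)$ to a product over the actually-received measurement samples.

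For the first equality, write
\begin{displaymath}
\mathbb{P}_\bullet(Y_1^n, Z_1^n, J_1^n \mid Q_1) = \mathbb{P}_\bullet(Y_1^n \mid Q_1)\, \mathbb{P}_\bullet(J_1^n \mid Q_1, Y_1^n)\, \mathbb{P}_\bullet(Z_1^n \mid Q_1, Y_1^n, J_1^n),
\end{displaymath}
for $\bullet \in \{\nu, \infty\}$. I would then argue that the middle factor cancels in the ratio. The key observation is that, given $Q_1$ and $Y_1^n$, the queue length process $\{Q_k\}$ and the vector of sampling indices $J_1^n$ are determined by the packet arrival process $\{A_k\}$ and the (deterministic) service discipline of the transmit queue. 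Since the arrival process is independent of the change point (\cref{assump:networked-qcd-chan-ind-sampling} plus the modelling of $\{A_k\}$ as a change-independent Bernoulli/periodic process in \cref{sec:system-model}), the law of $J_1^n$ given $(Q_1, Y_1^n)$ coincides under $\mathbb{P}_\nu$ and $\mathbb{P}_\infty$. Dividing the two factorizations yields the first line of \cref{eq:log-likelihood-ratio}.

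For the second equality, I would unpack the last factor. On slots $k$ with $Y_k = 1$, the received symbol satisfies $Z_k = X_{J_k}$, while on slots with $Y_k \in \{0,\emptyset\}$ we have $Z_k = *$ deterministically. Hence, conditionally on $(Q_1, Y_1^n, J_1^n)$, the vector $Z_1^n$ is a deterministic rearrangement of the successfully transmitted samples $(X_{k_1}, \ldots, X_{k_{D_n}})$, where $k_1, \ldots, k_{D_n}$ are the indices in $J_1^n$ that are not $*$. Consequently
\begin{displaymath}
\mathbb{P}_\bullet(Z_1^n \mid Q_1, Y_1^n, J_1^n) = \mathbb{P}_\bullet(X_{k_1}, \ldots, X_{k_{D_n}}),
\end{displaymath}
with the right-hand side being the unconditional product density because the $X_j$ are i.i.d.\ within each regime and independent of the queueing and channel processes. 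Substituting this into the first equality delivers the displayed form.

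The only step requiring care, and the likely main obstacle in a fully rigorous write-up, is the change-of-measure cancellation $\mathbb{P}_\nu(J_1^n \mid Q_1, Y_1^n) = \mathbb{P}_\infty(J_1^n \mid Q_1, Y_1^n)$. This requires verifying that no randomness in $J_1^n$ beyond the arrival process and $(Q_1, Y_1^n)$ itself is injected by the non-idling, possibly preemptive discipline, so that its conditional law does not depend on $\nu$. Once this measurability/independence claim is established, the rest of the argument is a routine application of the chain rule and the fact that the $X_j$ are independent of $(Q_1, Y_1^n, J_1^n)$ given their sampling times.
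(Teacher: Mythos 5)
Your proposal is correct and follows essentially the same route as the paper's proof: a chain-rule factorization of $\mathbb{P}_\bullet(Y_1^n, Z_1^n, J_1^n \mid Q_1)$, cancellation of the $\mathbb{P}_\bullet(J_1^n \mid Q_1, Y_1^n)$ factor because the arrival and scheduling processes are change-independent, and reduction of the remaining $Z$-term to the joint law of the received samples $X_{k_1},\ldots,X_{k_{D_n}}$. The only cosmetic difference is that the paper carries out the factorization slot by slot (telescoping over $k$, isolating $P(J_k \mid Y_1^k, J_1^{k-1})$), whereas you factor the whole block at once; the underlying independence claims, including the one you correctly flag as the delicate step, are identical.
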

\begin{proof}[Proof sketch]
The proof uses the fact that the arrival processes under both the transmit queue disciplines are the same, and follows the steps of the proof of \cref{lem:log-likelihood-computation}. See Appendix~\ref{appendix:Proof of lem:non-fifo-log-likelihood-ratio} for a detailed proof.
\end{proof}

The DM rearranges the measurements $\left( Z_1, Z_2, \ldots, Z_k \right)$ in an increasing order of arrival index as
$\left(Z_k^{\left( 1 \right)}, Z_k^{\left( 2 \right)}, \ldots,
  Z_k^{\left( k \right)} \right)$. At the end of each time slot $k$, for
$\nu < i \leq k$, define
\begin{align}
  \label{eq:non-fifo-llr-defn}
  L_i\left( k \right) &= \indicator{ Q_i > 0 }\left( \log \frac{\mathbb{P}_\nu\left( Y_i \mid Q_i > 0 \right)}{\mathbb{P}_\infty\left( Y_i \mid Q_i > 0\right)} \right. \nonumber\\
  &\qquad + \left. \indicator{Y_i = 1, D_i > Q_1 + S_\nu }\log\frac{f_1\left( Z_{k}^{\left( i \right)} \right)}{f_0\left( Z_k^{\left( i \right)} \right)} \right).
\end{align}
From \cref{lem:non-fifo-log-likelihood-ratio} and the proof of
\cref{lem:log-likelihood-computation}, it is clear
that,$\ell_{\nu, k} = \sum_{k = \nu+1}^k L_i\left( k \right)$. Again, following Discussion~\ref{disc:S-nu}, we take $S_{\nu}=0$. But it is
important to note that the terms $L_i\left( k \right)$ must be
recomputed for all $i \leq k$ at the end of each time slot $k$. The
\CUSUM{} rule, in general, cannot be written as a recursion when the
DM receives measurement samples out-of-order.
The DM uses the detection rule, an $\mathcal{F}_t$-stopping
time $T$, defined as
\begin{equation}
  \label{eq:non-fifo-cusum-definition}
  T(h) = \min\left\{ n\in \N: C_n > h, C_n = \max_{1\leq j \leq n} \ell_{j, n} \right\},
\end{equation}
where the threshold $h$ is tuned to achieve the target
$\text{ARL2FA} = \gamma$.

\begin{discussion}
  The indices $J_1^k$ are contiguous integers and in a
  strictly increasing order if a FCFS queuing discipline is used at
  the transmit queue. If a different queuing discipline is used at the
  transmit queue, then $J_1^k$ is not a sequence of increasing
  integers. When the FCFS transmit discipline is used, then the
  $i$\textsuperscript{th} ordered sample at the end of slot $k$,
  $Z_k^{\left( i \right)} = Z_i$, for each $i \leq k$ and each
  $k\in \N$.  Further, when the FCFS transmit discipline is used at the transmit queue, \cref{eq:non-fifo-llr-defn} reduces to
  \cref{eq:define-Li}, and \cref{eq:non-fifo-cusum-definition} reduces to \cref{eq:cusum-definition} and hence the CUSUM statistic can be computed recursively.
\end{discussion}

\begin{discussion}
The false alarm performance of the \CUSUM{} algorithm in
\cref{eq:non-fifo-cusum-definition} is measured by the ARL2FA
$\expectt{\infty}{T(h)}$. Under the probability law $\mathbb{P}_\infty$, note
that at the end of each time slot $k$,
\begin{displaymath}
\left( Z_1, Z_2, \ldots, Z_i \right) \,{\buildrel d \over =}\, \left(
  Z_k^{\left( 1 \right)}, Z_k^{\left( 2 \right)}, \ldots, Z_k^{\left(
      i \right)} \right)  \forall i\leq k,
\end{displaymath}
where $\,{\buildrel d \over =}\,$ is used to denote equality in
distribution. This is because, while the departure instants of the
measurements may differ depending upon the queue discipline in use at
the transmitter queue, all the measurement samples are generated from
the pre-change distribution. Hence, the properties of the stopping
time remain unchanged under the $\mathbb{P}_\infty$ probability law for any
transmit queue discipline used. Consequently, the threshold $h$ to achieve a target ARL2FA is the same for all queue disciplines.  \qed
\end{discussion}

In the next section, we analyze the effect of 
transmit queue disciplines on the detection performance of the \CUSUM{}
algorithm defined in \cref{eq:non-fifo-cusum-definition}, under the
assumption that the pre-change and post-change distributions are
likelihood ratio ordered.

\subsection{Analysis of detection performance under a condition of likelihood ratio ordering}
\label{sec:non-asympt-analys-non-fcfs}

We will require the notion of likelihood ratio ordering for the
analysis of the detection performance under different queuing
disciplines.
\begin{definition}[LR ordering \cite{shakedStochasticOrders2007}]
  \label{defn:lr-ordering}
  Suppose that $g\left( \cdot \right), f\left( \cdot \right)$ are
  probability density functions. Then, the likelihood ratio ordering
  $f \lrleq g$ is said to exist if for every $x \leq y$ in the union
  of the support of $f\left( \cdot \right)$ and
  $g\left( \cdot \right)$,
  \[
    g(x) f(y) \leq f(x) g(y).
  \]
\end{definition}
The following order will also be of relevance.
\begin{definition}[Stochastic ordering \cite{shakedStochasticOrders2007}]
  We say that the stochastic ordering $X \stochleq Y$ exists between
  two random variables $X$ and $Y$ if for all
  $x \in \left( -\infty, \infty \right)$, we have
  \begin{displaymath}
    P\left( X > x \right) \leq P\left( Y > x \right).
  \end{displaymath}
  In words, we say that $X$ is smaller than $Y$ in the stochastic
  order.\qed
\end{definition}
LR ordering implies stochastic ordering \cite[Theorem.~1.C.1]{shakedStochasticOrders2007}, and thus LR ordering is stronger.

\begin{assumption}\label{assump:lr-ordering-of-dists}
We assume that the pre-change and post-change distributions $f_0$ and $f_1$ are likelihood ratio ordered, i.e.,
\begin{equation}\label{eq:likelihood-ratio-ordering}
  f_0\underset{LR}{\leq}f_1.
\end{equation}
\end{assumption}

For example, if $f_0 = \mathcal{N}(0,1)$ and $f_1 = \mathcal{N}(\mu,1)$ with $\mu > 0$, then $f_0 \leq_{LR} f_1$ since $\frac{f_1(x)}{f_0(x)} = \exp(\mu x - \mu^2/2)$ increases monotonically with $x$, and Assumption~\ref{assump:lr-ordering-of-dists} is satisfied.

Under \cref{assump:lr-ordering-of-dists}, the following lemma follows.
\begin{lemma}
  \label{lem:lr-ordering-of-samples}
  For $i, j\in \mathbb{N}, i \leq j$, the samples corresponding to the
  sampling indices $i, j$ have the following likelihood ratio
  ordering:
  \begin{displaymath}
    X_i \lrleq X_j.
  \end{displaymath}
\end{lemma}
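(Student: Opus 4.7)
My plan is to reduce the statement to a monotone-likelihood-ratio property of mixtures of $f_0$ and $f_1$, where the mixing weights come from the probability that the sampling instant exceeds the change point $\nu$.

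First I would observe that the case $i=j$ is trivial, so take $i<j$. Since the sensor samples are numbered in the order of generation with at most one sample per slot, the sampling times satisfy $t_i < t_j$ almost surely; in particular $\{t_j \leq \nu\} \subseteq \{t_i \leq \nu\}$, so $p_i \coloneqq \mathbb{P}(t_i > \nu) \leq \mathbb{P}(t_j > \nu) \eqqcolon p_j$ for the (deterministic) change point $\nu$. Conditioning on whether $t_k \leq \nu$, the marginal density of $X_k$ is the mixture
\begin{displaymath}
g_k(x) = (1-p_k)\,f_0(x) + p_k\,f_1(x).
\end{displaymath}
So the claim $X_i \lrleq X_j$ becomes the LR-ordering $g_i \lrleq g_j$, and the $\nu$-dependence has been distilled into the scalars $p_i \leq p_j$.

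Next, by \cref{defn:lr-ordering} it suffices to show that the ratio $g_j(x)/g_i(x)$ is non-decreasing in $x$. Writing $r(x) \coloneqq f_1(x)/f_0(x)$, which is non-decreasing in $x$ by \cref{assump:lr-ordering-of-dists}, I would factor out $f_0(x)$ to get
\begin{displaymath}
\frac{g_j(x)}{g_i(x)} \;=\; \frac{(1-p_j) + p_j\, r(x)}{(1-p_i) + p_i\, r(x)}.
\end{displaymath}
The map $u \mapsto \bigl((1-p_j) + p_j u\bigr)/\bigl((1-p_i) + p_i u\bigr)$ is a Mobius function of $u \geq 0$ whose derivative has numerator $p_j(1-p_i) - p_i(1-p_j) = p_j - p_i \geq 0$, so it is non-decreasing in $u$. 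Composing with the non-decreasing $r(\cdot)$ gives monotonicity in $x$, which is exactly $g_i \lrleq g_j$.

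I expect no serious obstacle: the only subtle point is justifying $p_i \leq p_j$, which rests on the fact that the sampling indices are assigned in the (strict) temporal order of arrivals at the transmit queue rather than in the order of reception at the DM; this is consistent with the definitions in \cref{sec:system-model}, where $t_j$ is the sampling time of the $j\nth$ measurement. Once that monotonicity of the mixing weights is in hand, the LR preservation under mixture is a short direct computation, and the lemma follows. The boundary cases $p_i = 0$ or $p_j = 1$ are handled by the same formula (interpreted as $g_i = f_0$ or $g_j = f_1$) and recover the primitive ordering $f_0 \lrleq f_1$.
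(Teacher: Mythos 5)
Your proof is correct, but it takes a different route from the paper's. The paper's proof is a three-line case analysis: it introduces $s_{\nu}$, the (realization of the) index of the first sample generated after the change point, and splits into the cases $i\leq j<s_{\nu}$, $i<s_{\nu}\leq j$, and $s_{\nu}\leq i\leq j$; in each case $X_i$ and $X_j$ are exactly $f_0$ or $f_1$ and the ordering is immediate from $f_0\lrleq f_1$ and reflexivity of $\lrleq$. That argument is really a statement \emph{conditional} on the realized sampling times (under Bernoulli sampling, $s_\nu$ is random), which is the form in which the lemma is subsequently used in the pathwise comparisons of \cref{lem:x-lr-ordering}. You instead work with the unconditional marginals, writing $X_k\sim g_k=(1-p_k)f_0+p_kf_1$ with $p_k=\mathbb{P}(t_k>\nu)$, observing $p_i\leq p_j$ from $t_i<t_j$ a.s., and verifying that the mixture family is LR-increasing in the mixing weight via the M\"obius-ratio computation. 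What your approach buys is a genuinely stronger and more carefully justified statement: LR ordering is not in general preserved by mixing over the conditioning variable, so the marginal version of the lemma does not follow from the paper's case analysis without exactly the mixture computation you supply; your version also covers the boundary cases uniformly. What the paper's approach buys is brevity and the conditional form that plugs directly into the later pathwise arguments. One small technical point in your write-up: factoring out $f_0(x)$ presumes $f_0(x)>0$; it is cleaner to verify $g_j(x)g_i(y)\leq g_i(x)g_j(y)$ directly for $x\leq y$, which expands to $(p_j-p_i)\bigl(f_0(x)f_1(y)-f_1(x)f_0(y)\bigr)\geq 0$ and holds by \cref{assump:lr-ordering-of-dists} without any division.
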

\begin{proof}
  Suppose that $s_{\nu}$ corresponds to the first sample generated
  after the change point $\nu$. Then, we have the following three
  cases:\\
  1) $i\leq j < s_{\nu}$. Here, $X_i, X_j \sim f_0$, hence $X_i \lrleq X_j$.\\
  2) $i < s_{\nu} \leq j$. Here, $X_i \sim f_0$ and $X_j \sim f_1$. Hence, $X_i \lrleq X_j$.\\
  3) $s_{\nu} \leq i \leq j$. In this case, $X_i, X_j \sim f_1$, hence $X_i \lrleq X_j$.
\end{proof}
Since there is no residual packet loss on the transmit link (i.e., all packets are reattempted until successful), every packet is eventually delivered to the DM, albeit with stochastic delay. At the end of each busy period, both FCFS and any alternate non-idling discipline deliver the same set of packets to the DM. Consequently, the log-likelihood ratio of received observations is identical across disciplines at busy-period completion. However, within a busy period, the packets received up to any slot may differ across disciplines. Hence, we can separate our analysis into (i) completed busy periods, and (ii) the ongoing busy period. In the following discussion we consider slots within a busy period.

Consider a busy period during which $n$ packets are delivered to the DM, indexed $1,2,\ldots,n$. Let $k_1, k_2, \ldots, k_n$ denote the sampling indices of packets received in order of reception; this sequence is just a permutation of $(1,2,\ldots,n)$. At some intermediate slot $i<n$ (i.e., before the busy period ends), the DM under a non-FCFS discipline has received samples with indices $k_1,\ldots,k_i$, whereas under FCFS it would have received $1,\ldots,i$. Suppose that $k_1, \ldots, k_i$ are rearranged in ascending order as $k_i^{\left( 1 \right)}, \ldots, k_i^{\left( i \right)}$. Here $k_i^{(j)}$ denotes the $j$-th smallest sampling index among the first $i$ received packets. Then, the following holds:
\begin{lemma}
  \label{lem:indices-ordering}
  For all $i \leq n$, $k_i^{\left( j \right)} \geq j$, for
  $j = 1, 2, \ldots, i$.
\end{lemma}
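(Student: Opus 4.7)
The claim reduces to a purely combinatorial statement once the setup is unpacked, so my plan is to strip away the queueing language and argue by a simple counting / pigeonhole step. The key observation is that, by the preamble to the lemma, the received-order sequence $k_1,k_2,\ldots,k_n$ is a \emph{permutation} of $(1,2,\ldots,n)$, and therefore the partial set $\{k_1,\ldots,k_i\}$ is a subset of $\{1,2,\ldots,n\}$ of cardinality exactly $i$, consisting of distinct positive integers. The sorted sequence $k_i^{(1)}<k_i^{(2)}<\cdots<k_i^{(i)}$ is then just the ascending enumeration of this subset.

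The plan is then to fix an arbitrary $j\in\{1,\ldots,i\}$ and look at the first $j$ elements $k_i^{(1)}<k_i^{(2)}<\cdots<k_i^{(j)}$. These are $j$ distinct positive integers and they are all less than or equal to $k_i^{(j)}$. If one assumes for contradiction that $k_i^{(j)}<j$, then these $j$ distinct positive integers all lie in $\{1,2,\ldots,j-1\}$, a set of cardinality $j-1$, which is impossible. Hence $k_i^{(j)}\geq j$, as required. Equivalently, since $k_i^{(1)}\geq 1$ and the sequence is strictly increasing in the integers, a one-line induction gives $k_i^{(j)}\geq k_i^{(j-1)}+1\geq (j-1)+1=j$.

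I do not anticipate any real obstacle: the causality structure of the non-idling discipline (that a packet can only depart after it has arrived) and the specific dynamics of transmissions within the busy period are not needed for this lemma; only the facts that (i) the busy period contains exactly $n$ packets, relabeled by arrival order as $1,\ldots,n$, and (ii) every packet is eventually delivered, so the reception order is a permutation of $(1,\ldots,n)$, are used. The content of \cref{lem:indices-ordering} is therefore essentially the statement ``the $j$-th order statistic of $i$ distinct elements drawn from $\{1,2,\ldots,n\}$ is at least $j$,'' and it is this combinatorial inequality that will be invoked together with \cref{lem:lr-ordering-of-samples} in the subsequent LR-ordering comparison between FCFS and non-FCFS disciplines.
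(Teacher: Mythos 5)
Your proof is correct and rests on the same combinatorial fact as the paper's: a collection of $j$ distinct positive integers must contain an element of size at least $j$. The paper packages this as Lemma~\ref{lem:max-of-permuted-set} and applies it recursively by peeling off the largest element of $S_i=\{k_1,\ldots,k_i\}$, whereas you apply the pigeonhole directly to the bottom $j$ order statistics (or the strictly-increasing-integers induction), which is a slightly more direct presentation of the same argument.
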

This follows immediately on making the following observation:
\begin{lemma}
  \label{lem:max-of-permuted-set}
  For any set of unique integers $S\subseteq \left\{ 1, 2, \ldots, n \right\}$, such that $\left| S \right| = k$,
  \begin{displaymath}
    \max_{i \in S} i \geq k
  \end{displaymath}
\end{lemma}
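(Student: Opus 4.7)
The statement is an elementary combinatorial fact, and the plan is to prove it by a direct pigeonhole argument.

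The plan is to argue by contradiction. Suppose, for the sake of contradiction, that $\max_{i\in S} i < k$, i.e., $\max_{i\in S} i \leq k-1$. Then every element of $S$ is a positive integer no larger than $k-1$, so $S \subseteq \{1, 2, \ldots, k-1\}$. But $|\{1, 2, \ldots, k-1\}| = k-1 < k = |S|$, which is impossible. Hence $\max_{i \in S} i \geq k$.

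Equivalently, and perhaps more cleanly, I would enumerate $S = \{s_1, s_2, \ldots, s_k\}$ with $s_1 < s_2 < \cdots < s_k$ and observe by a trivial induction that $s_j \geq j$ for each $j$: the base case $s_1 \geq 1$ follows because $s_1$ is a positive integer, and for the step, $s_{j+1} > s_j \geq j$ together with integrality gives $s_{j+1} \geq j+1$. Specialising to $j = k$ gives $\max_{i\in S} i = s_k \geq k$.

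There is no genuine obstacle here; the only thing to be careful about is ensuring the argument is phrased in terms of distinct positive integers (which is exactly the hypothesis \emph{unique integers} in $\{1, 2, \ldots, n\}$), since the conclusion fails for multisets. Once the proof of \cref{lem:max-of-permuted-set} is in place, \cref{lem:indices-ordering} follows immediately by applying it to the set $\{k_i^{(1)}, k_i^{(2)}, \ldots, k_i^{(j)}\}$ of cardinality $j$, which is a subset of distinct positive integers in $\{1, \ldots, n\}$, so its maximum $k_i^{(j)}$ is at least $j$.
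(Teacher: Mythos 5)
Your first argument is exactly the paper's proof: assume $\max_{i\in S} i < k$, conclude $S\subseteq\{1,\ldots,k-1\}$ so $|S|<k$, contradiction. The alternative enumeration argument and the remark about applying the lemma to obtain \cref{lem:indices-ordering} are both consistent with how the paper proceeds, so the proposal is correct and matches the paper's approach.
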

\begin{proof}
  Suppose not. Let $j = \max_{i \in S} i < k$, then $\left| S \right| < k$, which is a contradiction to the hypothesis that $\left| S \right| = k$.
\end{proof}
\begin{proof}[Proof of \cref{lem:indices-ordering}]
  Define the set $S_i = \left\{k_1, \ldots, k_i\right\}$, the set of the indices of the received samples at the DM at the end of the $i$\textsuperscript{th} slot in the busy period. Now, use \cref{lem:max-of-permuted-set} to obtain $\max_{j \in S_i} j = k_i^{\left(i\right)} \geq i$. Recursively apply \cref{lem:max-of-permuted-set} for each $j = 1, 2, \ldots, i-1$ by putting $S_i^{\left( i-j \right)} = S_i \setminus \left\{ k_i^{\left( i-j+1\right)} \right\}$ to obtain $\max_{j\in S_i^{\left( i-j \right)}}j = k_i^{\left( i-j \right)} \geq i - j$.
\end{proof}
Suppose that in the busy period during which $n$ packets are communicated to the DM, under the LCFS transmit queue discipline, the sample indices of the measurements received at the DM at the end of the reception slot $i < n$ are denoted by $\lambda_1, \lambda_2, \ldots, \lambda_i$. Suppose that the $\lambda_1, \lambda_2, \ldots, \lambda_i$ are rearranged in ascending order as $\lambda_i^{\left( 1 \right)}, \lambda_i^{\left( 2 \right)}, \ldots, \lambda_i^{\left( i \right)}$. While Lemma~ \ref{lem:indices-ordering} describes a general ordering property for any non-FCFS discipline, the following result specifically is for the LCFS policy, showing that LCFS always serves newer samples (larger sampling indices) sooner.
\begin{lemma}\label{lem:indices-ordering-lcfs}
  The LCFS queue discipline satisfies the following: for each $j \leq i, i\leq n$,
\begin{displaymath}
    \lambda_i^{\left( j \right)} \geq k_i^{\left( j \right)}.
\end{displaymath}
\end{lemma}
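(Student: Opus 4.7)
My plan is to prove the claim by induction on $i$, working with an equivalent reformulation of the dominance that is easier to propagate. Rather than comparing the sorted orders directly, I would show that for every integer threshold $t$, the number of LCFS-delivered sampling indices exceeding $t$ is at least the number of alternate-discipline-delivered sampling indices exceeding $t$; a standard sorted-sequence argument shows this upper-tail dominance is equivalent to $\lambda_i^{(j)} \geq k_i^{(j)}$ for all $j \leq i$.

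The first observation, which I would use throughout, is that because both disciplines are non-idling and experience the same arrival and channel-loss sequences within the busy period, their queue-length trajectories coincide (cf.\ the discussion preceding \cref{lem:indices-ordering}). Consequently, deliveries occur in the same slots under LCFS and under the alternate discipline; the delivered sets therefore always have equal cardinality and grow by exactly one element at each delivery slot, whereas on non-delivery slots the inequality is inherited unchanged from the previous slot.

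The inductive step handles a delivery slot. Let $\ell^\ast$ and $k^\ast$ be the sampling indices delivered by LCFS and by the alternate discipline, respectively, both drawn from the common set of arrivals present in the queue at slot $i$. One then splits on the location of the threshold $t$ relative to $\ell^\ast$ and $k^\ast$: three of the four regions are routine, because either both tail counts grow, neither grows, or only the LCFS tail grows. The delicate case is $\ell^\ast \leq t < k^\ast$, in which the alternate discipline's tail count at threshold $t$ gains one but LCFS's does not; to preserve the inequality I would need the tail counts at slot $i-1$ to already differ by at least one at this threshold.

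The main obstacle is producing that strict inequality, and this is the only place where the LCFS-specific rule is genuinely used. The key fact is that, by the defining property $\ell^\ast = \max$ of the arrivals not yet delivered by LCFS, every arrival with sampling index strictly exceeding $\ell^\ast$ must already have been delivered under LCFS. Hence for any $t \geq \ell^\ast$, the LCFS tail count at slot $i-1$ equals the total count of arrivals-so-far with index exceeding $t$. But $k^\ast$ itself is such an arrival, since $k^\ast > t \geq \ell^\ast$, and it is not yet in the alternate discipline's delivered set (it is delivered only at slot $i$), so the alternate-discipline tail count at slot $i-1$ is smaller by at least one. This yields exactly the strict inequality required in the delicate case and closes the induction.
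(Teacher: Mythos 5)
Your proof is correct, and it is considerably more rigorous than what the paper provides: the paper's entire proof of \cref{lem:indices-ordering-lcfs} is the one-sentence assertion that LCFS always transmits the packet with the largest sampling index, leaving the reader to see why ``always serving the maximum'' implies componentwise dominance of the sorted delivered indices over \emph{every} competing non-idling discipline at \emph{every} intermediate slot. You supply exactly the missing argument. Your reformulation of $\lambda_i^{(j)} \geq k_i^{(j)}$ as upper-tail-count dominance at every threshold $t$ is the right equivalence (it is the standard characterization of componentwise dominance of equal-length sorted sequences), and the induction over slots correctly exploits the fact that both disciplines, being non-idling with the same arrivals and channel losses, deliver in the same slots, so the delivered sets stay equinumerous. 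The only genuinely LCFS-specific step --- that in the case $\ell^\ast \leq t < k^\ast$ the LCFS tail count at slot $i-1$ already exceeds the alternate one by at least one, because every arrival with index above $\ell^\ast$ is necessarily already LCFS-delivered while $k^\ast$ is such an arrival still undelivered by the alternate discipline --- is exactly where the ``serve the maximum'' property enters, and your accounting there is sound. What your approach buys is a proof that works uniformly against an arbitrary non-idling competitor (not just FCFS) and makes transparent why the greedy rule is extremal in this ordering; what the paper's terse appeal buys is brevity, at the cost of hiding the threshold-counting argument that actually justifies the claim. One could alternatively argue by a mirrored version of the recursive peeling in the proof of \cref{lem:indices-ordering} (peel off the maximum, which LCFS is guaranteed to have delivered), but your tail-count induction is cleaner and equally elementary.
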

\begin{proof}
  This holds because the LCFS transmit queue discipline, at each slot, transmits the packet corresponding to the measurement with the largest sampling index.
\end{proof}
The following lemma states a likelihood ratio ordering on the
rearranged observations under the different queuing disciplines.
\begin{lemma}
  \label{lem:x-lr-ordering}
  For each $j \leq i$, $i \leq n$,
  \begin{displaymath}
    X_j  \lrleq X_{k_i^{\left(j\right)}} \lrleq X_{\lambda_i^{\left( j \right)}}. 
  \end{displaymath}
  \hfill\(\Box\)
\end{lemma}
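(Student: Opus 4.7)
The statement to prove is a chain of two likelihood-ratio orderings, and the plan is to obtain each inequality by combining the sampling-index orderings already established with the pairwise LR ordering of the underlying measurements. So the proof reduces to a direct chaining argument rather than any new probabilistic estimate.

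For the first inequality $X_j \lrleq X_{k_i^{(j)}}$, I would invoke \cref{lem:indices-ordering}, which gives $k_i^{(j)} \geq j$, and then appeal to \cref{lem:lr-ordering-of-samples}, which guarantees $X_a \lrleq X_b$ whenever $a \leq b$ (this in turn relies on \cref{assump:lr-ordering-of-dists} together with the fact that the sampling index of a measurement monotonically determines whether it was drawn from $f_0$ or $f_1$). Substituting $a = j$ and $b = k_i^{(j)}$ yields $X_j \lrleq X_{k_i^{(j)}}$.

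For the second inequality $X_{k_i^{(j)}} \lrleq X_{\lambda_i^{(j)}}$, I would apply \cref{lem:indices-ordering-lcfs}, which states that under LCFS the ordered received indices dominate those of any other non-idling discipline, i.e., $\lambda_i^{(j)} \geq k_i^{(j)}$. Again by \cref{lem:lr-ordering-of-samples} with $a = k_i^{(j)}$ and $b = \lambda_i^{(j)}$, we conclude $X_{k_i^{(j)}} \lrleq X_{\lambda_i^{(j)}}$. Chaining the two inequalities gives the desired result.

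There is no real obstacle here beyond bookkeeping: the only subtlety worth flagging is that the $X_{k_i^{(j)}}$ and $X_{\lambda_i^{(j)}}$ are random variables whose indices are themselves random (they depend on the queueing and channel sample paths), so strictly speaking one should note that the LR ordering in \cref{lem:lr-ordering-of-samples} holds pointwise in the sampling indices. Since the densities $f_0, f_1$ are fixed and the index only selects between them, the LR inequality lifts without issue to the case of random indices by conditioning on $(k_i^{(j)}, \lambda_i^{(j)})$. I would include a short remark to that effect so that the chained LR ordering is unambiguous, and then close the proof.
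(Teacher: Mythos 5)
Your proposal is correct and follows essentially the same route as the paper, which likewise obtains the result by chaining \cref{lem:lr-ordering-of-samples} with the index bounds from \cref{lem:indices-ordering} and \cref{lem:indices-ordering-lcfs}. Your extra remark on conditioning to handle the randomness of the indices is a sensible clarification but does not change the argument.
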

\begin{proof}
This is a simple consequence of the LR ordering of the
distributions $f_0, f_1$, and
\cref{lem:lr-ordering-of-samples,lem:indices-ordering,lem:indices-ordering-lcfs}.
\end{proof}
Denote by $C_m^{(F)}$, $C_m^{(A)}$, and $C_m^{(L)}$ the \CUSUM{} statistics computed at the DM at the end of slot $m$ under FCFS, an alternate non-FCFS, and LCFS queue disciplines, respectively.
\begin{proposition}
  \label{prop:non-fcfs-cusums-lr-ordered}
  Under \cref{assump:lr-ordering-of-dists},
  \begin{displaymath}
    C_m^{\left( F \right)} \leq_{st} C_m^{\left( A \right)} \leq_{st} C_m^{\left( L \right)}, \quad m \geq 1.
  \end{displaymath}
\end{proposition}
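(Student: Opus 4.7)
The plan is to couple the three queueing disciplines on a common probability space and then lift a sample-level ordering to the CUSUM statistic via coordinatewise monotonicity. We construct a single space carrying the sample sequence $\{X_j\}_{j \geq 1}$, the arrival indicators $\{A_k\}$, and a sequence $\{U_k\}$ of iid channel-success bits; the channel outcome is set to $Y_k = U_k$ whenever $Q_k > 0$ and $Y_k = \emptyset$ otherwise. Because the queue length, the identities of slots in which successful transmissions occur, and the total reception count $D_m$ by slot $m$ all depend only on $\{A_k\}$ and $\{U_k\}$ and not on the service order, the processes $\{Q_k\}$, $\{Y_k\}$, and $D_m$ are pathwise identical across FCFS, the alternate non-FCFS discipline, and LCFS. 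Consequently the channel log-likelihood contributions in each $L_i(m)$, together with the indicator factors multiplying the measurement log-likelihood, are pathwise identical across the three disciplines; the only discipline-dependent content of $C_m$ is the measurement term $\log(f_1(Z_m^{(i)})/f_0(Z_m^{(i)}))$.

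Next, conditioning on a realization of $(\{A_k\}, \{U_k\})$, the three sorted-received-sampling-index vectors $(1, 2, \ldots, D_m)$ for FCFS, $(k_m^{(1)}, \ldots, k_m^{(D_m)})$ for the alternate non-FCFS, and $(\lambda_m^{(1)}, \ldots, \lambda_m^{(D_m)})$ for LCFS become deterministic. They share the common length $D_m$ and satisfy the componentwise inequalities $j \leq k_m^{(j)} \leq \lambda_m^{(j)}$ by \cref{lem:indices-ordering,lem:indices-ordering-lcfs}. Invoking \cref{assump:lr-ordering-of-dists} together with \cref{lem:x-lr-ordering} then yields the marginal likelihood-ratio, hence stochastic, ordering $X_j \stochleq X_{k_m^{(j)}} \stochleq X_{\lambda_m^{(j)}}$ for each $j \leq D_m$.

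The key structural step, and the main obstacle I would expect in the proof, is to lift these marginal orderings to a multivariate stochastic ordering of the ordered-received-sample vectors $(Z_m^{(j)})_{j=1}^{D_m}$. Because the $\{X_j\}$ are mutually independent and the conditioned index sets are deterministic, each of the three vectors has mutually independent components; a standard result (see, e.g., Theorem~6.B.16 of \cite{shakedStochasticOrders2007}) then promotes coordinatewise stochastic ordering of the marginals to the multivariate usual stochastic order. With the channel statistics held fixed, $C_m = \max_{1 \leq j \leq m} \ell_{j,m}$ is a coordinatewise non-decreasing functional of $(Z_m^{(1)}, \ldots, Z_m^{(D_m)})$: each $\log(f_1(x)/f_0(x))$ is non-decreasing by the MLR property equivalent to $f_0 \lrleq f_1$, and both summation and maximization preserve this monotonicity. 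The bookkeeping challenge is to verify that, after conditioning on the channel and queue realization, the only sample-dependent inputs to $C_m$ enter through $(Z_m^{(1)}, \ldots, Z_m^{(D_m)})$ via a non-decreasing map in each coordinate. Once this monotone-functional representation is established, applying it to the vector stochastic ordering yields $C_m^{(F)} \stochleq C_m^{(A)} \stochleq C_m^{(L)}$ conditionally, and averaging over the conditioning completes the proof unconditionally.
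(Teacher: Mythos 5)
Your proof is correct, but it takes a genuinely different route from the paper's. The paper first establishes the per-slot increment ordering $l_b^{(F)} \leq_{st} l_b^{(A,m)}$ and then runs an induction on the CUSUM recursion $\tilde{C}_b^{(A,m)} = \max\{\tilde{C}_{b-1}^{(A,m)} + l_b^{(A,m)},\, 0\}$, invoking closure of the stochastic order under convolution of independent summands and under the monotone map $\max\{\cdot,0\}$ at each step. You instead condition on the entire arrival and channel realization (under which the queue, the reception slots, and $D_m$ coincide pathwise across disciplines and the index sets become deterministic), reduce the comparison to two random vectors with mutually independent components whose marginals are componentwise stochastically ordered, invoke the standard multivariate result promoting this to the usual multivariate stochastic order, and then push the ordering through the coordinatewise non-decreasing functional $C_m = \max_{1\leq j\leq m}\ell_{j,m}$ before averaging out the conditioning. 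Your route buys two things: it works directly with the max-over-$j$ form of the statistic, which is the natural object for non-FCFS disciplines where no recursion is available and the statistic is recomputed each slot anyway; and it sidesteps the independence requirement in the convolution step --- the paper's induction needs $l_{b+1}$ independent of the running statistic $\tilde{C}_b$, which is delicate since the channel indicators $\indicator{Q_b>0}$ share history across slots, whereas your conditioning renders all channel terms deterministic so that the only residual randomness is in the measurement values, which genuinely are independent across distinct sampling indices. The paper's route, in exchange, uses only elementary univariate closure properties and mirrors the recursive structure of the FCFS CUSUM. Both arguments rest on the same foundation (Lemmas~\ref{lem:indices-ordering}, \ref{lem:indices-ordering-lcfs}, and \ref{lem:x-lr-ordering} and the monotonicity of $\log(f_1/f_0)$ under Assumption~\ref{assump:lr-ordering-of-dists}), so the difference is entirely in how the marginal ordering is lifted to the ordering of the statistic.
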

\begin{proof}[Proof sketch]
Since, the \CUSUM{} is a monotone function of the likelihoods of the measurement samples, the proof uses \cref{lem:x-lr-ordering} and the property that monotone functions of LR ordered random variables are LR ordered, to then show that the \CUSUM{} statistics are stochastically ordered. A detailed proof which involves induction is provided in Appendix~\ref{appendix:Proof of prop:non-fcfs-cusums-lr-ordered}
\end{proof}
We now conjecture that the stopping times associated with the LCFS on the one hand, the FCFS on the other hand, and any other queuing discipline are then stochastic ordered as follows.
\begin{conjecture}
  \label{thm:cusums-are-st-ordered}
For a threshold $h_\gamma$, tuned to achieve $\text{ARL2FA} = \gamma$, we conjecture $N^{\left( L \right)}(h_\gamma) \stochleq N^{\left( A \right)}(h_\gamma) \stochleq N^{\left( F \right)}(h_\gamma).$
\qed
\end{conjecture}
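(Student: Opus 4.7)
The plan is to upgrade the marginal stochastic ordering of \cref{prop:non-fcfs-cusums-lr-ordered} into a pathwise ordering of entire CUSUM trajectories on a common probability space, since the first-passage time of a trajectory to a level $h$ is a coordinatewise non-increasing functional of the trajectory. Specifically, once a coupling is produced under which $C_m^{(F)}\leq C_m^{(A)}\leq C_m^{(L)}$ almost surely for every $m$, the implication $N^{(L)}(h)\leq N^{(A)}(h)\leq N^{(F)}(h)$ a.s.\ for every $h$ is immediate; and since the law of the stopping time under $\mathbb{P}_\infty$ is discipline-invariant (per the discussion following \cref{eq:non-fifo-cusum-definition}), a common tuning $h_\gamma$ achieves the target ARL2FA $\gamma$ under all three disciplines, yielding the conjectured stochastic ordering.

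To build the coupling, I would have the three systems share the arrival process $\{A_k\}$, the channel service process $\{Y_k\}$, and the initial queue $Q_1$. Because $Q_{k+1}=(Q_k-Y_k)^+ + A_k$ is insensitive to service order, the queue-length trajectory, busy-period structure, and identity of the set of arrival indices delivered in each busy period are identical pathwise across disciplines; in particular, at each busy-period completion time $\tau_\ell$ the sorted received sequences coincide across disciplines and the CUSUM values agree, so the coupling need only be non-trivial within busy periods. Within a busy period, \cref{lem:indices-ordering,lem:indices-ordering-lcfs,lem:x-lr-ordering} give the LR ordering of the sorted-position sample distributions, and \cref{assump:lr-ordering-of-dists} makes $z\mapsto\log(f_1(z)/f_0(z))$ non-decreasing; if one can re-couple the within-busy-period sample values so that every sorted-position value is pointwise non-decreasing in the discipline order $F\prec A\prec L$, then each increment $L_i(m)$ inherits the pointwise ordering, and so does $C_m=\max_{j\leq m}\sum_{i=j+1}^m L_i(m)$ by monotonicity of sum and max.

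The principal obstacle is making the per-sorted-position coupling consistent across slots within a single busy period. A naive scheme that attaches a fresh uniform $U_j$ to sorted position $j$ and realizes the corresponding sample as $F_{d(\cdot)}^{-1}(U_j)$ breaks under LCFS because a later delivery can insert a new arrival at a middle sorted position, shifting every previously-realized sample to a higher sorted position and clashing with its earlier realization through some other $U_i$. A plausible resolution is an inductive construction along the within-busy-period deliveries in which every insertion is accompanied by an explicit re-matching of uniforms to the shifted samples, leveraging the fact that a monotone coupling of $f_0$ and $f_1$ via a single uniform is exchangeable with any permutation of positions sharing the same marginal type; the coupling can then be reset at every busy-period boundary $\tau_\ell$ without conflict. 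A more robust alternative is to dispense with an explicit coupling and instead prove the multivariate stochastic dominance $(C_1^{(F)},\ldots,C_n^{(F)})\stochleq(C_1^{(A)},\ldots,C_n^{(A)})\stochleq(C_1^{(L)},\ldots,C_n^{(L)})$ in the orthant order by an induction on $n$ that extends the argument behind \cref{prop:non-fcfs-cusums-lr-ordered}, and then invoke Strassen's theorem to extract a coupling with the desired pathwise CUSUM inequality; this sidesteps the insertion bookkeeping at the cost of a more delicate joint induction. Either route ultimately yields the pathwise CUSUM inequality and hence the conjectured ordering of stopping times.
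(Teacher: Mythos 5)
The statement you are asked to prove is presented in the paper as a \emph{conjecture}: the authors do not prove it, and Appendix~\ref{appendix:disc of thm:cusums-are-st-ordered} explains exactly why. The event $\{N(h)>m\}=\{C_b\leq h,\ b=1,\ldots,m\}=\{\ell_{s,b}\leq h,\ 1\leq s\leq b\leq m\}$ depends on the \emph{joint} law of the partial-sum statistics over all $1\leq s\leq b\leq m$, whereas Proposition~\ref{prop:non-fcfs-cusums-lr-ordered} delivers only the univariate ordering $C_m^{(F)}\leq_{st}C_m^{(A)}\leq_{st}C_m^{(L)}$ separately for each fixed $m$ (and even that is proved through an auxiliary reordered chain $\tilde{C}_b^{(A,m)}$ that changes with $m$, so the marginal results do not assemble into a joint statement). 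Your proposal correctly identifies this as the crux --- one needs either a single coupling under which entire CUSUM trajectories are ordered pathwise, or a multivariate (lower-orthant) dominance of $(C_1,\ldots,C_m)$ --- but it constructs neither. The coupling route ends with ``if one can re-couple the within-busy-period sample values\ldots'' and a ``plausible resolution'' whose insertion bookkeeping is precisely the unresolved consistency problem: a re-matching of uniforms at each LCFS insertion must simultaneously preserve each discipline's internal joint law at \emph{every} slot of the busy period, and you do not show such a re-matching exists. The orthant-order route is asserted to follow from ``a more delicate joint induction'' that is never carried out; the dependence among the $\{\ell_{s,b}\}$ that the paper flags is exactly what breaks the naive extension of the fixed-$m$ induction. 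A further technical slip: Strassen's theorem produces a monotone coupling from the \emph{strong} multivariate stochastic order, not from orthant dominance, so even if you established lower-orthant dominance you could not extract a pathwise coupling from it (though for the first-passage conclusion lower-orthant dominance would suffice directly, with no need for Strassen).

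In short, your framing agrees with the paper's own diagnosis, and the reduction of the stopping-time ordering to a trajectory-level ordering is sound; but the two concrete difficulties you name are the very ones that led the authors to leave this statement as a conjecture, and your proposal does not resolve them. As written, it is a research plan rather than a proof.
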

See Appendix~\ref{appendix:disc of thm:cusums-are-st-ordered} for the proof challenges.

\begin{discussion}[Asymptotic ESADD]
Irrespective of the transmit queue discipline, once a busy period completes, the DM eventually receives all generated measurements since there is no residual packet loss. Hence, the rearranged measurement sequence and therefore the \CUSUM{} statistic and stopping time are identical across disciplines at the end of each busy period. The difference in stopping times under FCFS and any non-FCFS policy is thus bounded by the finite busy-period length. Consequently, as the target ARL2FA grows large, the asymptotic ESADD of the \CUSUM{} test is identical across all non-idling queue disciplines. In summary, the asymptotic performance of
CUSUM is unaffected by queuing discipline. Numerical simulations highlight the benefit of further heuristic policies for finite ARL2FA (see \cref{sec:practical-considerations-discussion}). \qed
\end{discussion}

\section{Multi-Sensor: Homogeneous Case}
\label{sec:multi-sens}

Building on the single-sensor system model presented in \cref{sec:system-model}, we now extend the framework to a multi-sensor setup. The system consists of $L$ sensors, collectively denoted by the set $\mathcal{L}$. System slots are synchronous across all the sensors. Each sensor independently samples a common process using an asynchronous Bernoulli sampling mechanism, where each sensor samples with a probability $r$ per time slot ($0 < r < 1$).

Let $\{\bm{X}_k\}_{k \geq 0}$ represent the sequence of measurements sampled at time slot $k$, where $\bm{X}_k = \left( X_{k,1}, X_{k,2}, \ldots, X_{k,L} \right)$. Here, $X_{k,l}$ denotes the measurement made by sensor $l \in \mathcal{L}$ in slot $k$. The measurements are independent across sensors and time.

For each sensor $l \in \mathcal{L}$, the measurements are distributed as:
\begin{align*}
X_{k,l} \sim
\begin{cases}
f_0 & \text{if } k \leq  \nu, \\
f_1 & \text{if } k > \nu.
\end{cases}
\end{align*}
Each sensor maintains a transmission queue. In every time slot, a scheduling policy selects one sensor with a non-empty queue to transmit its measurement to the decision maker (DM) over the shared wireless channel. This transmission process has two components: (i) queue selection: one sensor is chosen uniformly at random from the set of non-empty queues, modeling an ideal random-access mechanism without contention overheads; and (ii) packet transmission: once selected, the packet experiences either success or loss over the channel. The resulting channel service process $Y_k$, as defined in Section~\ref{sec:system-model}, takes values in $\{1,0,\emptyset\}$, corresponding to successful, failed, or absent transmission, respectively.

We assume that all sensor–DM links have identical channel statistics, corresponding to equal link lengths and i.i.d.\ fading across sensors. Since only one sensor transmits in any slot, a single process $Y_k$ suffices to represent the observed channel behavior with common packet-loss parameters $(p_0,p_1)$. The probability of a successful transmission is then given by
\begin{align*}
\mathbb{P}(Y_k = 1 \mid \|\bm{Q}_k\|_0 > 0) =
\begin{cases}
p_0 & \text{if } k \leq \nu,\\
p_1 & \text{if } k > \nu.
\end{cases}
\end{align*}
Here, $\bm{Q}_k = (Q_{k,1}, Q_{k,2}, \ldots, Q_{k,L})$ denotes the queue-length vector at the start of slot $k$, and $\|\bm{Q}_k\|_0 > 0$ indicates that at least one sensor has a non-empty queue at the start of slot $k$. To ensure queue stability, we require $\sum_i r_i < \min\{p_0,p_1\}$.

The DM observes a quadruple $(U_k, Y_k, J_k, Z_k)$ at the end of each time slot $k$ defined as:
\begin{align*}
(U_k, Y_k, J_k, Z_k) =
\begin{cases}
(u_k, 1, j_k, X_{j_k,u_k}), \ \text{on successful transmission,} \\
(\ast, 0, \ast, \ast), \ \text{on unsuccessful transmission,} \\
(\ast, \emptyset, \ast, \ast ), \ \text{on no transmission.}
\end{cases}
\end{align*}
Here, $u_k \in \mathcal{L}$ denotes the index of the sensor whose measurement was successfully received, and $j_k$ is the sampling time of the received measurement. The placeholders ($\ast$) indicate unobserved values in the cases of transmission failure or an empty queue.

\subsection{Analysis}\label{subsec:MS-analysis}
Let the change point be $\nu$, and define the log-likelihood ratio as:
\[ l_{\nu,n} = \log \frac{\mathbb{P}_{\nu}(U_1^n, Y_1^n, J_1^n, Z_1^n \mid \bm{Q}_1)}{\mathbb{P}_{\infty}(U_1^n, Y_1^n, J_1^n, Z_1^n \mid \bm{Q}_1)} \]

\begin{proposition}
  \label{prop:MS-log-likelihood-computation}
  The log-likelihood ratio, $l_{\nu,n}$ satisfies:
  \begin{align}\label{eq:MS-log-likelihood-computation}
  l_{\nu,n} &= \log \frac{\mathbb{P}_{\nu}(U_1^n, Y_1^n, J_1^n, Z_1^n \mid \bm{Q}_1)}{\mathbb{P}_{\infty}(U_1^n, Y_1^n, J_1^n, Z_1^n \mid \bm{Q}_1)} \nonumber\\
  &= \sum_{k=1}^{n} \bigg( \mathbbm{1}_{\{\|\bm{Q}_k\|_0 > 0\}} \log \frac{\mathbb{P}_{\nu}(Y_k \mid \|\bm{Q}_k\|_0 > 0)}{\mathbb{P}_{\infty}(Y_k \mid \|\bm{Q}_k\|_0 > 0)}  \nonumber\\
  &\qquad\qquad\qquad\quad+  \mathbbm{1}_{\{Y_k =1,  J_k > Q_{1, U_k} + S_{\nu, U_k}}\} \log \frac{f_1(X_{J_k,U_k})}{f_0(X_{J_k,U_k})} \bigg)
  \end{align}
\end{proposition}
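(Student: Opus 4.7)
The plan is to mirror the single-sensor argument of \cref{lem:log-likelihood-computation}, generalizing the bookkeeping across the $L$ sensor queues. I would apply the chain rule to write
\[
\mathbb{P}_\star(U_1^n, Y_1^n, J_1^n, Z_1^n \mid \bm{Q}_1) = \prod_{k=1}^n \mathbb{P}_\star(U_k, Y_k, J_k, Z_k \mid \mathcal{H}_{k-1}),
\]
for $\star \in \{\nu,\infty\}$, where $\mathcal{H}_{k-1}$ denotes the observation history through slot $k-1$, and further factor each per-slot term as
\[
\mathbb{P}_\star(U_k \mid \mathcal{H}_{k-1})\,\mathbb{P}_\star(Y_k \mid U_k,\mathcal{H}_{k-1})\,\mathbb{P}_\star(J_k \mid U_k,Y_k,\mathcal{H}_{k-1})\,\mathbb{P}_\star(Z_k \mid U_k,Y_k,J_k,\mathcal{H}_{k-1}).
\]

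I would then isolate the change-independent components. Per-sensor Bernoulli sampling, the random-access rule that picks $U_k$ uniformly among non-empty queues, and the queueing discipline that determines $J_k$ once $U_k$ is fixed are all specified independently of $\nu$; moreover, $\bm{Q}_k$ is an $\mathcal{H}_{k-1}$-measurable function of the common arrival/service history. Hence the $U_k$ and $J_k$ factors, together with the implicit queue evolution, cancel term-by-term between $\mathbb{P}_\nu$ and $\mathbb{P}_\infty$ and contribute nothing to $l_{\nu,n}$.

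The two surviving factors give the two summands in \cref{eq:MS-log-likelihood-computation}. Conditional on $\|\bm{Q}_k\|_0 > 0$, the shared channel produces $Y_k \in \{0,1\}$ Bernoulli with parameter $p_0$ for $k \le \nu$ and $p_1$ for $k > \nu$; when the aggregate queue is empty, $Y_k = \emptyset$ deterministically and contributes no log-likelihood, which yields the first summand with its $\mathbbm{1}\{\|\bm{Q}_k\|_0 > 0\}$ restriction. For the measurement factor I would use $Z_k = X_{J_k,U_k}$ whenever $Y_k = 1$: the marginal is $f_0$ if the corresponding sampling slot occurred on or before $\nu$ and $f_1$ otherwise. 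For sensor $U_k$, the first $Q_{1,U_k}$ indices correspond to residual pre-procedure packets and the next $S_{\nu,U_k}$ indices to packets sampled in $(0,\nu]$, so only when $J_k > Q_{1,U_k} + S_{\nu,U_k}$ does the ratio $f_1(Z_k)/f_0(Z_k)$ survive; otherwise the densities cancel. Summing the surviving contributions over $k$ produces the stated identity.

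The main obstacle is the per-sensor index bookkeeping: unlike the single-sensor case, the pre-/post-change threshold $Q_{1,U_k} + S_{\nu,U_k}$ depends on which sensor is scheduled in slot $k$. Before invoking the index-wise cancellation inside the sum, one must verify that the conditional law of $U_k$ is the same under both measures — this follows because the random-access mechanism depends only on which queues are non-empty, a function of the common arrival process, and $\{\|\bm{Q}_k\|_0 > 0\}$ is itself change-independent — and then handle the measurement factor sensor-by-sensor when reading off the second summand.
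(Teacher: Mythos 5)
Your proposal is correct and follows essentially the same route as the paper's proof: a chain-rule factorization of the joint law per slot, cancellation of the change-independent sensor-selection, sampling-index, and queue-evolution factors, and retention of the channel factor (with the $\mathbbm{1}_{\{\|\bm{Q}_k\|_0>0\}}$ restriction) and the measurement factor (with the per-sensor threshold $Q_{1,U_k}+S_{\nu,U_k}$). The only cosmetic difference is that you condition $Y_k$ on $U_k$ while the paper factors $Y_k$ given only $(Y_1^{k-1},\bm{Q}_1)$ and groups $(U_k,J_k)$ together; since the channel statistics are sensor-independent, this changes nothing.
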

\begin{proof}
See Appendix~\ref{appendix:Proof of prop:MS-log-likelihood-computation}.
\end{proof}

Again, based on discussion~\ref{disc:S-nu}, we set $S_{\nu, U_k} = 0$. Denote by \(\mathcal{F}_t = \sigma\left( U_k, Y_k, J_k, Z_k; 1\leq k \leq t \right)\), the $\sigma$-algebra generated by the observations available at the DM up to the end of (or the trailing edge of) slot $t$. We seek a stopping time that solves

\small
\begin{align}\label{eq:MS-ESADD}
\inf_{T \in \mathcal{C}_{\gamma}} \underbrace{\sup_{\nu \geq 1} \esssup \mathbb{E}_{\nu} [(T-\nu+1)^+|U_1^{\nu-1}, Y_1^{\nu-1}, J_1^{\nu-1}, Z_1^{\nu-1}]}_{\doteq \bar{\mathbb{E}}_1[T]}
\end{align}
\normalsize
where \( \mathcal{C}_{\gamma} = \{T: \mathbb{E}_{\infty}[T] \geq \gamma\} \) and $\mathbb{E}_{\infty}[T]$ is ARL2FA. As before, the DM uses the \CUSUM{} stopping rule:
\begin{align}\label{eq:MS-stopping_time}
T(h) = \min \{n \in \mathbb{N} : \underset{1 \leq j \leq n}{\max} l_{j,n} > h\},
\end{align}
where $h$ is a threshold tuned to achieve the desired false alarm performance.

\subsection{Performance}\label{subsec:MS-performance}
In this section, we analyze the detection performance of the proposed \CUSUM{} variant in \cref{eq:MS-stopping_time} in the asymptotic regime where $\mathbb{E}_{\infty}[T] \to \infty$. Similar to the single-sensor case in \cref{eq:basic-defn-I}, we begin by defining the average information:

\small
\begin{align}\label{eq:MS-kl term}
    \frac{1}{n} l_{0,n}
     = &\sum_{k=1}^{n} \bigg( \mathbbm{1}_{\{\|\bm{Q}_k\|_0> 0\}} \log \frac{\mathbb{P}_{\nu}(Y_k| \|\bm{Q}_k\|_0> 0)}{\mathbb{P}_{\infty}(Y_k| \|\bm{Q}_k\|_0> 0)} \nonumber\\
    &  \quad \quad
    +  \mathbbm{1}_{\{Y_k =1, J_k > Q_{1,U_k}\}} \log \frac{f_1(X_{J_k,U_k})}{f_0(X_{J_k,U_k})} \bigg)
\end{align}
\normalsize
To analyze the asymptotic behavior of this quantity, we define the augmented observation space 
$\bm{\zeta}_k = (\bm{Q}_k, U_k, Y_k, J_k, Z_k)$. Let $\bm{\zeta}_1^n = (\bm{\zeta}_1, \bm{\zeta}_2, \ldots, \bm{\zeta}_n)$ denote the sequence of augmented observations from slot $1$ to $n$. The log-likelihood of $\bm{\zeta}_{1}^{n}$, given the initial queue lengths $\bm{Q}_1$, under $\mathbb{P}_{\nu}$ versus $\mathbb{P}_{\infty}$, is defined as:

\small
\begin{align} \label{eq:MS-zeta-llr}
l'_{\nu,n} &\coloneq \log \frac{\mathbb{P}_{\nu}(\bm{\zeta}_1^n|\bm{Q}_1)}{\mathbb{P}_{\infty}(\bm{\zeta}_1^n|\bm{Q}_1)} \nonumber\\
&= \log \frac{\mathbb{P}_{\nu}(U_1^n, Y_1^n, J_1^n, Z_1^n|\bm{Q}_1)}{\mathbb{P}_{\infty}(U_1^n, Y_1^n, J_1^n, Z_1^n|\bm{Q}_1)}+ \log \frac{\mathbb{P}_{\nu}(\bm{Q}_2^n|U_1^n, Y_1^n, J_1^n, \bm{Q}_1)}{\mathbb{P}_{\infty}(\bm{Q}_2^n|U_1^n, Y_1^n, J_1^n, \bm{Q}_1)} \nonumber\\
&= l_{\nu,n}
\end{align} 
\normalsize
where the second term vanishes because the queue evolution is independent of the change point, given $(\bm{Q}_1, U_1^n, Y_1^n, J_1^n)$.
The second equality in \cref{eq:MS-zeta-llr} arises from the fact that the queue length is independent of the measurement received at the DM.

Given the change instant, $\{\bm{\zeta}_k\}$ evolves as a positive recurrent, aperiodic, and irreducible Markov process with a stationary distribution. This follows from the fact that the queue evolution follows a stable birth-death process. Under the probability law $\mathbb{P}_0$, the stationary distribution of the Markov process $\bm{\zeta}_k$ is given by $\Pi_{\zeta}^{(p_1)}$, where the superscript denotes that this is the stationary distribution under the post-change packet loss probability $p_1$. Using the ergodic theorem \cite{durrettProbabilityTheoryExamples2019} for Markov processes, we obtain the following result for the asymptotic limit of~(\ref{eq:MS-kl term}) (i.e., $\lim_{n\to\infty}\frac{1}{n}l_{0,n}$):
\begin{proposition}[Asymptotic Information Rate in the Multi-Sensor Setting]
\label{prop:MS-I_bar}
The asymptotic information rate per unit time per unit sensor, i.e.,~(\ref{eq:MS-kl term}) normalized per sensor, is almost surely given by
\small
\begin{align}\label{eq:MS-I_bar}
    \bar{I}  &= \lim_{n\to\infty}\frac{1}{n}l_{0,n} \nonumber\\
    &= \mathbb{E}_{\Pi_{\zeta}^{(p_1)}} \bigg[ \mathbbm{1}_{\{\|\bm{Q}\|_0> 0\}} \log \frac{\mathbb{P}_{\nu}(Y| \|\bm{Q}\|_0> 0)}{\mathbb{P}_{\infty}(Y| \|\bm{Q}\|_0> 0)}+\mathbbm{1}_{\{Y =1\}}\log \frac{f_1(X_{J,U})}{f_0(X_{J,U})} \bigg] \nonumber \\
    &=\mathbb{P}(\|\bm{Q}\|_0 >0) I(p_1,p_0)  + \mathbb{P}(Y=1) I(f_1,f_0) 
\end{align}
\normalsize
where $I(f_1, f_0)$ is the KL-divergence between the pre- and post-change distributions of the measurements.
\end{proposition}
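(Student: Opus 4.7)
The plan is to mirror the single-sensor argument in \cref{sec:asymptotic-analysis-geom-geom-one-model}, with the augmented observation state replaced by its multi-sensor analogue $\bm{\zeta}_k = (\bm{Q}_k, U_k, Y_k, J_k, Z_k)$. First I would recall that, by \cref{prop:MS-log-likelihood-computation} (with the convention $S_{\nu,U_k}=0$ from Discussion~\ref{disc:S-nu}), the log-likelihood ratio is a sum
\[
l_{0,n} = \sum_{k=1}^{n} L_k,
\]
where each summand $L_k$ is a deterministic function only of $\bm{\zeta}_k$. In particular the $k$-th summand is the sum of (i) an indicator of $\{\|\bm{Q}_k\|_0>0\}$ times the log-ratio of channel laws and (ii) an indicator of a successful, post-initialisation delivery times $\log(f_1/f_0)$ evaluated at the delivered sample.

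Next I would verify that the ergodic theorem for Markov chains applies to $\{\bm{\zeta}_k\}$ under $\mathbb{P}_0$. The key facts are already invoked in the paragraph preceding the proposition: queue stability from the assumption $\sum_l r_l < \min\{p_0,p_1\}$ gives positive recurrence of $\bm{Q}_k$; the Bernoulli arrivals and Bernoulli channel randomise the queue-empty state at each slot, giving aperiodicity; and $(U_k,Y_k,J_k,Z_k)$ are functions of $\bm{Q}_k$ together with independent channel/scheduler/measurement randomness, which preserves irreducibility on the natural state space. Thus a unique stationary distribution $\Pi_{\zeta}^{(p_1)}$ exists, and Birkhoff's ergodic theorem (as in Durrett) yields
\[
\lim_{n\to\infty}\frac{1}{n}\sum_{k=1}^{n}L_k \;=\; \mathbb{E}_{\Pi_{\zeta}^{(p_1)}}[L_1] \quad \text{a.s.},
\]
provided $\mathbb{E}_{\Pi_{\zeta}^{(p_1)}}[|L_1|]<\infty$, which follows from boundedness of the channel log-ratio term and the standing assumption that $I(f_1,f_0)<\infty$.

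Having identified $\bar{I}=\mathbb{E}_{\Pi_{\zeta}^{(p_1)}}[L_1]$, the final step is the decomposition into the displayed form. I would split the expectation over the two indicator terms. For the channel term, conditioning on $\{\|\bm{Q}\|_0>0\}$ separates the queue-occupancy probability from the conditional expectation of the log-ratio, and the latter is exactly $I(p_1,p_0)$ by the definition of the KL divergence between the two Bernoulli channel laws (this step is identical to the one leading to \cref{eq:define-I} in the single-sensor case). For the measurement term, because the measurement $X_{J_k,U_k}$ is, conditional on a successful delivery, distributed as a post-change sample and is independent of the queue/scheduler state, the expectation factors as $\mathbb{P}(Y=1)\cdot I(f_1,f_0)$. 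Collecting both terms gives \cref{eq:MS-I_bar}.

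The step I expect to require the most care is the second one: establishing that the joint process $\{\bm{\zeta}_k\}$ (rather than just the queue vector) is a well-defined positive-recurrent, aperiodic, irreducible Markov chain on an appropriate state space, and checking the integrability condition needed for the ergodic theorem. The two factoring simplifications in step three are straightforward once one notes the conditional independence of the channel outcome and the delivered sample from the queue occupancy, both of which were already implicit in the proof of \cref{prop:MS-log-likelihood-computation}.
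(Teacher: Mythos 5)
Your proposal is correct and follows essentially the same route as the paper: the paper's (inline) argument likewise writes $l_{0,n}$ as a sum of per-slot functionals of the augmented Markov state $\bm{\zeta}_k$, invokes positive recurrence/aperiodicity of $\{\bm{\zeta}_k\}$ under $\mathbb{P}_0$ together with the ergodic theorem to obtain the almost-sure limit $\mathbb{E}_{\Pi_\zeta^{(p_1)}}[L_1]$, and then splits the stationary expectation into the channel and measurement terms exactly as you do. Your added attention to irreducibility and integrability is a reasonable tightening of details the paper leaves implicit, but it does not change the approach.
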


\begin{remark}
The expression for $\bar{I}$ captures two sources of information contributing to change detection in the multi-sensor network. The first term, $\mathbb{P}(\|\bm{Q}\|_0 >0) I(p_1,p_0)$, represents the information gained from changes in the channel success probability. The second term, $\mathbb{P}(Y=1) I(f_1,f_0)$, accounts for the information obtained directly from the measurement content when a sample is successfully transmitted and received. This result extends the single-sensor asymptotic analysis as seen in \cref{eq:information-number}. 
\end{remark}

\subsection{Channel Loss Independent of Change}
\label{subsec:MS-indep-channel}
In this subsection, we analyze the case where the channel loss process is independent of the change point. Specifically, we assume a constant channel loss probability, i.e., \(p_0=p_1 = p\). In this scenario, the log-likelihood ratio in \cref{eq:MS-log-likelihood-computation} simplifies to: 
\begin{align}
l_{\nu,n} = \sum_{k=1}^{n} \mathbbm{1}_{\{J_k > Q_{1,U_k} + S_{\nu, U_k}, Y_k = 1\}} \log \frac{f_1(X_{J_k,U_k})}{f_0(X_{J_k,U_k})},
\end{align}
where the conditional distribution of $Y_k$ is identical under both probability laws. The information measure, $\bar{I}$ in \cref{eq:MS-I_bar} reduces to:
\begin{align}\label{eq:MS-I_bar_indp_chan_loss}
\bar{I} = \mathbb{P}(Y = 1) I(f_1, f_0).
\end{align} 
The probability of a successful transmission, $\mathbb{P}(Y=1)$, can be derived by modeling a super queue at the transmitter, whose length equals the sum of the individual sensor queue lengths. The aggregate arrival process follows a Binomial distribution, $\text{Bin}(L,r)$, while the departure process is $\text{Bernoulli}(p)$. The event that at least one sensor queue is non-empty corresponds to the super queue being non-empty, i.e., $P(\|\bm{Q}_k\|_0 > 0) = P(L' > 0)$, where $L' = \sum_{i=1}^L Q_{k,i}$. Therefore, \[P(\|\bm{Q}_k\|_0 > 0) = \rho = \frac{Lr}{p},\] 
\noindent Now, for a G/G/1 queue, the probability that the system is empty is $1-\rho$ \cite[Chapter 7]{medhi2002stochastic} where $\rho = \frac{Lr}{p}$. Therefore, we have:
\begin{align*}
\mathbb{P}(Y = 1) = p \cdot P(\|\bm{Q}_k\|_0 > 0) = p \cdot \frac{Lr}p= Lr.
\end{align*}
This follows stability assumption if there is no buffer loss. Thus, the expression for $\bar{I}$ in \eqref{eq:MS-I_bar_indp_chan_loss} becomes:
\begin{align}
\bar{I} = Lr \cdot I(f_1, f_0).
\end{align}
Since the channel success probability is constant and does not carry information about the change, each detection-relevant update is contributed solely by the content of the received measurement. Assuming a stable system and no buffer loss, the rate at which such informative samples arrive at the decision maker is equal to the total effective sampling rate across all sensors, i.e., $Lr$. Each such sample contributes $I(f_1, f_0)$ amount of information, leading to the overall expression of $\bar{I}$. Bounds on the ESADD, defined in \cref{eq:MS-ESADD}, can be computed following a similar approach as in \cref{thm:lower-bound-cusum} and \cref{thm:upper-bound-cusum}.

\section{Non-homogeneous sensors and sampling}\label{sec:MS-non-homog-sens}

In this section, we analyze the case where the sensor observations and sampling rates are non-homogeneous. Each sensor \(i\) has a sampling rate \(r_i\), and its observations follow distinct distributions, \(f_{0,i}\) and \(f_{1,i}\), before and after the change, respectively. We also let the channel service process to be independent of the change instant as in \cref{subsec:MS-indep-channel}. The log-likelihood ratio in \cref{eq:MS-log-likelihood-computation} then becomes:
\[
l_{\nu,n} = \sum_{k=1}^{n} \mathbbm{1}_{\{J_k > Q_{1,U_k} + S_{\nu, U_k}, Y_k = 1\}} \log \frac{f_{1,U_k}(X_{J_k,U_k})}{f_{0,U_k}(X_{J_k,U_k})}.
\]
\begin{proposition}
  \label{prop:MS-I-non-homogeneous}
  The quantity $\bar{I}$ for the non-homogeneous setting can be written as:
  \begin{align}
    \bar{I} = \sum_{i=1}^{L} r_i I(f_{1,i}, f_{0,i})
  \end{align}
\end{proposition}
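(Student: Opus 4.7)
The plan is to parallel the derivations behind Proposition~\ref{prop:MS-I_bar} and the channel-independent specialization in \cref{subsec:MS-indep-channel}, adapted to heterogeneous sampling rates and distributions. Since the channel service process is independent of the change point ($p_0=p_1=p$), the channel-KL term in the log-likelihood ratio vanishes and only the measurement-content contribution survives, as already recorded just above the proposition.

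First, I would augment the observation space as $\bm{\zeta}_k = (\bm{Q}_k, U_k, Y_k, J_k, Z_k)$ and argue that under the standing stability condition $\sum_i r_i < p$ the joint process $\{\bm{\zeta}_k\}$ remains a positive recurrent, irreducible, aperiodic Markov chain with a unique stationary distribution $\Pi_\zeta^{(p)}$; the proof here mimics the homogeneous case and relies on the Markovian queue dynamics and on scheduling being a function of the current queue vector only. Applying the ergodic theorem as in \cref{eq:MS-I_bar} then gives
\[
\bar{I} = \lim_{n\to\infty}\frac{1}{n}\, l_{0,n}
= \mathbb{E}_{\Pi_\zeta^{(p)}}\!\left[\mathbbm{1}_{\{Y=1\}}\log\frac{f_{1,U}(X_{J,U})}{f_{0,U}(X_{J,U})}\right].
\]
Conditioning on the identity $U=i$ of the sensor whose measurement was successfully received, and using the independence of the measurement content from the queueing/scheduling variables, this expectation decomposes as
\[
\bar{I} = \sum_{i=1}^{L} \mathbb{P}_{\Pi_\zeta^{(p)}}(Y=1,\, U=i)\, I(f_{1,i}, f_{0,i}).
\]

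The remaining step is the identification $\mathbb{P}_{\Pi_\zeta^{(p)}}(Y=1, U=i) = r_i$. I would argue this by flow conservation: the Bernoulli arrival rate into sensor $i$'s transmit queue is $r_i$; under the retransmit-until-success protocol with unbounded buffers, every packet that enters that queue eventually leaves via a successful transmission, which is precisely the event $\{Y=1, U=i\}$. Hence in steady state the per-sensor successful-transmission rate must equal its per-sensor arrival rate $r_i$, and substituting into the decomposition above yields the claim.

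The main obstacle is justifying the per-sensor throughput identity rigorously. The difficulty is that stability of the aggregate workload (which is what $\sum_i r_i < p$ ensures directly) does not a priori imply stability of each individual queue $\{Q_{k,i}\}$ under uniform-random work-conserving scheduling. One needs positive Harris recurrence of the joint queue vector $\bm{Q}_k$, after which per-queue rate conservation applies because buffers are unbounded and no packet is dropped. This can be handled by standard fluid-limit arguments for coupled work-conserving queues, and once established, the clean identity $\mathbb{P}_{\Pi_\zeta^{(p)}}(Y=1, U=i)=r_i$ drops out and the proposition follows.
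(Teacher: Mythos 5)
Your proposal is correct and follows essentially the same route as the paper's proof: the ergodic theorem applied to the augmented Markov chain $\bm{\zeta}_k$, followed by conditioning on the identity $U=i$ of the successfully received sensor. The only organizational difference is that the paper factors the weight as $\mathbb{P}(Y=1)\,\mathbb{P}(U=i\mid Y=1)$, computing the first factor as $\sum_j r_j$ via a super-queue occupancy argument and the second as $r_i/\sum_j r_j$ by proportionality to sampling rates, whereas you obtain $\mathbb{P}(Y=1,\,U=i)=r_i$ in a single step by per-sensor flow conservation; the two computations are equivalent, and your version has the minor merit of making explicit the per-queue stability issue that the paper's argument leaves implicit.
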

\begin{proof}
See Appendix~\ref{appendix:Proof of prop:MS-I-non-homogeneous}.
\end{proof}
This expression highlights that the overall information rate is determined by the sum of the individual sensors’ sampling rates, each weighted by the corresponding KL divergence between the pre- and post-change distributions.

\begin{discussion}
Each sensor samples observations with probability \(r_i\) in a given time slot. Once an observation is sampled, it is placed in the respective sensor queue, awaiting transmission via a wireless channel with a success probability of \(p\). The receiver computes the \CUSUM{} statistic upon receiving an observation successfully. Optimally scheduling sensor measurements to the DM is challenging because the sensors are usually non-homogeneous: each sensor may have a different noise variance, leading to varying levels of information content about the change. Specifically, sensors with larger KL divergence between pre- and post-change observation distributions provide more information. As a result, samples from such sensors are more valuable for detection, but may spend more time waiting in the queue, while samples from higher-variance (lower-information) sensors may be relatively fresh. The scheduling policy must balance this trade-off between the information content and the freshness of observations to minimize detection delay while keeping the false alarm rate low. We now explore some heuristic solutions in the following subsections.
\end{discussion} 

\subsection{Discounted Information Scheduling Policy - A Heuristic}\label{subsec:dis_info_sch_pol}

To arrive at a scheduling policy for sensor samples, we use an interchange argument \cite[Chapter 4]{bertsekas2012dynamic}. Let the associated information content of the sensor sampled at slot $k$ be denoted as $I^\ast_{U_k}$.

At the beginning of each slot, the transmitter may have several waiting observations from different sensors, each of which has already experienced some delay and may carry different levels of information. In this non-homogeneous setting, the scheduling policy must account for both the information content of each measurement and the additional delay that would be incurred by not transmitting it immediately.

The goal is to decide which sensor's sample to transmit at each time step. Let \(V\) be the optimal scheduling policy, which selects a sample \(X_{J_k,U_k}\) at time \(k\):
\[
V = (X_{J_0,U_0}, \ldots, X_{J_{k-1},U_{k-1}}, X_{j,i}, X_{j',i'}, \ldots, X_{J_{N-1},U_{N-1}}).
\]
where $N$ is the total number of samples available at the transmitter. Now, consider an alternative policy \(V'\) that interchanges the order of two samples from different sensors:
\[
V' = (X_{J_0,U_0}, \ldots, X_{J_{k-1},U_{k-1}}, X_{j',i'}, X_{j,i}, \ldots, X_{J_{N-1},U_{N-1}}).
\]

To capture the trade-off between information and timeliness, we define a cost structure as:
\[
C_{U_k} = \alpha^{k - J_k} I^\ast_{U_k},
\]
where \(0 < \alpha < 1\) is a discount factor that reduces the benefit of transmitting an older sample ($J_k < k$). Intuitively, if $k-J_k$ is large, then $\alpha^{k-J_k}$ becomes small, yielding a “cost” that can be lower—but we will clarify how we use this to rank samples.

With policy \( V \), the total cost can be written as:
\begin{align*}
    C(V) &= C\bigg(X_{J_0,U_0}, \ldots, X_{J_{k-1},U_{k-1}}\bigg) + C_i + \alpha C_{i'} \\
    &+C\bigg(X_{J_{k+2},U_{k+2}}, \ldots, X_{J_{N-1},U_{N-1}}\bigg)
\end{align*}
Similarly, for \( V' \), the total cost is:
\begin{align*}
    C(V') &= C\bigg(X_{J_0,U_0}, \ldots, X_{J_{k-1},U_{k-1}}\bigg) + C_{i'} + \alpha C_i \\
    &+ C\bigg(X_{J_{k+2},U_{k+2}}, \ldots, X_{J_{N-1},U_{N-1}}\bigg)
\end{align*}

For \(V\) to be optimal, we must have
\[
C(V) \leq C(V') 
\]
\[
\iff \alpha^{k - j}\, I^\ast_i + \alpha^{(k+1) - j'}\, I^\ast_{i'} 
    \;\leq\; \alpha^{k - j'}\, I^\ast_{i'} + \alpha^{(k+1) - j}\, I^\ast_i
\]
\[
\iff \frac{I^\ast_i}{\alpha^{j - k}} \;\leq\; \frac{I^\ast_{i'}}{\alpha^{j' - k}},
\]
where \(\alpha^k\) cancels on both sides of the inequality. This shows that a later observation (larger \(j\)) is better. A useful way to interpret the discount factor $\alpha$ is as a geometrically distributed ``memory window" of effective size $\frac{1}{1-\alpha}$. Thus, a smaller $\alpha$ (e.g., $\alpha=0.2$) corresponds to a shorter effective window, which aggressively favors fresh, high-information samples, whereas a larger $\alpha$ (e.g., $\alpha=0.8$) corresponds to placing more weight on older samples as well. 

Therefore, if we want to transmit the {\em most valuable} sample first, we should pick the sample whose ratio \(\frac{I^\ast_i}{\alpha^{j - k}}\) is the largest among the queued samples. In other words, we schedule the samples in the descending order of \(\frac{I^\ast_i}{\alpha^{j - k}}\). 

\subsection{The Look-Back Window Scheduling Policy}\label{subsec:look_back_win_pol}

Another heuristic approach for scheduling non-homogeneous sensor observations is the look-back window policy. This policy is a variation of the usual LCFS  rule by incorporating information quality into the scheduling decision. Recall from Section~\ref{sec:non-fcfs-discipline} that LCFS is conjectured to minimize detection delay in the single-sensor case. In the multi-sensor setting, scheduling must also account for the trade-off between sensor quality and measurement packet recency. This motivates policies like the look-back window, which generalizes LCFS to balance freshness and informativeness.

Instead of always transmitting the most recent sample, the system considers the most informative sample among the most recent $w$ arrivals. At each transmission opportunity, if the most recent sample is from a sensor with     higher information content, it is transmitted immediately. However, if the most recent sample comes from a sensor with  lower information content, the system searches within the last $w$ arrivals for a more informative sample—that is, a sample from a sensor with higher information content. If such a sample exists, it is prioritized for transmission; otherwise, the most recent sample is transmitted. Note that if $w=1$, the policy reduces to LCFS. This policy balances freshness and information content, ensuring that older but more informative samples are not ignored while avoiding excessive delays. 

\section{Discussion and numerical analysis}
\label{sec:practical-considerations-discussion}

In \cref{sec:infin-retr-in-order-reception}, we analyzed the asymptotic performance of the \CUSUM{} algorithm as $\text{ARL2FA}\!\to\!\infty$. Moustakides et al.~\cite{moustakidesMarkovComments2010} observed that for Markov observations, the optimal threshold in the non-asymptotic regime depends on the initial state. Using Lai’s asymptotic optimality framework~\cite{laiInformationBoundsQuick1998}, we show in \cref{sec:asymptotic-analysis-geom-geom-one-model} that our \CUSUM{} algorithm (\cref{eq:cusum-definition}) achieves asymptotic performance independent of the initial state. As $\text{ARL2FA}\!\to\!\infty$, the threshold $h\!\to\!\infty$, and the Markov process $\{\zeta\}$ approaches its stationary post-change distribution, rendering the initial state irrelevant.

\Cref{fig:add-close-to-hbyI} plots the ratio $\text{ADD}/h$ versus the average sampling interval $s=1/r$ for the single-sensor case with unbounded retransmissions. The system follows the model in \cref{sec:system-model} with Bernoulli sampling rate $r$, packet success probabilities $(p_0,p_1)=(0.61,0.60)$, and sensor distributions $f_0=\mathcal{N}(0,1/2)$, $f_1=\mathcal{N}(10,1/2)$. The initial queue length $Q_1$ is drawn from the stationary distribution with arrival rate $r$ and service rate $p_0$. For each run, we fix $\nu=1$ and simulate until the \CUSUM{} statistic crosses $h$, averaging $\text{ADD}$ over $10^6$ repetitions. As shown, $\text{ADD}/h$ approaches $1/I$ (see \cref{thm:upper-bound-cusum}) as $h\!\to\!\infty$, validating our asymptotic analysis.

\begin{figure}[h]
\centering
\includegraphics[width=\columnwidth, height=6cm]{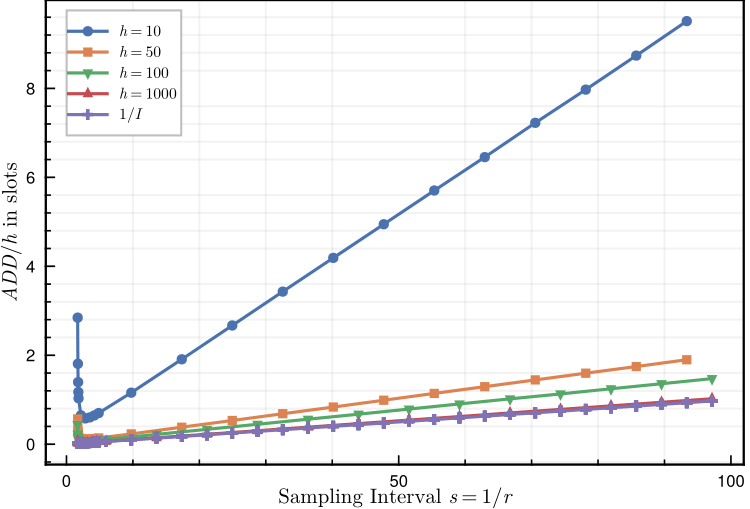}
\caption{Single sensor case: QCD over a lossy link with unbounded retransmissions; \CUSUM{} algorithm in \cref{eq:cusum-definition}: Simulated `$\text{ADD}/h$' vs Average Sampling Interval}
\label{fig:add-close-to-hbyI}
\end{figure}

\begin{figure}[h]
\centering
\includegraphics[clip, trim=0cm 0cm 0cm .62cm, width=\columnwidth,height=6cm]{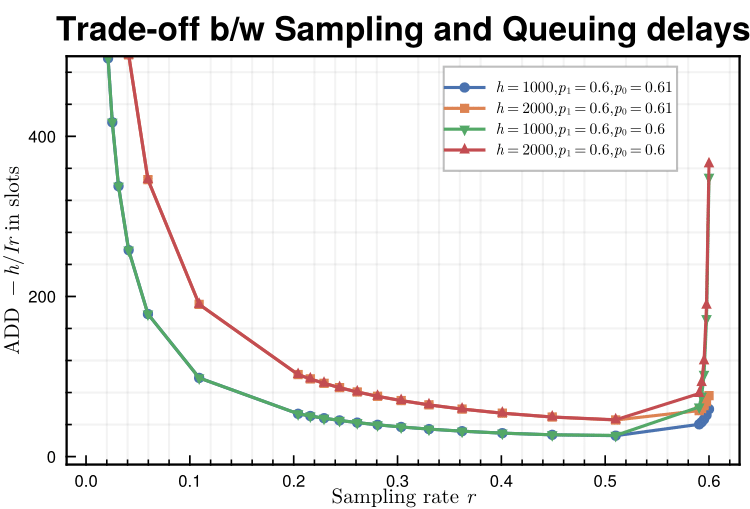}
\caption{Single sensor case: QCD over a lossy link with unbounded retransmissions; \CUSUM{} algorithm in \cref{eq:cusum-definition}: Trade-off between sampling delay and queuing delay}.
\label{fig:add_vs_s_constant_h_smoothsim_tick_curve}
\end{figure}

To ensure queue stability, we maintain $r < \min\{p_0,p_1\}$. In practice, the transmit queue adds a delay $\overline{d}_Q$ comprising (a) the time to clear packets present at slot~1 and (b) delay experienced by the packet containing the decisive measurement. As $r\!\to\!p_0$, the former dominates, while near $p_1$ the latter increases. Hence, smaller $r$ reduces $\overline{d}_Q$. However, the average sampling delay $\overline{d}_S=1/r$ increases with smaller $r$. \Cref{fig:add_vs_s_constant_h_smoothsim_tick_curve} illustrates this trade-off: increasing $r$ reduces $\overline{d}_S$ but inflates $\overline{d}_Q$ near $p_1$. We plot $\text{ADD}$ for QCD in the single sensor case using the \CUSUM{} algorithm from \cref{eq:cusum-definition}, sweeping $r$ across two threshold values $h$. Simulation parameters match those used in \cref{fig:add-close-to-hbyI}. As $r$ nears $p_1=0.6$, $\text{ADD}$ spikes due to increased $\overline{d}_Q$. Notice that the curves in \cref{fig:add_vs_s_constant_h_smoothsim_tick_curve} with different $p_0$ differ only when $r$ approaches $p_1$, reflecting the time to empty the initial transmit buffer. The plot also shows that for given the pair $\left( p_0, p_1 \right)$, an optimal sampling rate exists that achieves the lowest detection penalty due to sampling and queuing delays. A network-aware DM can exploit this trade-off to choose an optimal $r$, whereas a network-oblivious DM, lacking knowledge of channel statistics, may incur larger delay penalties due to a mismatched sampling rate.

Recall that the asymptotic ESADD for the QCD procedure with unbounded retransmissions is $\tfrac{h}{I}(1 + \smallO{1})$. In \cref{fig:add-close-to-hbyI}, the gap between the theoretical curve $\text{ESADD}/h = 1/I$ and simulated $\text{ADD}/h$ corresponds to the $\smallO{1}$ term, which shrinks as $h$ increases. In practice, this deviation arises from queuing and sampling delays, i.e., $(\overline{d}_S + \overline{d}_Q)/h$. Since both delays grow sublinearly with $h$, the simulated $\text{ADD}/h$ approaches $1/I$ for large thresholds.

\begin{figure}[bt] \centering
\includegraphics[clip, trim=0cm 0cm 0cm .55cm, width=\columnwidth,height=6cm]{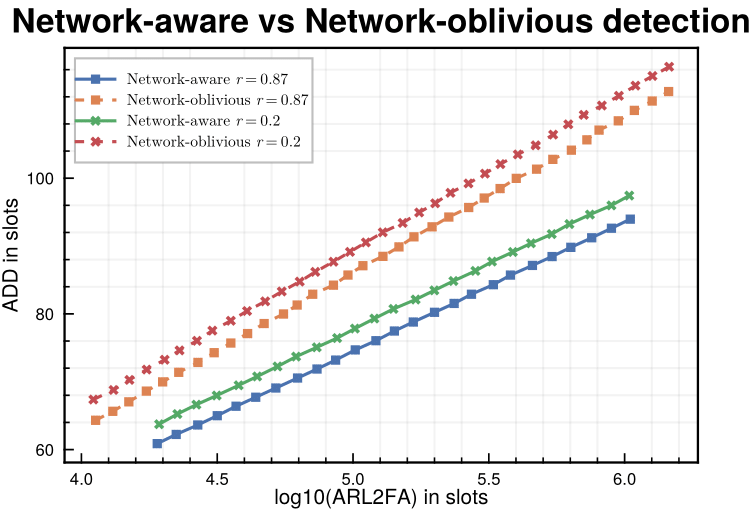}
\caption{Single sensor case: QCD over a lossy link with unbounded retransmissions; \CUSUM{} algorithm in \cref{eq:cusum-definition}: Network-aware vs Network-oblivious detection}
\label{fig:add_vs_arl2fa_compare_NA_NO}
\end{figure}

\Cref{fig:add_vs_arl2fa_compare_NA_NO} plots simulated $\text{ADD}$ versus $\text{ARL2FA}$ for different sampling rates $r$. Parameters are $p_0=0.95$, $p_1=0.90$, $f_0=\mathcal{N}(0,1/2)$, and $f_1=\mathcal{N}(1,1/2)$. The plot confirms the well-known linear growth of $\text{ADD}$ with $\log(\text{ARL2FA})$~\cite{tartakovskySequentialAnalysisHypothesis2015}.  It also highlights the consistent advantage of the proposed network-aware \CUSUM{} detector, which utilizes the channel process $\{Y_k\}$, over a network-oblivious detector that updates only upon successful receptions.

\begin{figure}[h]
\centering
    \includegraphics[clip, trim=0cm 0cm 0cm 0cm, width=\columnwidth, height=6cm]{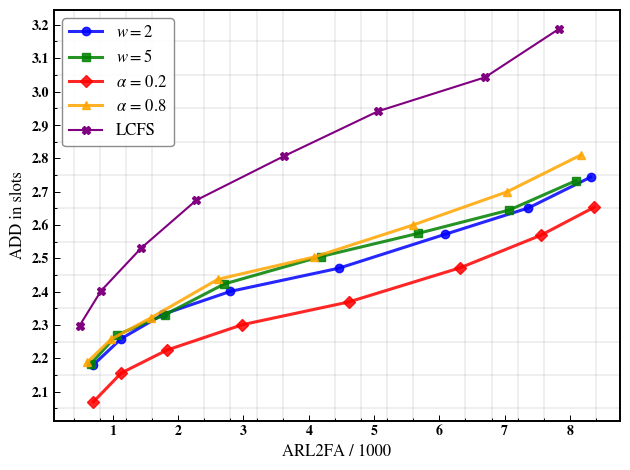}
    \caption{Multiple non-homogeneous sensors: Simulated $\text{ADD}$ vs ARL2FA with $N=5$ sensors. The sampling probabilities are chosen to be $r_i \in \{0.13,0.15,0.2,0.22,0.24\}$ and $p=0.91$, and each sensor's measurement standard deviation is $\sigma_i \in \{1.65, 2, 0.7, 1.75, 1.5\}$. Pre- and post-change means are set to 0 and 5, respectively. The change occurs at time slot 100, and each simulation runs for 10{,}000 slots over 500 sample paths.}
\label{fig:MS-del vs arl r-0.2}
\end{figure}
\begin{figure}[bt]
    \begin{centering}
    \begin{center}
        \includegraphics[clip, trim=0cm 0cm 0cm 0cm, width=\columnwidth, height=6cm]{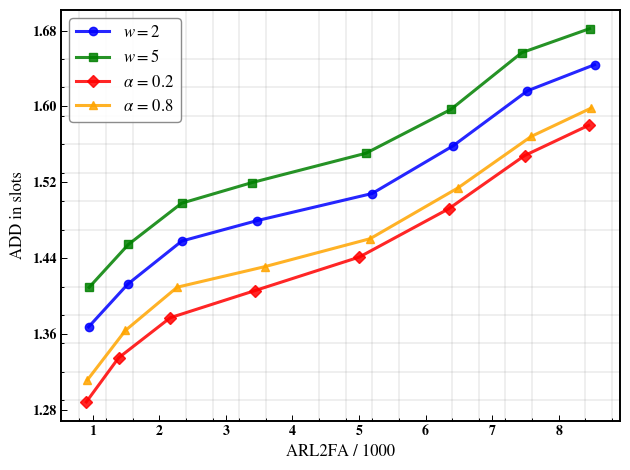}
    \end{center}
    \end{centering}
    \vspace{-5mm}
    \caption{Multiple non-homogeneous sensors: Simulated $\text{ADD}$ vs
  ARL2FA with same parameters as \cref{fig:MS-del vs arl r-0.2} except sampling probabilities are $r_i \in \{0.65, 0.60, 0.47, 0.65, 0.69\}$ and $p=0.95$.}
    \vspace{-1mm}
\label{fig:MS-del vs arl r-0.65}
\end{figure}
In \cref{fig:MS-del vs arl r-0.2}, we compare the performance of different scheduling policies under a non-homogeneous sensor setup where the channel loss is independent of the change. The figure presents results for the look-back scheduling policy with two look-back window sizes $(w=2, 5)$, a discounted information scheduler with discount factors $\alpha=0.2$ and $\alpha=0.8$, and a Last-Come-First-Served (LCFS) benchmark.

The look-back scheduling policy with $w=2$ shows a substantial improvement over $w=1$ (i.e., LCFS), but going from $w=2$ to $w=5$ yields only marginal additional gains. This suggests that while a small look-back window is helpful in selecting informative measurements slightly older than the very latest sample, increasing the window size further saturates the improvement especially in this setup, where limited sampling rates mean relatively few samples are available within the window at any time. Similarly for the discounted information scheduler, we observe that $\alpha=0.2$ leads to the lowest detection delays because it strongly emphasizes the most recent, high-information observations, akin to a very short effective memory. Increasing $\alpha$ to $0.8$ introduces a larger memory window, which slightly raises detection delay because older samples receive relatively more weight, potentially causing the scheduler to transmit measurements that, while still informative, are not as fresh. Importantly, both heuristics—the look-back scheduler and the discounted information scheduler can be tailored to outperform LCFS, as clearly demonstrated in \cref{fig:MS-del vs arl r-0.2}. By tuning parameters such as $w$ and $\alpha$, these policies adaptively balance freshness and informativeness, leading to lower detection delays than LCFS.

In \cref{fig:MS-del vs arl r-0.65}, we analyze the same scheduling policies under an increased sampling probability scenario and channel success probability is set to $p=0.95$. The overall trend remains consistent, with the look-back scheduling policies and discounted information scheduler achieving comparable performance. However, detection delays decrease for all policies due to the higher sampling probabilities, which increase the frequency of new samples and reduce the time needed to accumulate sufficient LLR values for detection.

\section{Conclusion}
\label{sec:conclusion-networked-qcd}
We have studied the quickest change detection (QCD) problem in a networked setting where wireless channel losses introduce delays and out-of-order reception of measurements at the decision maker. Under a single-sensor model with a channel loss probability that changes after the change point, we proposed a \CUSUM{} algorithm, established its asymptotic optimality as the false alarm rate tends to zero by modeling the observation process as Markovian, and extended the analysis to the case of independent measurement losses. We further examined the effect of transmit queue disciplines and showed that, under likelihood ratio ordering of the pre- and post-change distributions, the CUSUM statistic for the FCFS, any other non-idling discipline, and the LCFS are stochastically ascendingly ordered. Extending to the multi-sensor setting, we analyzed homogeneous and heterogeneous sensor networks, highlighting the role of scheduling in balancing timeliness and informativeness. We proposed heuristic scheduling policies—such as discounted-information and look-back window rules—and demonstrated their impact on non-asymptotic detection delay. 

Overall, this work bridges sequential detection theory and wireless communication dynamics by incorporating queueing and retransmission effects into the QCD framework. Future directions include developing provably optimal scheduling policies, extending the analysis to correlated sensors and unknown post-change distributions of the observations.

\bibliographystyle{IEEEtran}
\bibliography{References.bib}

\begin{thebibliography}{10}
\providecommand{\url}[1]{#1}
\csname url@samestyle\endcsname
\providecommand{\newblock}{\relax}
\providecommand{\bibinfo}[2]{#2}
\providecommand{\BIBentrySTDinterwordspacing}{\spaceskip=0pt\relax}
\providecommand{\BIBentryALTinterwordstretchfactor}{4}
\providecommand{\BIBentryALTinterwordspacing}{\spaceskip=\fontdimen2\font plus
\BIBentryALTinterwordstretchfactor\fontdimen3\font minus \fontdimen4\font\relax}
\providecommand{\BIBforeignlanguage}[2]{{%
\expandafter\ifx\csname l@#1\endcsname\relax
\typeout{** WARNING: IEEEtran.bst: No hyphenation pattern has been}%
\typeout{** loaded for the language `#1'. Using the pattern for}%
\typeout{** the default language instead.}%
\else
\language=\csname l@#1\endcsname
\fi
#2}}
\providecommand{\BIBdecl}{\relax}
\BIBdecl

\bibitem{qcd_lossy_link_conference}
K.~Krishna~Chaythanya, A.~Chattopadhyay, A.~Kumar, and R.~Sundaresan, ``Quickest change point detection with measurements over a lossy link,'' in \emph{2023 62nd IEEE Conference on Decision and Control (CDC)}, 2023, pp. 4843--4848.

\bibitem{raiReviewSignalProcessing2016}
A.~Rai and S.~Upadhyay, ``A review on signal processing techniques utilized in the fault diagnosis of rolling element bearings,'' \emph{Tribology International}, vol.~96, pp. 289--306, Apr. 2016.

\bibitem{shiryaev1963}
\BIBentryALTinterwordspacing
A.~N. Shiryaev, ``On optimum methods in quickest detection problems,'' \emph{Theory of Probability and Its Applications}, vol.~8, no.~1, p. 22–46, Jan 1963. [Online]. Available: \url{http://dx.doi.org/10.1137/1108002}
\BIBentrySTDinterwordspacing

\bibitem{lordenProceduresReactingChange1971}
G.~Lorden, ``Procedures for reacting to a change in distribution,'' \emph{The Annals of Mathematical Statistics}, vol.~42, no.~6, pp. 1897--1908, Dec. 1971.

\bibitem{pollakOptimalDetectionChange1985}
M.~Pollak, ``Optimal {{Detection}} of a {{Change}} in {{Distribution}},'' \emph{The Annals of Statistics}, vol.~13, no.~1, pp. 206--227, 1985.

\bibitem{laiInformationBoundsQuick1998}
T.~L. Lai, ``Information bounds and quick detection of parameter changes in stochastic systems,'' \emph{IEEE Transactions on Information Theory}, vol.~44, no.~7, pp. 2917--2929, Nov. 1998.

\bibitem{moustakidesOptimalStoppingTimes1986}
G.~V. Moustakides, ``Optimal stopping times for detecting changes in distributions,'' \emph{The Annals of Statistics}, vol.~14, no.~4, pp. 1379--1387, Dec. 1986.

\bibitem{veeravallietal2021}
\BIBentryALTinterwordspacing
L.~Xie, S.~Zou, Y.~Xie, and V.~V. Veeravalli, ``Sequential (quickest) change detection: Classical results and new directions,'' \emph{IEEE Journal on Selected Areas in Information Theory}, vol.~2, no.~2, p. 494–514, Jun 2021. [Online]. Available: \url{http://dx.doi.org/10.1109/JSAIT.2021.3072962}
\BIBentrySTDinterwordspacing

\bibitem{tartakovsky2014sequential}
A.~Tartakovsky, I.~Nikiforov, and M.~Basseville, \emph{Sequential analysis: Hypothesis testing and changepoint detection}.\hskip 1em plus 0.5em minus 0.4em\relax CRC Press, 2014.

\bibitem{tartakovsky2019sequential}
A.~Tartakovsky, \emph{Sequential change detection and hypothesis testing: General non-iid stochastic models and asymptotically optimal rules}.\hskip 1em plus 0.5em minus 0.4em\relax CRC Press, 2019.

\bibitem{yakir-cpd-markov-finite-ss}
\BIBentryALTinterwordspacing
B.~Yakir, ``Optimal detection of a change in distribution when the observations form a {M}arkov chain with a finite state space,'' \emph{Lecture Notes-Monograph Series}, vol.~23, pp. 346--358, 1994. [Online]. Available: \url{http://www.jstor.org/stable/4355784}
\BIBentrySTDinterwordspacing

\bibitem{fuhSPRTCUSUMHidden2003}
C.-D. Fuh, ``{{SPRT}} and {{CUSUM}} in hidden {{Markov}} models,'' \emph{The Annals of Statistics}, vol.~31, no.~3, pp. 942--977, Jun. 2003.

\bibitem{moustakidesMarkovComments2010}
G.~V. Moustakides and A.~G. Tartakovsky, ``A note on “{The Optimal Stopping Time for Detecting Changes in Discrete Time Markov Processes” by Han and Tsung},'' \emph{Sequential Analysis}, vol.~29, no.~4, pp. 483--486, 2010.

\bibitem{moustakides2015}
G.~V. Moustakides, ``Optimum {S}hewhart tests for {M}arkovian data,'' \emph{2015 53rd Annual Allerton Conference on Communication, Control, and Computing (Allerton)}, Sep 2015.

\bibitem{karumbuDelayOptimalEvent2012}
P.~Karumbu, V.~K. Prasanthi, and A.~Kumar, ``Delay optimal event detection on ad hoc wireless sensor networks,'' \emph{ACM Transactions on Sensor Networks}, vol.~8, no.~2, pp. 12:1--12:35, Mar. 2012.

\bibitem{banerjee2015data}
T.~Banerjee and V.~V. Veeravalli, ``Data-efficient minimax quickest change detection in a decentralized system,'' \emph{Sequential Analysis}, vol.~34, no.~2, pp. 148--170, 2015.

\bibitem{premkumar2008optimal}
K.~Premkumar and A.~Kumar, ``Optimal {{Sleep-Wake Scheduling}} for {{Quickest Intrusion Detection Using Wireless Sensor Networks}},'' in \emph{{{IEEE INFOCOM}} 2008 - {{The}} 27th {{Conference}} on {{Computer Communications}}}, 2008, pp. 1400--1408.

\bibitem{xu2021optimum}
Q.~Xu, Y.~Mei, and G.~V. Moustakides, ``Optimum multi-stream sequential change-point detection with sampling control,'' \emph{IEEE Transactions on Information Theory}, vol.~67, no.~11, pp. 7627--7636, 2021.

\bibitem{liuAdaptiveSamplingStrategy2015}
K.~Liu, Y.~Mei, and J.~Shi, ``An adaptive sampling strategy for online high-dimensional process monitoring,'' \emph{Technometrics}, vol.~57, no.~3, pp. 305--319, Jul. 2015.

\bibitem{xuSecondOrderAsymptoticallyOptimal2020}
Q.~Xu, Y.~Mei, and G.~V. Moustakides, ``Second-order asymptotically optimal change-point detection algorithm with sampling control,'' in \emph{2020 {{IEEE International Symposium}} on {{Information Theory}} ({{ISIT}})}, Jun. 2020, pp. 1136--1140.

\bibitem{ren2016quickest}
X.~Ren, K.~H. Johansson, and L.~Shi, ``Quickest change detection with observation scheduling,'' \emph{IEEE Transactions on Automatic Control}, vol.~62, no.~6, pp. 2635--2647, 2017.

\bibitem{pageContinuousInspectionSchemes1954}
E.~S. Page, ``Continuous inspection schemes,'' \emph{Biometrika}, vol.~41, no. 1/2, pp. 100--115, 1954.

\bibitem{durrettProbabilityTheoryExamples2019}
R.~Durrett, \emph{Probability: Theory and Examples}, 5th~ed., ser. Cambridge Series in Statistical and Probabilistic Mathematics.\hskip 1em plus 0.5em minus 0.4em\relax {Cambridge ; New York, NY}: {Cambridge University Press}, 2019, no.~49.

\bibitem{kleinrockPart1}
L.~Kleinrock, \emph{Theory, Volume 1, Queueing Systems}.\hskip 1em plus 0.5em minus 0.4em\relax USA: Wiley-Interscience, 1975.

\bibitem{tartakovskySequentialAnalysisHypothesis2015}
A.~Tartakovsky, I.~V. Nikiforov, and M.~Basseville, \emph{Sequential Analysis: Hypothesis Testing and Changepoint Detection}, ser. Monographs on Statistics and Applied Probability.\hskip 1em plus 0.5em minus 0.4em\relax {Boca Raton, Fla.}: {CRC Press}, 2015, no. 136.

\bibitem{shakedStochasticOrders2007}
M.~Shaked and J.~G. Shanthikumar, \emph{Stochastic Orders}, ser. Springer Series in Statistics.\hskip 1em plus 0.5em minus 0.4em\relax {New York}: {Springer}, 2007.

\bibitem{medhi2002stochastic}
J.~Medhi, \emph{Stochastic Models in Queueing Theory}.\hskip 1em plus 0.5em minus 0.4em\relax Elsevier, 2002.

\bibitem{bertsekas2012dynamic}
D.~Bertsekas, \emph{Dynamic Programming and Optimal Control: Volume I}.\hskip 1em plus 0.5em minus 0.4em\relax Athena scientific, 2012, vol.~4.

\end{thebibliography}

\appendices 
\section{Proof of \cref{lem:log-likelihood-computation}}\label{appendix:Proof of lem:log-likelihood-computation}
We write
\begin{eqnarray}\label{eq:log-likelihood-lem1-first-step}
    \ell_n &=& \sum_{i=1}^n \log
\frac{\mathbb{P}_{i, \nu}\left( Y_i \mid Q_1, Y_1^{i-1} \right)}{\mathbb{P}_{i,
\infty}\left( Y_i \mid Q_1, Y_1^{i-1} \right)} \nonumber\\
&\quad+&  \sum_{i=1}^n \log
\frac{\mathbb{P}_{i, \nu}\left( Z_i \mid Q_1, Y_1^i, Z_1^{i-1} \right)}{\mathbb{P}_{i,
\infty}\left( Z_i \mid Q_1, Y_1^i, Z_1^{i-1} \right)},
\end{eqnarray}
where $\mathbb{P}_{i, \nu}$ and $\mathbb{P}_{i, \infty}$ are the
probability distributions at time $i$, given that the change occurs at  a finite $\nu$ and  
$\nu = \infty$ (no change), respectively.

In the first summand, we drop the
dependence of $Y_i$ on $Z_1^i$ since $Y_i\independent Z_i$ given $\left(Q_1,Y_1^{i-1}\right)$.  Next, for $1 \leq i \leq n$, we write
\small
\begin{align*}
  P\left( Y_i \mid Q_1, Y_1^{i-1} \right) &= P\left( Y_i \mid Q_i > 0 \right) P\left( Q_i > 0 \mid Q_1, Y_1^{i-1} \right) \\
  &\quad+ P\left( Y_i \mid Q_i = 0 \right) P\left( Q_i = 0 \mid Q_1, Y_1^{i-1} \right).
\end{align*}
\normalsize
Next, we make the following observations:
\begin{itemize}
\item Under both the probability laws $\mathbb{P}_\nu$ and $\mathbb{P}_\infty$, for $1\leq i \leq n$, we have
  \begin{displaymath}
    P\left( Y_i = 1 \mid Q_i = 0 \right) = P\left( Y_i = 0 \mid Q_i = 0 \right) = 0,
  \end{displaymath}
  since $Y_i = \emptyset$ w.p.~1 whenever the transmit queue is empty ($Q_i = 0$).
\item For each $1\leq i \leq n$, the probability
  $P\left( Q_i \mid Q_1, Y_1^{i} \right)$ is equal under both $\mathbb{P}_\nu$
  and $\mathbb{P}_\infty$ because the sampling process (by which the arrivals
  enter the transmitter queue) is independent of the change point
  $\nu$.
\end{itemize}
Therefore, for $1 \leq i \leq n$, we have:
\begin{align*} \log{\frac{\mathbb{P}_{i, \nu}\left( Y_i \mid Q_1, Y_1^{i-1} \right)}{\mathbb{P}_{i, \infty}\left( Y_i \mid Q_1, Y_1^{i-1} \right)}} =
  \begin{cases}
    \log{\frac{p_1}{p_0}}, &\text{if }Y_i = 1,\\
    \log{\frac{1-p_1}{1-p_0}}, &\text{if }Y_i = 0,\\
    0, &\text{if }Y_i = \emptyset.
  \end{cases}
\end{align*}
That is, $\log{\frac{\mathbb{P}_{i, \nu}\left( Y_i \mid Q_1, Y_1^{i-1} \right)}{\mathbb{P}_{i, \infty}\left( Y_i \mid Q_1, Y_1^{i-1} \right)}} = \indicator{ Q_i > 0
}\log\frac{\mathbb{P}_0\left( Y_i \mid Q_i > 0 \right)}{\mathbb{P}_\infty\left( Y_i
    \mid Q_i > 0 \right)}$, so that we can simplify the first summand in eq. (\ref{eq:log-likelihood-lem1-first-step}) to:
    \small
\begin{align*}
  \sum_{i=1}^n \log
  \frac{\mathbb{P}_{i, \nu}\left( Y_i \mid Q_1, Y_1^{i-1} \right)}{\mathbb{P}_{i,
      \infty}\left( Y_i \mid Q_1, Y_1^{i-1} \right)} = \sum_{i = \nu + 1}^n \indicator{ Q_i > 0
  }\log\frac{\mathbb{P}_0\left( Y_i \mid Q_i > 0 \right)}{\mathbb{P}_\infty\left( Y_i
      \mid Q_i > 0 \right)}.
\end{align*}
\normalsize
To simplify the second term in eq. (\ref{eq:log-likelihood-lem1-first-step}), note that $D_i$ can be determined using $\left( Q_1, Y_1^{i}\right)$, and that $Z_i$ depends only on $\left(Y_i, D_i\right)$. Hence, \(P\left( Z_i \mid Q_1, Y_1^{i}, Z_1^{i-1} \right) = P\left( Z_i \mid D_i, Y_i\right)\) under both the probability laws. Further, on the set where \(\left\{ Y_i = 0 \right\}\), we have \(Z_i = \emptyset\) with probability $1$ under both the probability laws. Also, on the sample paths where \(\left\{ Y_i = 1 \right\}\), we have \(Z_i = X_{D_i}\). Hence, we write
\small
\begin{align*}
 \log \frac{\mathbb{P}_{i,\nu}\left( Z_i \mid Q_1, Y_1^{i}, Z_1^{i-1} \right)}{\mathbb{P}_{i,\infty}\left( Z_i \mid Q_1, Y_1^{i}, Z_1^{i-1} \right)} &= \indicator{ Y_i = 1 }\log \frac{\mathbb{P}_{i, \nu}\left(Z_i \mid D_i, Y_i = 1 \right)}{\mathbb{P}_{i, \infty}\left( Z_i \mid D_i, Y_i=1 \right)} \\
 &= \indicator{Y_i = 1, D_i \geq s_\nu + Q_1}\log\frac{f_1\left( X_{D_i} \right)}{f_0\left( X_{D_i} \right)}.
\end{align*}
\normalsize
Thus, the second term in eq. (\ref{eq:log-likelihood-lem1-first-step}) can be written as
\begin{align*}
&\sum_{i=1}^n \log \frac{\mathbb{P}_{i, \nu}\left( Z_i \mid Q_1, Y_1^i, Z_1^{i-1} \right)}{\mathbb{P}_{i,\infty}\left( Z_i \mid Q_1, Y_1^i, Z_1^{i-1} \right)}\\
&= \sum_{i=\nu+1}^{n} \indicator{ Y_i = 1, D_i \geq s_{\nu} + Q_1 } \log \frac{f_1\left( X_{D_i} \right)}{f_0\left( X_{D_i} \right)}.
\end{align*}

\section{Proof of \cref{thm:lower-bound-cusum}}\label{appendix:Proof of thm:lower-bound-cusum}
We will use the following lemma to prove the lower bound on the
$\text{ESADD}$ (\cref{thm:lower-bound-cusum}).
\begin{lemma}
  \label{lem:asconvergence-implies-max-convergence}
Given $\left\{R_k: k\geq1\right\}$, a sequence of i.i.d.\ random variables with $\expect{R_1} < \infty$, if $\frac{1}{n}
\sum_{i=1}^{n}R_i \xrightarrow{a.s.} 0$, then
\begin{displaymath}
\lim_{n\to\infty}P\bigg( \max_{k \leq
n}\frac{1}{n} \sum_{i=1}^{k} R_i \geq \delta \bigg) = 0, \forall
\delta > 0.
 \end{displaymath}
\end{lemma}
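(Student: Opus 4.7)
The plan is to establish the stronger statement that $M_n/n \to 0$ almost surely, where $S_k = \sum_{i=1}^k R_i$ and $M_n = \max_{1\leq k\leq n} S_k$; convergence in probability, and therefore the stated conclusion, then follows immediately.

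First, I would record the consequence of the hypothesis: since the $R_i$ are i.i.d., integrable, and $S_n/n \xrightarrow{a.s.} 0$, the Kolmogorov strong law forces $\mathbb{E}[R_1] = 0$. Then for an arbitrary but fixed $\epsilon > 0$, by almost sure convergence of $S_n/n$ to $0$, there exists an a.s.\ finite random index $K(\omega)$ with the property that
\begin{equation*}
S_k(\omega) \leq \epsilon\, k \qquad \text{for every } k \geq K(\omega).
\end{equation*}

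Second, I would split the maximum at the random index $K(\omega)$. For $n \geq K(\omega)$, write $M_n = \max(A(\omega), B_n(\omega))$, where
\begin{equation*}
A(\omega) = \max_{1 \leq k < K(\omega)} S_k(\omega), \qquad B_n(\omega) = \max_{K(\omega) \leq k \leq n} S_k(\omega).
\end{equation*}
The term $A(\omega)$ is a finite random variable not depending on $n$, so $A(\omega)/n \to 0$ as $n \to \infty$. The term $B_n(\omega)$ is controlled by the tail bound above: $B_n(\omega) \leq \epsilon n$, hence $B_n(\omega)/n \leq \epsilon$. Combining,
\begin{equation*}
\limsup_{n \to \infty} \frac{M_n(\omega)}{n} \leq \epsilon \qquad \text{a.s.}
\end{equation*}
Since $\epsilon > 0$ was arbitrary, $M_n/n \to 0$ almost surely, and therefore also in probability, which yields
\begin{equation*}
\lim_{n \to \infty} P\!\left(\max_{k \leq n} \tfrac{1}{n} S_k \geq \delta\right) = 0 \qquad \text{for every } \delta > 0.
\end{equation*}

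There is no real obstacle here: the result is a routine consequence of the strong law. The only point that requires a little care is the splitting at the random threshold $K(\omega)$, which is legitimate because the argument is pathwise and $K$ is a.s.\ finite. One could instead attempt a purely in-probability argument using Kolmogorov's maximal inequality applied to the tail sum $\sum_{i > \epsilon n} R_i$, but the pathwise splitting above is more transparent and avoids any moment assumption beyond integrability.
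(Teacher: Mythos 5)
Your proof is correct, and it takes a genuinely different (and in my view cleaner) route than the paper's. The paper defines the random index $K_n = \argmax_{k\leq n}\frac{1}{n}\sum_{i=1}^k R_i$, observes that on the event $\left\{\max_{k\leq n}\frac{1}{n}\sum_{i=1}^k R_i \geq \delta\right\}$ one has $\frac{1}{n}\sum_{i=1}^{K_n}R_i \leq \frac{1}{K_n}\sum_{i=1}^{K_n}R_i$ (since $K_n\leq n$ and the sum is positive there), and then invokes convergence in probability of $\frac{1}{m}\sum_{i=1}^m R_i$ evaluated at the random index $K_n$, asserting that $K_n\to\infty$. That last step is the delicate point of the paper's argument: pushing convergence in probability through a random index requires a uniformity that is asserted rather than spelled out. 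Your approach sidesteps this entirely by proving the stronger pathwise statement $\frac{1}{n}\max_{k\leq n}S_k \to 0$ a.s.: you fix $\epsilon>0$, split the maximum at an a.s.\ finite random threshold $K(\omega)$ beyond which $S_k\leq \epsilon k$, note that the head contributes a fixed finite random variable divided by $n$, and conclude $\limsup_n M_n/n \leq \epsilon$ for every $\epsilon$. This is fully rigorous and delivers almost sure convergence of the normalized running maximum, from which the in-probability statement of the lemma is immediate. Two minor remarks: the observation that $\mathbb{E}[R_1]=0$ is true but never used in your argument, and your closing aside about applying Kolmogorov's maximal inequality to ``the tail sum $\sum_{i>\epsilon n}R_i$'' is not a well-formed alternative as stated; neither affects the validity of the main argument.
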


\begin{proof}
 Define the random variable $K_n = \argmax_{k\leq n} \frac{1}{n} \sum_{i=1}^k
 R_i$. For each $\omega$ in the set
\(\left\{ \omega
: \max_{k \leq n}\frac{1}{n} \sum_{i=1}^{k}R_i \geq \delta \right\},\)
we have  $K_n(\omega) \leq n$, and $K_n(\omega) \to \infty$ as
$n\to\infty$. Thus, $\forall \delta > 0$,
\begin{displaymath}
  P\left( \frac{1}{n} \sum_{i=1}^{K_n} R_i \geq
\delta \right) \leq P\left( \frac{1}{K_n}\sum_{i=1}^{K_n}R_i \geq
\delta \right).
\end{displaymath}
The quantity on the right hand side of the above equation $P\left( \frac{1}{K_n}\sum_{i=1}^{K_n}R_i \geq
\delta \right)$ goes to zero as $n \to \infty$
since $\frac{1}{n}\sum_{i=1}^n R_i \xrightarrow[]{a.s.} 0$ implies in probability convergence. This concludes the proof.
\end{proof}

Now to prove \cref{thm:lower-bound-cusum}, we make use of Lai's \cite[Theorem 1]{laiInformationBoundsQuick1998} lower bound on the asymptotic $\text{ESADD}$. Following the discussion in \cite[Sec.~IV]{laiInformationBoundsQuick1998}, we only need to show
\begin{equation} \label{eq:lower-bound-lai-sufficient-condition}
    \lim_{n\to\infty} \sup_{x} \mathbb{P}_1\left\{
      \max_{t\leq n} \sum_{i=1}^{t} L_i \geq I\left( 1 + \delta \right)n
      \mid \zeta_0 = x\right\} = 0,
  \end{equation}
for each $\delta > 0$. From \cref{eq:ergodic-limit-Li}, we obtain $\frac{1}{n} \sum_{i=1}^{n}L_i \xrightarrow{a.s.} I$. For a fixed $x$, set $\lambda_i = L_i - I$, and apply \cref{lem:asconvergence-implies-max-convergence} to show that 
\begin{displaymath}
   \forall x, \forall \delta > 0, \mathbb{P}_1\left\{ \max_{t\leq n} \frac{1}{n} \sum_{i=1}^{t}\lambda_i >
    \delta \mid \zeta_0 = x \right\} \to 0.
\end{displaymath}
To show that \cref{eq:lower-bound-lai-sufficient-condition} is true, it is sufficient to show that \(\sup_{\zeta_0 = x} \frac{1}{n} \sum_{i=1}^n L_i \xrightarrow{a.s.} I\). We note that since $\left\{ \zeta_k \right\}$ is a Markov process, it is sufficient  to only show that \(\sup_{\zeta_0 = x} L_1 < \infty\). This can be easily observed by noting that the first term in $L_1$ (see \cref{eq:define-Li}) is bounded whenever \(0 < p_0,p_1 < \infty\), and the second term in $L_1$ depends on $\zeta_0$ only through indicator functions. This proves the assertion that \(\sup_{\zeta_0 = x}\frac{1}{n} \sum_{i=1}^{n}L_i \xrightarrow{a.s.} I\), and hence the theorem.

\section{Proof of \cref{thm:upper-bound-cusum}}\label{appendix:Proof of thm:upper-bound-cusum}
To prove this claim, we make use of Lai's \cite[Theorem. 4]{laiInformationBoundsQuick1998} upper bound on the asymptotic ESADD, for a \CUSUM{} detector, with threshold $h$. As before, following the discussion in \cite[Sec.~IV]{laiInformationBoundsQuick1998}, we need to show that
\begin{equation}\label{eq:upper-bound-lai-sufficient-condition}
    \lim_{n\to\infty} \sup_{x} \mathbb{P}_1\left\{ \sum_{i=1}^{n}L_i < \left( I-\delta \right) n \mid \zeta_0 = x \right\} =0,
\end{equation}
For each $x$, the limit
\begin{displaymath}
      \forall \delta > 0, \mathbb{P}_1\left\{  \frac{1}{n} \sum_{i=1}^{n}L_i >
        \delta \mid \zeta_0 = x \right\} \to 0
    \end{displaymath}
    is true since we have from \cref{eq:ergodic-limit-Li} that
    $\frac{1}{n}\sum_{i=1}^n L_i \xrightarrow{a.s.} I$ and almost sure convergence
    implies in probability convergence of $\frac{1}{n}\sum_{i=1}^n L_i$ to $I$.   
  To prove 
  \cref{eq:upper-bound-lai-sufficient-condition}, we follow 
  a similar approach as in the proof of \cref{thm:lower-bound-cusum}.

\section{Proof of \cref{lem:non-fifo-log-likelihood-ratio}}\label{appendix:Proof of lem:non-fifo-log-likelihood-ratio}
We write $\ell_{\nu, n}$ as
  \begin{displaymath}
    \ell_{\nu, n} = \log\frac{\mathbb{P}_\nu\left( Y_1^n \mid Q_1 \right)}{\mathbb{P}_\infty\left( Y_1^n \mid Q_1 \right)} + \log\frac{\mathbb{P}_\nu\left( J_1^n, Z_1^n \mid Q_1, Y_1^n \right)}{\mathbb{P}_\infty\left( J_1^n, Z_1^n \mid Q_1, Y_1^n \right)}.
  \end{displaymath}
Note that the second term in the above equation can be written as
\begin{align*}
    &\log\frac{\mathbb{P}_\nu\left( J_1^n, Z_1^n \mid Q_1, Y_1^n \right)}{\mathbb{P}_\infty\left( J_1^n, Z_1^n \mid Q_1, Y_1^n \right)} \\
    &= \sum_{k=1}^n\log\frac{\mathbb{P}_\nu\left( J_k, Z_k \mid Q_1, Y_1^k, Z_1^{k-1}, J_1^{k-1} \right)}{\mathbb{P}_\infty\left( J_k, Z_k \mid Q_1, Y_1^k, Z_1^{k-1}, J_1^{k-1} \right)},
\end{align*}
where we use the fact that \(J_k, Z_k \independent Y_{k+1}^n\). Next, simplify the conditional joint probability \(P\left( J_k, Z_k \mid Q_1, Y_1^k, Z_1^{k-1}, J_1^{k-1} \right) \), under both the probability laws, as
\begin{align*}
    &P\left( J_k, Z_k \mid Q_1, Y_1^k, Z_1^{k-1}, J_1^{k-1} \right) \\
    &= P\left( J_k \mid Y_1^k, Z_1^{k-1}, J_1^{k-1} \right) P\left( Z_k \mid Y_1^k, J_1^k, Z_1^{k-1} \right) \\
    &= P\left( J_k \mid Y_1^k, J_1^{k-1} \right)P\left( Z_k \mid Y_1^k, J_1^k, Z_1^{k-1} \right),
\end{align*}
where we note that $J_k \independent Z_1^{k-1}$ given $Y_1^k$. Further, the conditional probability $P\left( J_k \mid Y_1^k, J_1^{k-1} \right)$ is the same for $k = 1, \ldots, n$ under both the probability laws $\mathbb{P}_\nu$ and $\mathbb{P}_\infty$. Hence, we write
\begin{displaymath}
    \log\frac{\mathbb{P}_\nu\left( J_1^n, Z_1^n \mid Q_1, Y_1^n \right)}{\mathbb{P}_\infty\left( J_1^n, Z_1^n \mid Q_1, Y_1^n \right)} =
    \log\frac{\mathbb{P}_\nu\left( Z_1^n \mid Q_1, Y_1^n, J_1^n \right)}{\mathbb{P}_\infty\left( Z_1^n \mid Q_1, Y_1^n, J_1^n \right)}.
\end{displaymath}
The right hand side of the above equation simplifies to the desired form following the steps in the second part of the proof of Lemma~\ref{lem:log-likelihood-computation}.

\section{Proof of \cref{prop:non-fcfs-cusums-lr-ordered}}\label{appendix:Proof of prop:non-fcfs-cusums-lr-ordered}

Under the hypothesis that $f_0 \lrleq f_1$, the log-likelihood ratio $\log\frac{f_1\left(\cdot\right)}{f_0\left(\cdot\right)}$ is a monotone function.  The proof uses Lemma~7 and the property that monotone functions of LR ordered random variables are LR ordered \cite[Thm.~1.C.8]{shakedStochasticOrders2007}. We will first prove the first inequality that $C_m^{\left( F \right)} \leq_{st} C_m^{\left( A \right)}$. 

From \cite[Thm.~1.C.8]{shakedStochasticOrders2007}, we have, for each $j, n \in \mathbb{N}, j\leq n$,
\begin{displaymath}
  \log\frac{f_1\left(X_j\right)}{f_0\left(X_j\right)} \lrleq \log\frac{f_1\left(X_{k_n^{\left(j\right)}}\right)}{f_0\left(X_{k_n^{\left(j\right)}}\right)},
\end{displaymath}
where $\left\{k_n^{\left(j\right)} : j\leq n\right\}$ are the sampling indices of the received measurements at the DM after $n$ samples that are rearranged in the ascending order. This implies ordering in the stochastic order sense. Given pairs of stochastically ordered random variables, here $(j, k^{(j)}_n)$, $1 \leq j \leq n$, using the property that the stochastic order is closed under convolutions \cite[Thm.~1.A.3.(b)]{shakedStochasticOrders2007}, we have, for each $i \leq n$,
\begin{displaymath}
  \sum_{j = i}^n \log\frac{f_1\left(X_j\right)}{f_0\left(X_j\right)} \leq_{st} \sum_{j = i}^n \log\frac{f_1\left(X_{k_n^{\left(j\right)}}\right)}{f_0\left(X_{k_n^{\left(j\right)}}\right)}.
\end{displaymath}
Thus, for each slot $m\in\N$, we have
\begin{displaymath}
  \sum_{j = i}^{D_m} \log\frac{f_1\left(X_j\right)}{f_0\left(X_j\right)} \leq_{st} \sum_{j = i}^{D_m} \log\frac{f_1\left(X_{k_{D_m}^{\left(j\right)}}\right)}{f_0\left(X_{k_{D_m}^{\left(j\right)}}\right)}.
\end{displaymath}
Suppose that \(s_i = \min\left\{ m' \in\N: D_{m'} = i \right\}\), then for each sample \(i \leq n\) and each slot index \(s \in \left\{ s_i, s_i+1, \ldots, s_{i+1}-1 \right\}\) between the $i$th and $i+1$\textsuperscript{st} samples, we have
\begin{align*}
  \sum_{b = s}^m\indicator{Q_b > 0}\log\frac{\mathbb{P}_s\left( Y_b \mid Q_b > 0 \right)}{\mathbb{P}_\infty\left( Y_b \mid Q_b > 0 \right)} + \sum_{j = i}^{D_m} \log\frac{f_1\left(X_j\right)}{f_0\left(X_j\right)} \leq_{st} \\  
  \sum_{b = s}^m\indicator{Q_b > 0}\log\frac{\mathbb{P}_s\left( Y_b \mid Q_b > 0 \right)}{\mathbb{P}_\infty\left( Y_b \mid Q_b > 0 \right)} + \sum_{j = i}^{D_m} \log\frac{f_1\left(X_{k_{D_m}^{\left(j\right)}}\right)}{f_0\left(X_{k_{D_m}^{\left(j\right)}}\right)}.
\end{align*}
Note that
\small
\begin{align*}
  \ell^{\left( F \right)}_{s, m} &= \sum_{b = s}^m\indicator{Q_b > 0}\log\frac{\mathbb{P}_s\left( Y_b \mid Q_b > 0 \right)}{\mathbb{P}_\infty\left( Y_b \mid Q_b > 0 \right)} + \sum_{j = i}^{D_m} \log\frac{f_1\left(X_j\right)}{f_0\left(X_j\right)}, \\
  \ell^{\left( A,m \right)}_{s, m} &= \sum_{b = s}^n\indicator{Q_b > 0}\log\frac{\mathbb{P}_s\left( Y_b \mid Q_b > 0 \right)}{\mathbb{P}_\infty\left( Y_b \mid Q_b > 0 \right)} + \sum_{j = i}^{D_m} \log\frac{f_1\left(X_{k_{D_m}^{\left(j\right)}}\right)}{f_0\left(X_{k_{D_m}^{\left(j\right)}}\right)},
\end{align*}
\normalsize
where $\ell^{\left( F \right)}_{s, m}$ and $\ell^{\left( A,m \right)}_{s, m}$ denote the log-likelihood ratio computed at the end of the $m$\textsuperscript{th} slot given that the change point is $s$, when the FCFS queuing discipline and an alternate queuing discipline are used in the transmit queue respectively. That is, we have $\ell^{\left( F \right)}_{s, m} \leq_{st} \ell^{\left( A,m \right)}_{s, m}$ for each $s \leq m$ and for each $m\in\N$. $C_m^{(F)} = \max_{1 \leq s \leq m} l_{s,m}^{(F)}$, and similarly for $C_m^{(A)}$. However, the max is made of dependent random variables. To show $C_m^{(F)} \leq_{st} C_m^{(A)}$, we now argue via induction.

Fix $m$. Observe that $l^{(F)}_{s,m} = \sum_{b=s}^m l^{(F)}_b$, where 
$$
l^{(F)}_b = \indicator{Q_b > 0}\log\frac{\mathbb{P}_s\left( Y_b \mid Q_b > 0 \right)}{\mathbb{P}_\infty\left( Y_b \mid Q_b > 0 \right)} + \indicator{b=s_j}  \log\frac{f_1\left(X_j\right)}{f_0\left(X_j\right)}.
$$
Similarly, $l^{(A,m)}_{s,m} = \sum_{b=s}^m l^{(A,m)}_b$, where 
$$
l^{(A,m)}_b = \indicator{Q_b > 0}\log\frac{\mathbb{P}_s\left( Y_b \mid Q_b > 0 \right)}{\mathbb{P}_\infty\left( Y_b \mid Q_b > 0 \right)} + \indicator{b=s_j}  \log\frac{f_1\left(X_{k^{(j)}_{D_m}}\right)}{f_0\left(X_{k^{(j)}_{D_m}}\right)}.
$$
We then have $l_b^{(F)} \leq_{st} l_b^{(A,m)}$, for each $1 \leq b \leq m$, because convolution preserves stochastic orders.

Consider the CUSUM $\tilde{C}_{b}^{(A,m)} = \max\{ \tilde{C}_{b-1}^{(A,m)} + l_b^{(A,m)}, 0\}$, $1 \leq b \leq m$, with initialization $\tilde{C}_0^{(A,m)} = 0$; then $C^{(A)}_m = \tilde{C}_m^{(A,m)}$, i.e., the statistic is the CUSUM on the reordered samples. Observe that on alternative service disciplines, the reordering and the entire CUSUM chain of computation must be done every time a sample is received. For the FCFS service discipline, of course, the usual CUSUM applies, and  $C_b^{(F)} = \max\{C_{b-1}^{(F)} + l_b^{(F)}, 0\}$, $1 \leq b \leq m$,  with $C_0^{(F)} = 0$.

We now induct on $b$ from $0$ to $m$. Clearly, $C_b^{(F)} \leq_{st} \tilde{C}_{b}^{(A,m)}$ for $b=0$ because both are identically 0. Assume its validity for $b<m$. Now $l_{b+1}^{(F)} \leq_{st} l_{b+1}^{(A,m)}$. Further, $l_{b+1}^{(F)}$ is independent of $C_b^{(F)}$ and similarly $l_{b+1}^{(A,m)}$ is independent of $\tilde{C}_b^{(A,m)}$. Stochastic order is closed under convolution \cite[Thm.~1.A.3.(b)]{shakedStochasticOrders2007}, and so $C_b^{(F)} + l_{b+1}^{(F)} \leq_{st} \tilde{C}_b^{(A,m)} + l_{b+1}^{(A,m)}$. Stochastic order is closed under the monotone operation $\max\{\cdot,0\}$, and so $C_{b+1}^{(F)} \leq_{st} \tilde{C}_{b+1}^{(A,m)}$. This establishes that $C_m^{(F)} \leq_{st} \tilde{C}_m^{(A,m)} = C_m^{(A)}$.

The proof for the second inequality is similar. Hence, we have for each $m\in\N$, $C_m^{\left( F \right)} \leq_{st} C_m^{\left( A \right)} \leq_{st} C_m^{\left( L \right)}.$

\section{Discussion related to \cref{thm:cusums-are-st-ordered}}\label{appendix:disc of thm:cusums-are-st-ordered}

Recall the discussion in
Section~IV.A that the threshold $h$ for a target $\text{ARL2FA} = \,\gamma$ is the same for any transmit queue discipline used at the transmitter. Fix this $h$ at $h_{\gamma}$.

Observe that $\{N^{(F)}(h) > m\} = \{C_b^{(F)} \leq h, b = 1, \ldots, m\}.$ Since $C_b^{(F)} = \max_{1 \leq s \leq b} l_{s,b}^{(F)}$, where $l_{s,b}^{(F)}$ is the partial sum of the likelihoods from $s$ to $b$, we observe that
\begin{align}
  \{N^{(F)}(h) > m\} 
  & = \{C_b^{(F)} \leq h, b = 1, \ldots, m\} \nonumber \\
  & = \{ l_{s,b}^{(F)} \leq h, 1 \leq s \leq b \leq m\}.
\end{align}
Similarly, for an alternative service discipline, we get
\begin{align}
  \{N^{(A)}(h) > m\} = \{ l_{s,b}^{(A,b)} \leq h, 1 \leq s \leq b \leq m\}.
\end{align}
The above set equivalences and $l_{s,b}^{(F)} \leq_{LR} l_{s,b}^{(A,b)}$ for every $1 \leq s \leq b \leq m$ suggests that $\mathbb{P}(N^{(A)}(h) > m) \leq \mathbb{P}(N^{(F)}(h) > m)$ may hold, but a proof eludes us because of the dependencies in the random variables $\{ l_{s,b}^{(A,m)} \}_{1 \leq s \leq b \leq m}$ among themselves and in the random variables $\{ l_{s,b}^{(F)}\}_{1 \leq s \leq b \leq m}$ among themselves, which need proper handling.

\section{Proof of \cref{prop:MS-log-likelihood-computation}}\label{appendix:Proof of prop:MS-log-likelihood-computation}

We have, 
       \begin{eqnarray} \label{eq:app-MS-LLR}
        l_{\nu,n} &=& \log \frac{\mathbb{P}_{\nu}(U_1^n, Y_1^n,J_1^n,Z_1^n|\bm{Q}_1)}{\mathbb{P}_{\infty}(U_1^n, Y_1^n,J_1^n,Z_1^n|\bm{Q}_1)} \nonumber\\
        &=& \sum_{k=1}^{n} \log \frac{\mathbb{P}_{\nu}(Y_{k}| Y_1^{k-1}, \bm{Q}_1)}{\mathbb{P}_{\infty}(Y_k| Y_1^{k-1},\bm{Q}_1)} \nonumber\\
        \quad &+& \sum_{k=1}^{n} \log \frac{\mathbb{P}_{\nu}(U_k, J_k | U_1^{k-1}, Y_1^{k}, J_1^{k-1}, \bm{Q}_1)}{\mathbb{P}_{\infty}(U_k, J_k | U_1^{k-1}, Y_1^{k}, J_1^{k-1}, \bm{Q}_1)} \nonumber\\
        \quad &+& \sum_{k=1}^{n} \log \frac{\mathbb{P}_{\nu}(Z_k | U_1^{k}, Y_1^{k}, J_1^{k}, Z_1^{k-1},\bm{Q}_1)}{\mathbb{P}_{\infty}(Z_k | U_1^{k}, Y_1^{k}, J_1^{k}, Z_1^{k-1},\bm{Q}_1)} 
   \end{eqnarray}

The probability term in first summation can be written as
\begin{align*}
     \mathbb{P}(Y_k| Y_1^{k-1}, \bm{Q}_1) 
     = \mathbb{P}(Y_k | \|\bm{Q}_k\|_0 > 0 ) \mathbb{P}(\|\bm{Q}_k\|_0 > 0 \mid  Y_1^{k}, \bm{Q}_1) 
\end{align*}
The second equality is valid because $Y_k$ depends only on $\bm{Q}_k$.
The second product term in the last equality is same under both probability laws.
The term in first summation of \ref{eq:app-MS-LLR} then simplifies to,
\begin{align*}
    \log \frac{\mathbb{P}_{\nu}(Y_k| Y_1^{k-1}, \bm{Q}_1)}{\mathbb{P}_{\infty}(Y_k| Y_1^{k-1},\bm{Q}_1)} = \mathbbm{1}_{\{\|\bm{Q}_k\|_0 > 0\}} \log \frac{\mathbb{P}_{\nu}(Y_k| \|\bm{Q}_k\|_0 > 0)}{\mathbb{P}_{\infty}(Y_k| \|\bm{Q}_k\|_0 > 0)}
\end{align*}
Since $P(Y_k= \phi)=1$ on the set $\{\bm{Q}_k = \bm{0}\}$, hence the indicator function.

The sampling instant of sensor measurement $J_k$ is independent of the change point. The sensor selection process depends only on the queue states and the contention mechanism, and hence is also independent of the change point. Hence, $\mathbb{P}(J_{k}, Z_{k} | U_1^{k}, Y_1^{k}, J_1^{k-1},\bm{Q}_1)$ is identical under both $\mathbb{P}_{\infty}$ and $\mathbb{P}_{\nu}$, and the second term in \eqref{eq:app-MS-LLR} vanishes. The last term in
\eqref{eq:app-MS-LLR} can be written as, \small
\begin{align*}
    & \log \frac{\mathbb{P}_{\nu}(Z_k | U_1^{k}, Y_1^{k}, J_1^{k}, Z_1^{k-1},\bm{Q}_1)}{\mathbb{P}_{\infty}(Z_k | U_1^{k}, Y_1^{k}, J_1^{k}, Z_1^{k-1},\bm{Q}_1)} = \log \frac{\mathbb{P}_{\nu}(Z_k | U_{k}, Y_{k}, J_{k},\bm{Q}_1)}{\mathbb{P}_{\infty}(Z_k | U_{k}, Y_{k}, J_{k},\bm{Q}_1)} \\
    & \quad \quad \quad \quad \quad \quad \quad \quad = \mathbbm{1}_{\{Y_k=1, J_k > Q_{1, U_k} + S_{\nu, U_k}\}} \log \frac{f_1(X_{J_k,U_k})}{f_0(X_{J_k,U_k})} 
\end{align*}
\normalsize
If we already know $U_k,J_k$, we have identified the specific observation $Z_k$, therefore it is independent of past sensor indices and sampling indices.
Finally, we have,
\small
\begin{align*}
    l_{\nu,n} &= \sum_{k=1}^{n} \bigg( \mathbbm{1}_{\{\|\bm{Q}_k\|_0 > 0\}} \log \frac{\mathbb{P}_{\nu}(Y_k| \|\bm{Q}_k\|_0 > 0)}{\mathbb{P}_{\infty}(Y_k| \|\bm{Q}_k\|_0 > 0)} \\
    &\qquad \qquad \qquad \qquad +  \mathbbm{1}_{\{Y_k =1,  J_k > Q_{1, U_k} + S_{\nu, U_k}\}}\log \frac{f_1(X_{J_k,U_k})}{f_0(X_{J_k,U_k})} \bigg)
\end{align*}
\normalsize
The indicator in the second term ensures that only those observations are taken into account which occur after the change instant, $\nu$.

\section{Proof of \cref{prop:MS-I-non-homogeneous}}\label{appendix:Proof of prop:MS-I-non-homogeneous}

We have,
\footnotesize
\begin{align}\label{eq:MS-bar-I}
    \bar{I} &\coloneqq \displaystyle \lim_{n \to \infty} \frac{1}{n} l_{0,n} \nonumber\\
    &=\lim_{n \to \infty} \frac{1}{n}  \sum_{k=1}^n \mathbbm{1}_{\{Y_k=1 \}} \log \frac{f_{1,U_k}(X_{J_k,U_k})}{f_{0,U_k}(X_{J_k,U_k})} \nonumber \\
    & = \mathbb{E}_{\Pi_{\zeta}^{(p)}} \left[\mathbbm{1}_{\{Y=1\}} \log \frac{f_{1,U}(X_{J,U})}{f_{0,U}(X_{J,U})} \right] \nonumber\\
    &= \mathbb{P}(Y=1) \mathbb{E}_{\Pi_{\zeta}^{(p)}}\left[\log \frac{f_{1,U}(X_{J,U})}{f_{0,U}(X_{J,U})} \mid (Y=1) \right] \nonumber\\
    &= \mathbb{P}(Y=1) \times \nonumber\\
    &\quad\sum_{i=1}^{L} \mathbb{P}(U = i \mid Y=1) \mathbb{E}_{\Pi_{\zeta}^{(p)}} \left[\log \frac{f_{1,i}(X_{J,i})}{f_{0,i}(X_{J,i})} \mid (U=i, Y=1) \right] \nonumber\\
    & = \mathbb{P}(Y=1)  \sum_{i=1}^{L} \frac{r_i}{\sum_j r_j} I(f_{1,i},f_{0,i}),
\end{align}
\normalsize
where in steady state, the fraction of successful transmissions originating from sensor $i$ is proportional to its sampling rate, i.e., $\mathbb{P}(U = i \mid Y = 1) = \frac{r_i}{\sum_j r_j}$. This follows because, over the long run, each sensor $i$ contributes samples at rate $r_i$, and the DM receives $\sum_j r_j$ samples per unit time in total. 

Now, $\mathbb{P}(Y=1)$ represents the steady-state probability that a transmission is attempted and succeeds. To model this, consider a “super queue” whose arrival rate is $\sum_{i=1}^{L} r_i$ (the aggregate sampling rate across all sensors), and with Bernoulli$(p)$ departures. The stability condition yields that the probability the super queue is non-empty is $\frac{\sum_{i=1}^L r_i}{p}$, so the long-run probability of successful transmission is
\[
\mathbb{P}(Y=1) = p \cdot \frac{\sum_{i=1}^L r_i}{p} = \sum_{i=1}^L r_i.
\]
Substituting in \eqref{eq:MS-bar-I}, we obtain
\[
\bar{I} = \sum_{i=1}^{L} r_i I(f_{1,i},f_{0,i}),
\]
which proves the proposition.

\end{document}